\let\OLDitemize\itemize
\let\OLDenumerate\enumerate
\let\OLDdescription\description
\renewcommand{\itemize}[1][\relax]{\OLDitemize}
\renewcommand{\enumerate}[1][\relax]{\OLDenumerate}
\renewcommand{\description}[1][\relax]{\OLDdescription}
\newtheorem{proof}{Proof}
\newtheorem{lemma}{Lemma}
\newtheorem{theorem}{Theorem}
\newcommand{\setlinespacing}[1]%
           {\setlength{\baselineskip}{#1 \defbaselineskip}}
\newcommand{\singlespacing}{\setlength{\baselineskip}{\defbaselineskip}}
\newcommand{\circled}[2][]{%
        \tikz[baseline=(char.base)]{%
                \node[shape = circle, draw, inner sep = 1pt]
                (char) {\phantom{\ifblank{#1}{#2}{#1}}};%
                \node at (char.center) {\makebox[0pt][c]{#2}};}}
\newcommand{\dacircled}[2][]{%
        \tikz[baseline=(char.base)]{%
                \node[shape = circle, draw, dashed, inner sep = 1pt]
                (char) {\phantom{\ifblank{#1}{#2}{#1}}};%
                \node at (char.center) {\makebox[0pt][c]{#2}};}}
\renewcommand{\baselinestretch}{1}
\newlength{\defbaselineskip}
\titlespacing*{\chapter}{0pt}{-0.32in}{25pt}
\titleformat{\chapter}[hang]{\centering\normalsize\bfseries}{\chaptertitlename\ \thechapter}{1em}{} 
\titleformat{\section}{\normalsize\bfseries}{\thesection}{1em}{}
\titleformat{\subsection}{\normalsize\bfseries}{\thesubsection}{1em}{}
\titleformat{\subsubsection}{\normalsize\it}{\thesubsubsection}{1em}{}
\def\XXint#1#2#3{{\setbox0=\hbox{$#1{#2#3}{\int}$}
		\vcenter{\hbox{$#2#3$}}\kern-.5\wd0}}
\DeclarePairedDelimiter\floor{\lfloor}{\rfloor}
\begin{document}

	\pagenumbering{roman} \vspace{-0.2in}
	\thispagestyle{empty}
 \hbox{\vspace{0.5in}}
	\begin{center}

	\begin{singlespacing}
	% Insert the title.
		{\normalsize\bfseries Resource Allocation in Multigranular Optical Networks}\\[4\baselineskip]
 
	by Jingxin Wu \\[3\baselineskip]
	B.S. in School of Information Science and Engineering, July 2010, Harbin Institute of Technology \\
	M.S. in School of Information Science and Engineering, July 2012, Harbin Institute of Technology \\[2\baselineskip]
	A Dissertation submitted to \\[3\baselineskip] %
	The Faculty of \\
	The School of Engineering and Applied Science \\
	of The George Washington University \\
	in partial satisfaction of the requirements \\
	for the degree of Doctor of Philosophy \\[3\baselineskip]
	August 31, 2018 \\[4\baselineskip]
	Dissertation directed by \\[2ex]
	Suresh Subramaniam \\
	Professor of Engineering and Applied Science
	
\end{singlespacing}
\end{center}

	\newpage

\hbox{\ }
\vspace{-0.3in}
\renewcommand{\baselinestretch}{1.5}
\noindent
The School of Engineering and Applied Science of The George Washington University certifies that Jingxin Wu has passed the Final Examination for the degree of Doctor of Philosophy as of August 13, 2018. This is the final and approved form of the dissertation.\\
\normalsize

\begin{center} 
	\singlespacing 
	\ \\
	Resource Allocation in Multigranular Optical Networks \\[1\baselineskip]
	Jingxin Wu\\[3\baselineskip]
\end{center}

\singlespacing \noindent Dissertation Research Committee: \\

\singlespacing{
	%\indent \indent \indent Tian Lan, Associate Professor of Engineering and Applied Science, Dissertation Director\\
	\setlength{\leftskip}{0.5in}
	\noindent Suresh Subramaniam, Professor of Engineering and Applied Science, Dissertation Director\\
	
	\noindent Tian Lan, Associate Professor of Engineering and Applied Science, Committee Member\\
	
	\noindent Milo{\v{s}} Doroslova{\v{c}}ki, Associate Professor of Engineering and Applied Science, Committee Member\\
	
	\noindent Guru Venkataramani, Associate Professor of Engineering and Applied Science, Committee Member\\
	
	% University requires only three are allowed, SEAS can list up to six committee members
	% \hspace{0.5in} Milos Doroslovacki, Associate Professor of Engineering and Applied Science, Committee Member\\
	
	\noindent Hiroshi Hasegawa, Associate Professor of The Department of Electrical Engineering and Computer Science, Nagoya University, Committee Member\\
	
}

\newpage
	\addcontentsline{toc}{chapter}{Dedication}
\hbox{\ }
%\vspace{0.8in}
\renewcommand{\baselinestretch}{1.5}
\small\normalsize\begin{center}
	\vspace{-0.5in}
	\textbf{Dedication} \\[1\baselineskip]
	\normalsize
\end{center}

In memory of the departed loved ones.\\

To my family, Lab634 and SEH 5390. \\
	
\newpage
	\addcontentsline{toc}{chapter}{Acknowledgements}
\hbox{\ }
%\vspace{0.8in}
\renewcommand{\baselinestretch}{1.5}
\normalsize\begin{center}
	\vspace{-0.5in}
	\textbf{Acknowledgements}  \\[1\baselineskip]
	\normalsize
\end{center}

I would first like to thank my advisor, Professor Suresh Subramaniam, for his continuous support throughout the past six years. His expertise and immense knowledge on the networking area have always led me to the appropriate direction of my research work. His patience, motivation and advice have been essential for my research papers. He has also provided extensive professional and personal guidance and helped me a great deal. I feel privileged to have him as my advisor.

I wish to thank Professors Hiroshi Hasegawa, Tian Lan, Milos Doroslovacki and Guru Venkataramani for generously offering their time and guidance as my Ph.D. dissertation committee members. I would especially like to thank Professor Hasegawa and Professor Venkataramani, coauthors of my research papers, who are always willing to share their insightful comments.

I gratefully acknowledge the support of NSF (through grant CNS-1406971), which sponsored my research.

I sincerely thank my friends, Yu Xiang, Juzi Zhao, Yifan Chen, Fan Yao, Chong Liu, Yongbo Li, Maotong Xu, Hang Liu, Xiaolu Song, Ming Yao, Shijing Li, Jie Chen, Bingqian Lu, Sultan Alamro and all others, for their friendship and for supporting me through the hard times of my life and making my stay in DC wonderful.

Last, but not the least, I would like to thank my family to whom I owe a lot. Thanks to my beloved parents, grandmother for the continuous and selfless love. I am also grateful to my brother for the moral support and constant encouragement.
	
	\newpage
	
	\addcontentsline{toc}{chapter}{Abstract}
	%\thispagestyle{empty}
%\chapter*{\large  Abstract of Dissertation}

\hbox{\ }
%\vspace{0.8in}
\renewcommand{\baselinestretch}{1.5}
\normalsize\begin{center}
	\vspace{-0.435in}
	\textbf{Abstract}  \\[1\baselineskip]
	\normalsize
	\singlespacing Resource Allocation in Multigranular Optical Networks
	\ \\
\end{center}

\begin{center}
	\noindent\fbox{%
		\parbox{\linewidth}{%
			\textbf{Thesis Statement: Cost-effective switching and spectrum utilization efficiency have become critical design considerations in optical networks. This dissertation provides in-depth exploration of these important aspects, and proposes effective techniques for low-cost switching architectures and resource allocation algorithms to facilitate the adoption of optical networks in the near future.}
		}%
	}
\end{center}

The dramatic growth of Internet traffic brings challenges for optical network designers. The increasing traffic and bandwidth requirements mean that various resource allocation schemes to achieve different network design goals assume great importance. The general problem of resource allocation to lightpath requests is a challenging problem.

\indent An emerging technology of flexible and more fine-grained grid through the use of Optical Orthogonal Frequency Division Multiplexing (OOFDM) allows fiber bandwidth to be more suitably matched up with application requirements, thereby making the network more elastic than the conventional Wavelength Division Multiplexing (WDM) optical networks. Despite the advances of employing OOFDM technology in elastic optical networks (EONs), imminent fiber capacity exhaustion due to the ever-increasing demands means that multiple fibers per link will be inevitable. While increasing the number of fibers boosts the capacity of networks, there is a price to pay for it in the form of increased number of switch ports / complexity of switches. The huge amount of traffic demands and thus high hardware requirements motivate multigranularity (such as wavebanding) to save costs in optical networks. This dissertation aims to tackle several types of resource allocation challenges in multi-granular optical networks to either improve the spectrum utilization or provide cost-effective switching techniques. 

%There have been a number of advances in increasing fiber bandwidth and spectrum utilization efficiency through elastic optical networks (EONs).  

\indent The first part of this dissertation presents our work on an important joint scheduling (or {\em co-scheduling} of computational and network resources) problem in today's applications such as cloud computing. Such applications involve the processing of complex jobs consisting of several inter-dependent tasks executing on heterogeneous clusters of computing resources, which are interconnected by high-speed optical networks. Our co-scheduling algorithms take the constraints introduced by elastic optical networks into account and aim at either minimizing the makespan of a set of jobs in the static case, or minimizing the job blocking when jobs arrive dynamically.

\indent The second part of the dissertation addresses problems on waveband switching in WDM networks. Grouping together a set of consecutive wavelengths in a WDM network and switching them together as a single waveband could achieve savings in switching costs of an optical cross-connect (OXC). A specific problem that we consider is the optimal Band Minimization Problem in WDM mesh networks to minimize the required number of switching elements while accommodating a set of traffic demands. %We also extend the framework to minimize blocking probability of dynamic stochastic traffic. 
Then we evaluate OXC node architectures in WDM networks with multiple parallel fibers. A hierarchical architecture that utilizes wavebanding and has lower complexity is compared with the conventional architecture in terms of the cost of the OXC node and the power consumption. Analytical models for computing the blocking probability of connection requests are proposed and validated.

\indent In an effort to reduce the complexity of optical OXCs, a flexible wavebanding OXC architecture has been proposed recently. Elastic networking and flexible wavebanding introduce a new problem, namely, the routing, fiber, waveband, and spectrum assignment (RFBSA) problem. In the third part of the dissertation, a framework for solving the RFBSA problem in networks with multi-fiber links that can accommodate non-contiguous and non-uniform wavebanding is proposed. Then a joint banding-node placement and RFBSA problem to meet the network budget while maintaining good network performance is addressed.

\indent The final part of the dissertation addresses dynamic Routing and Spectrum Assignment problem in multi-fiber EONs to further improve spectrum efficiency. We propose and evaluate novel schemes to minimize the demand blocking ratio of requests.
\newpage
	
	%\cleardoublepage 
	%\addcontentsline{toc}{chapter}{\numberline{}Contents}
	\setcounter{tocdepth}{3}
	\setcounter{secnumdepth}{3}
	\tableofcontents
	
	\addcontentsline{toc}{chapter}{List of Figures}
	\clearpage
	\listoffigures
	\addcontentsline{toc}{chapter}{List of Tables}
	\clearpage
	\listoftables 
	%--------------------Body-------------------------------------
	\chapter{Introduction}
\pagenumbering{arabic}
\label{intro}

\indent The continuous growth of Internet traffic demands brings challenges for optical network designers. As traffic demands continue to grow and optical networks become even more pervasive, cost-efficient networking techniques become even more essential. Developing cost-efficient switching architectures and algorithms is necessary to reduce switching costs in the network and achieve faster adoption of optical infrastructures so as to facilitate the availability of high-speed networking. Given the limited network bandwidth capacity and ever-increasing traffic demands, the other important problem is to provision the requests with high spectrum efficiency. Thus the resource management problems in optical networks are nontrivial. Different resource allocation schemes would lead to different network performance. How to efficiently utilize various resources (such as wavelengths/subcarriers) in the system as well as meet different constraints (such as demand completion time, hardware requirements, blocking probability of connections) is indispensable in network design.

This dissertation aims at tackling these important aspects, and developing effective techniques for low-cost switching architectures and resource allocation algorithms for the next generation optical networks. 

In this chapter, we first briefly review the technologies used in this work, and outline our contributions.

\section{Optical Networks}\label{sec.label11}

\indent Optical networking technologies have been widely introduced and deployed at the Internet's core networks. In the foreseeable future, optical networks will form the underlying physical infrastructure. There is a dramatic growth of Internet traffic, which results from emerging applications, such as livestreams and social networking. According to \cite{website:ciscoInternet}, the global Internet traffic is expected to be more than 60 TBps in 2020. The need for optical-electrical-optical conversions can be eliminated by reconfigurable optical add/drop multiplexers (ROADMs) or optical cross-connects (OXCs). The enormous bandwidth demands can be carried by such all-optical paths between their sources nodes and destination nodes. 

\subsection{WDM Optical Networks}\label{sec.label12}

\indent Wavelength Division Multiplexing (WDM) technology multiplexes information to be transmitted onto a large number of wavelengths. By utilizing the dense wavelength division multiplexing, multiple wavelengths/optical paths can be accommodated within a fiber. Those wavelengths are located on a fixed spacing defined by ITU-T. OXCs are used as the switching nodes in current optical networks \cite{tanaka2014performance}. As the traffic grows, more wavelengths are needed in today's WDM networks. To cope with the growing number of wavelengths, large-scale OXCs are needed. In a wavelength-switched optical node, each wavelength is switched individually, that is, one switching element corresponds to one wavelength. In this case, the switching cost can be very high.

\subsection{OOFDM-based Optical Networks}\label{sec.label13}

\indent Optical orthogonal frequency division multiplexing (OOFDM) technology is proposed as an approach in \cite{OFDM1} to achieve high spectral efficiency and a flexible rate. A novel programmable mechanism for OOFDM which utilizes advanced digital signal processing, parallel signal detection, and flexible resource management schemes for subwavelength level multiplexing and grooming is presented in \cite{OFDM}. In OOFDM, the fiber bandwidth is carved up into subcarriers spaced only a few GHz apart with subcarrier bit rates of a few Gbps. This is to be contrasted with WDM where the typical wavelength spacing is 25 GHz, and wavelength bit-rates are 10, 40, or 100 Gbps. An attractive feature of OOFDM is that flexible bands of subcarriers may be assigned to a service as needed. In this way, allocated network resources can be matched up with service requirements in a much more flexible manner than in WDM-based networks. Such optical networks have therefore been called {\em elastic optical networks} \cite{JKTWSTH10}.

\subsection{Multi-fiber Links}
\indent The explosive use of video-based services, huge data transmission between data centers and so on causes continuous Internet traffic expansion. Imminent fiber capacity exhaustion means that multiple fibers per link will be inevitable. To accommodate increasing traffic demands, deploying multiple fibers on a physical link is a promising approach. Each link contains multiple fibers, and the number of fibers on each link may be different. All fibers consist of the same number of wavelengths or subcarriers (also called frequency slots or FSs) in WDM networks / EONs. At each OXC, a wavelength / FS on an input fiber can be switched to the same wavelength / FS on an output fiber.

\subsection{Multi-granular Optical Networks}\label{sec.label14}

\indent Grouping a bundle of wavelengths and switching them together as a single waveband (\cite{izmailov2002hybrid, wang2012multi}) will relieve the hardware requirements. It has been verified that such grouping of wavelengths can be realized by PLC technology (\cite{chandrasekhar2005flexible, kakehashi2007performance}) and the reduced number of switches enables us to monolithically integrate all the necessary de-multiplexing and switching functions on a single chip \cite{ishii2009ultra}. Savings in switching costs can be achieved by this waveband switching. For example, each mirror in a 3-D micro-electro-mechanical systems (MEMS) based wavelength selective switch (WSS) \cite{yuan2008fully} is dedicated to switching a wavelength path. If multiple wavelength paths are grouped and switched together, they will share the same mirror so that the number of mirrors needed in one OXC node will be decreased. 

\indent Such kind of multi-granularity could also be utilized in optical networks with multi-fiber links. Even though increasing the number of fibers improves the capacity of the network, it also increases the number of ports in OXCs, and thus increasing the complexity and hardware requirements of OXCs. By utilizing the wavebanding feature, the complexity of OXC architecture can be reduced with less hardware requirement. In EONs, the waveband switching corresponds to aggregating {\em a set} of optical spectral ranges and routing them with a single waveband port \cite{patel2012hierarchical}. Along with the benefits of wavebanding technique, the switching constraints are also introduced. The resource allocation problems with different objectives in such multi-granular optical networks need to be carefully addressed.

\section{Resource Allocation Problems}\label{sec.label15}

\indent This dissertation aims to tackle multiple types of resource allocation challenges in multi-granular optical networks to either improve the spectrum utilization or provide cost-effective switching techniques.

\subsection{Co-Scheduling Problem}\label{sec.label16}

\indent Current cloud computing \cite{alamro2016cred, limrungsi2012providing, xu2017optimizing, xu2017laser, xu2017cred, aggarwal2017optimality, zhao2017elastic, sultan2018Shed, xu2018Chronos, lu2012novel}, e-science, and data center applications \cite{yao2014comparative, li2018pushing, yao2017statsym, xue2017simber, wang2017capitalizing, li2015poster} consist of job requests that usually have multiple inter-dependent tasks, with each task having certain computational needs \cite{liu2016reconfigurable, li2018multichoice, yao2015dual,li2017mobiqor, yao2017cloud,li2016sarre,ts-bat,popcorns}. At the same time, the inter-task communication may not be negligible. Therefore, an important challenge is to efficiently schedule both the computational and networking resources, which is called {\em co-scheduling}. Co-scheduling in a dynamic, heterogeneous, distributed computing environment has been extensively studied for about two decades, e.g., \cite{SM94, RS01, SS05, VKSCKSS09, CCK12}. While co-scheduling problem in WDM networks has been widely explored, it hasn't been studied in the elastic optical networks. 
 
\subsection{Routing and Spectrum Assignment Problem}\label{sec.label17}

\indent One important problem in EONs is the Routing and Spectrum Assignment / Allocation (RSA) problem, which is evolved from the routing and wavelength assignment (RWA) problem in WDM networks. The objective of RSA is to find a number of unoccupied frequency slots (FSs) to meet traffic demands and establish lightpaths \cite{wang2011study,chatterjee2015routing, wu2015comparison}. The main constraints of the RSA problem are spectrum contiguity, spectrum continuity, and spectrum non-overlapping. The spectrum contiguity constraint ensures that the allocated FSs to a lightpath are contiguous. The spectrum continuity constraint ensures that the same FSs are allocated on every fiber along the route. The spectrum non-overlapping constraint ensures that any FS in any fiber is allocated to at most one lightpath. The RSA problem can be divided into two subproblems: routing problem and spectrum allocation (SA). For static RSA problem, the objective is to minimize the maximum slot index (MS) while provisioning all traffic requests. For dynamic RSA problem, the objective is to minimize the blocking ratio and demand blocking ratio of traffic requests.

\subsection{Routing, Fiber, Band, and Spectrum Assignment (RFBSA) Problem}\label{sec.label18}

\indent In a multi-granular elastic optical network with multi-fiber links, the RSA problem evolves into a routing, fiber, band, and spectrum assignment problem. The goal of RFBSA problem is to assign the route, fibers, wavebands and the contiguous slot set to accommodate each request. RSA is known to be an NP-complete problem in EONs. Adding the fiber and flexible waveband selection further increases the problem complexity in another two dimensions \cite{wu2017routing}. 

\section{Contributions}\label{sec.label19}

\indent In this dissertation, we first address the co-scheduling problem of computing and networking resources in a single-granular elastic optical networks for multi-task jobs. Both static and dynamic versions of the multi-job multi-task co-scheduling problem in OOFDM-based elastic optical networks are considered. Each job request includes a set of tasks, each task has a workload requirement, and there are communication requirements between tasks. The objective is to minimize the makespan (finishing time of all jobs) of the set of jobs. This scheduling problem is clearly NP-hard, as even very restricted versions of the problem have been proved to be so~\cite{HVV94}. We present an integer linear programming (ILP) formulation suitable for only very small instances, and two heuristic algorithms (applicable to larger problem instances) to allocate computational resources (such as virtual machines) and networking resources (such as subcarriers) to each job. 

\indent In the second part of the dissertation, we study multiple resource allocation problems in multi-granular optical networks. We first consider the optimal waveband design problem in WDM mesh networks, without the restriction of uniform wavebanding. The band minimization problem is considered from the entire network's point of view, as opposed to a single node's. An Integer Linear Programming (ILP) formulation and effective heuristics are proposed to solve the problem. Our framework helps to achieve a great amount of hardware requirements. Then we compare two architectures for an OXC - a conventional architecture and the hierarchical architecture - with multiple fibers per link in WDM networks, and present heuristics for resource (i.e., fiber and wavelength) assignment and analytical models to compute their blocking performance. After that, we address the RFBSA problem in EONs with multi-fiber links that can accommodate non-contiguous and non-uniform wavebanding. We propose an auxiliary layered-graph framework with pluggable cost functions and develop cost functions to minimize the maximum spectral usage for a given set of traffic demands. To achieve good network performance, as well as saving considerable hardware costs, we propose heuristics to solve the joint wavebanding node placement and RFBSA problem, given the total number of available WSSs for the network as a budget.

\indent In the last part of the dissertation, we try to further improve the spectrum efficiency in multi-fiber EONs. We first propose an ILP model for the network planning based on topology and traffic pattern information. For each arriving request, one of the candidate paths for each source-destination node pair will be selected according to probabilities precomputed from the ILP. Then we utilize a dedicated partition scheme which provides a particular spectrum range for each request size. A next state aware spectrum assignment algorithm with resource sharing among partitions is proposed. Each scheme performs well in improving the spectrum efficiency and shows a huge performance enhancement jointly. 

\chapter{Co-scheduling Computational and Networking Resources in Elastic Optical Networks}
\label{chap_2}

\indent In this chapter, we address the problem of {\em co-scheduling} in elastic optical networks. We consider the multi-job multi-task {\em co-scheduling} problem to minimize the makespan of traffic demands in OOFDM-based elastic optical networks \cite{wu2014co}.

\section{Related Work}\label{sec.label21}

\indent To the best of our knowledge, this is the first study on this topic. Work exists in the literature on {\em co-scheduling} in WDM-based optical networks. The authors of \cite{DWLJ11} study the feasibility of virtualized optical network (VON) service for workflow applications. They propose a computation and communication delay-aware rescheduling ($C^2$DAR) scheme based on a new defined Scheduled Result Graph (SRG) concept to allocate computing resources and the shared VON resource. An availability-driven scheduling scheme is proposed in \cite{ZGXWJHG10}, which improves the directed acyclic graph (DAG) applications' availability iteratively by allocating two copies of one communication task to two disjoint lightpaths for data transfer while satisfying application deadline requirements. In \cite{AED10}, a Genetic Algorithm (GA) based approach is presented to co-schedule resources with the objective of minimizing application completion time. The authors of \cite{AZNS11} propose a service-oriented energy-efficient Internet architecture. It introduces an energy-aware analytical model and algorithms that cooperatively optimize the selection and scheduling of resources such that the overall power consumption by both the network and IT resources is minimized. In \cite{KGV13}, the {\em co-scheduling} problem in lambda-grids for advance reservation requests is studied, with the aim to minimize the job blocking probability. Our problem is different from these existing work because of the multiple subcarrier allocation feature in elastic optical networks. Specifically,  the multiple subcarriers (a subcarrier {\em band}) for a service are typically allocated in a contiguous manner, as blocks allocated to two different services must be separated by a guardband in order to avoid interference \cite{JKTWSTH10}.   

\section{Network Model and Problem Statement}\label{sec.label22}

\indent We model the system as follows. Informally speaking, there is a network of nodes (with collocated and computing (VMs) and switching resources) interconnected by optical links. A job consists of multiple inter-dependent tasks and each task is assigned a number of VMs on a node. After a task finishes processing, it communicates with its ``child'' tasks and transfers its results over the network's links. A task cannot start executing until all of its ``parent'' tasks have finished executing and their results have been transferred to the node assigned to the task. A job is considered to be finished if all of its tasks finish execution. The problem is to allocate resources (VMs and subcarriers) so that an objective is met. We formalize this with some notation below.

\indent The physical network topology is $G(V, E)$, where $V$ is the set of nodes, $E$ is the set of links, and $N=|V|$ and $L=|E|$. Each physical node $v \in V$ has $H_{v}$ virtual machines (VMs). Each link has two fibers with opposite directions, and each fiber has multiple subcarriers ($F$), with each subcarrier having bit rate $C$ Gbps (a fixed modulation scheme is assumed). A guardband consisting of multiple subcarriers ($G$) is placed between two adjacent subcarrier bands assigned to different connections. The shortest path $p_{s,d}$ (based on hop) for each pair of nodes $(s, d)$ is precomputed. We assume a time-slotted system, i.e., each computation or communication takes up an integer number of time slots (which is variable, depending on the number of allocated VMs and subcarriers).\footnote{We use a time-slotted system to facilitate the development of the ILP in Section III.}

\indent Each job $j$ is represented as a directed acyclic graph (DAG) $G^j (I^j, Q^j)$, where $I^j$ is the set of tasks of job $j$. If there is a link from task $i$ to task $r$, it means the following: task $i^j$ must finish and transfer its results to task $r^j$ before the task $r^j$ can start processing. Each task $i^j \in I^j$ has a workload requirement ($w^{j} _{i}$), which is represented as the needed processing time slots of the task when assigned {\em one} VM. There is a data transfer size $D_q^j$ associated with each link of $Q^j$; this represents the number of time slots needed to transfer the result over link $q$ (from the task at the head of the DAG link to the task on the tail of the link) when assigned {\em one} subcarrier. Fig.~\ref{fig:coscheduling_networkjob} shows a network example and a job example, respectively.

\begin{figure}
\includegraphics[scale=0.6]{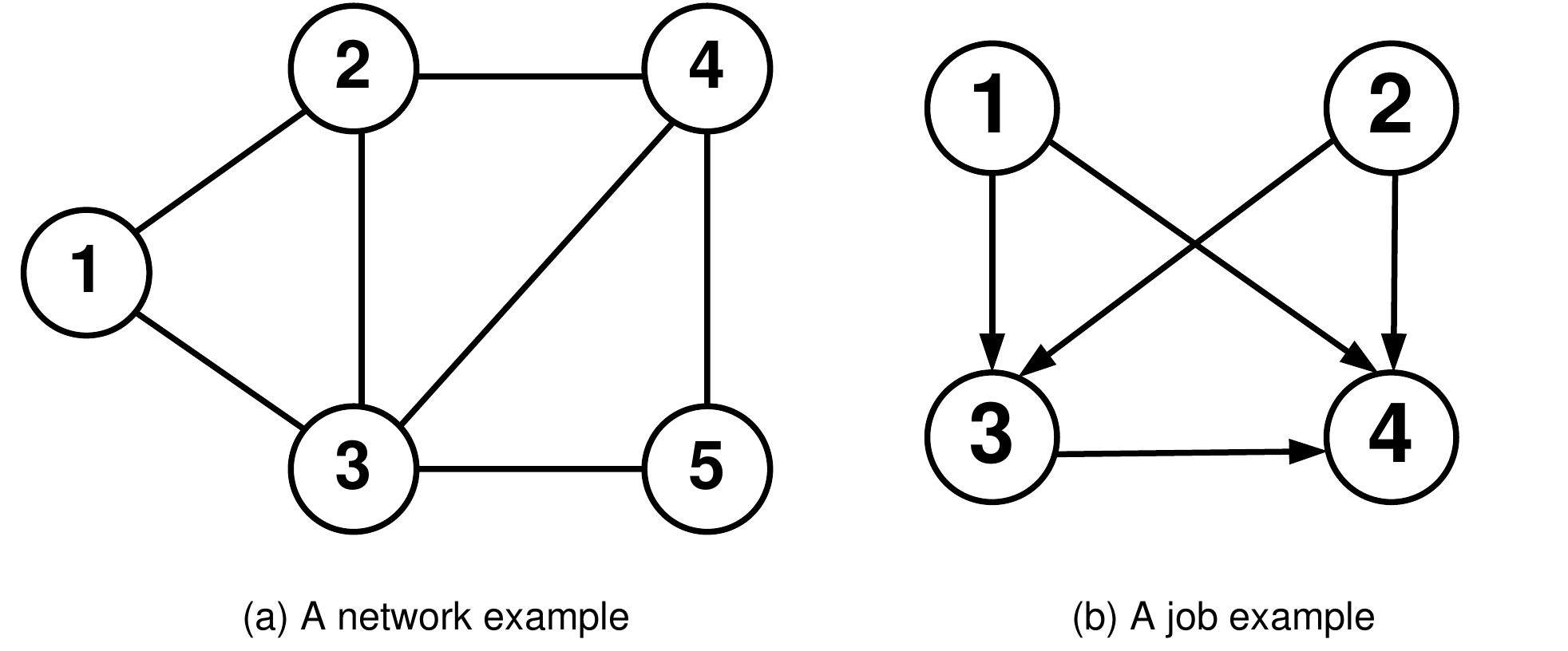}
\centering
\caption{\label{fig:coscheduling_networkjob} Examples of  a network graph and job DAG.}
\end{figure}

\indent We consider the problem of allocating computational resources (VMs) and communicational resources (subcarriers) for each task. Consider a job; for each of its tasks, a physical node $v$ must be assigned to it, and a number of VMs (say, $k$) on that physical node should be allocated to the task during its execution. The processing time is then calculated as $\left\lceil \frac{w^{j} _{i}}{k} \right\rceil$ slots (since $k$ VMs are allocated). Further, a physical path $p$ should be assigned to each link in $Q^j$,  and a band (of size $B$, say) of contiguous subcarriers on each physical link of the path should be allocated to transfer the results; the transfer time is then calculated as $\left\lceil \frac{D_q^j}{B} \right\rceil$. For two dependent tasks $i^j, r^j \in I^j$ (i.e., there is a link in $Q_j$ between these tasks in the DAG), if the two tasks are assigned to the same physical node, there is no need to allocate subcarriers to transfer the result of task $i^j$ to $r^j$, and the communication time is $0$. Note that after a task finishes processing at a physical node, the results can be stored temporarily at the node for some time before being transmitted to its successor tasks, i.e., data transfer does not have to immediately follow the completion of a task's execution. Similarly, after the results from parent tasks (of a given task) have arrived at the assigned physical node of the task, they can be stored temporarily for some time before the current task starts processing. In this work, we do not take into account the setup delay for each transmission, which could be viewed as a constant.

\indent For the static case, a set of jobs is given, and the objective is to minimize the makespan (completion time of the job that finishes last) \cite{xiang2013self,xiang2016joint,aggarwal2017sprout,xiang2015multi, xiang2017differentiated, xiang2017optimizing,xiang2015taming,xiang2013optimizing, xiang2014optimizing}. For the dynamic case, the jobs arrive in a random manner, and each job has a deadline requirement. If a job cannot finish before the deadline, it is blocked. The objective is to minimize the {\em job blocking}.

\section{Scheduling Algorithms}\label{sec.label23}

\indent In this section we introduce our algorithms for job {\em co-scheduling}.  We first present an ILP formulation for static requests that can be used to solve small problem instances. We then present two heuristics for the static case, and later adapt them to the dynamic case.

\subsection{ILP Formulation}

\indent In the ILP formulation, the input parameters are shown in Table~\ref{table:notations}.

\begin{table}
\centering
\caption{\label{table:notations} Notation for the {\em co-scheduling} problem}
\begin{tabular}{|>{\centering}m{1.8cm}| m{12.6cm}<{\centering}|}
\hline
\bf{Symbol}& \bf{Meaning} \\ \hline
$N$ &  number of nodes in the network \\ \hline
$L$ &  number of links in the network\\ \hline
$\Omega$ &  estimated upper bound of make span\\ \hline
$J$ &  number of jobs \\ \hline
$F$ &  number of subcarriers per fiber \\ \hline
$j$ &  an arbitrary job\\ \hline
$I^j$ &  the set of tasks in job $j$\\ \hline
$i^j$ &  an arbitrary task in job $j$, $\in I^j$\\ \hline
$v$ &  an arbitrary network node\\ \hline
$e$ &  an arbitrary network link\\ \hline
$p$ &  an arbitrary path ($p_{s, d}$ is the path from node $s$ to node $d$)\\ \hline
$\xi_{p_{s,d}}^{e}$ & = 1 if link $e$ is on path $p_{s, d}$; = 0, otherwise\\ \hline
$H_{v}$ &  the maximum number of VMs at node $v$\\ \hline
$H$ &  $\max_{v} H_v$\\ \hline
$w_i^j$ &  workload of task $i$ of job $j$\\ \hline
$S_i^j$ &  the start processing time slot of task $i$ of job $j$\\ \hline
$T_j^j$ &  the finish processing time slot of task $i$ of job $j$\\ \hline
%$g_i^j$ &  the processing time slots of task $i$ of job $j$, $=T_j^j-S_j^j$\\ \hline
$D_{i, r}^j$ &  the data transfer requirement from task $i$ to task $r$ of job $j$, = 0 if there is no data transfer from $i$ to $r$\\ \hline
$\bar{D}_{i, r}^j$ &  = 1, if the results of task $i$ need to be sent to task $r$ of job $j$\\ \hline
$B_{i, r}^j$ &  the number of subcarriers allocated to transfer results from task $i$ to task $r$ of job $j$\\ \hline
$E_{i, r}^j$ &  start time slot for transferring results from task $i$ to task $r$ of job $j$\\ \hline
$X_{i, r}^j$ &  finish time slot for transferring results from task $i$ to task $r$ of job $j$\\ \hline
%\hline
\end{tabular}

\end{table}

\begin{center}Objective: Minimize $\max_{i, j} \sum_{x=1}^{\Omega}xg_{i, x}^j +\sum_{v=1}^N\sum_{k=1}^{H}(\left\lceil \frac{w_i^j}{k} \right\rceil-1)a^j_{i, v, k}$\end{center}

Variables:

a)
\begin{equation*}
a^{j}_{i, v, k}=
\begin{cases}
                1, \quad\makebox{if task } i \makebox{ of job } j \makebox{ is assigned to node } v \makebox{ and allocated }  k  \makebox{ VMs;} \\
                0, \quad\makebox{otherwise }
\end{cases}
 \end{equation*}

b)
\begin{equation*}
g^j_{i, x}=
\begin{cases}
                1, \quad\makebox{if task } i \makebox{ of job } j \makebox{ start processing time slot is }x;\\
                0, \quad\makebox{otherwise }
\end{cases}
 \end{equation*}

c)
\begin{equation*}
A^{j} _{i, v, t, k}=
\begin{cases}
                1, \quad\makebox{if task } i \makebox{ of job } j \makebox{ use } k \makebox{ VMs on node } v \makebox { at time slot } t;\\
                0, \quad\makebox{otherwise }
\end{cases}
 \end{equation*}

d)
\begin{equation*}
\phi^{j} _{i, r, y, m}=
\begin{cases}
                1, \quad\makebox{if the first subcarrier index allocated for communication } \\ \quad\quad\makebox{from task } i \makebox{ to task } r \makebox{ of job } j \makebox{ is }y \makebox{ and the number of }\\ \quad\quad  \makebox{allocated subcarriers is }m; \\
                0, \quad\makebox{otherwise }
\end{cases}
 \end{equation*}

e)
\begin{equation*}
u^{j} _{i, r, l}=
\begin{cases}
                1, \quad\makebox{if the start transmission time slot for communication } \\\quad\quad \makebox{from task } i \makebox{ to task } r \makebox{ of job } j \makebox{ is }l;\\
%\makebox{ and the number of transferring time}\\ \quad\quad\makebox{slots is } z; \\
                0, \quad\makebox{otherwise }
\end{cases}
 \end{equation*}

f)
\begin{equation*}
o^{j} _{i, r, c, t, e}=
\begin{cases}
                1, \quad\makebox{if the result transmission from task } i \makebox{ to task } r\makebox{ of job } j\\\quad\quad  \makebox{use subcarrier }c \makebox{ on link } e \makebox{at time slot } t; \\
                0,\quad\makebox{otherwise }
\end{cases}
 \end{equation*}

g)
\begin{equation*}
z^{j} _{i, r, v}=
\begin{cases}
                1, \quad\makebox{if the task } i \makebox{ and its child task } r\makebox{ of job } j\\\quad\quad\makebox{are both allocated to node }v; \\
                0,\quad\makebox{otherwise }
\end{cases}
 \end{equation*}

Constraints:

a) Each task of each job can be assigned to only one node and with fixed VM allocation. For all $j, i$
\begin{equation*}
\sum_{v=1}^{N} \sum_{k=1}^{H_v} a^{j} _{i, v, k} = 1
\end{equation*}

b) Constraint to find the value of $z^{j}_{i, r, v}$. For all $j, i, r, v$,

\begin{equation*}
\sum_{k=1}^{H_v} a^{j} _{i, v, k}+ \sum_{k=1}^{H_v} a^{j} _{r, v, k} \leq 1+z^{j} _{i, r, v}
\end{equation*}
\begin{equation*}
\sum_{k=1}^{H_v} a^{j} _{i, v, k}+ \sum_{k=1}^{H_v} a^{j} _{r, v, k} \geq 2z^{j} _{i, r, v}
\end{equation*}

c) Each task of each job can start processing at only one time slot. For all $i, j$,
\begin{equation*}
\sum_{x=1}^{\Omega}g_{i, x}^j=1
\end{equation*}

d) The constraint to get value of $A^{j} _{i, v, t, k}$ based on the value of $g^{j} _{i, x}$ and $a^{j}_{i, v, k}$. For all $j, i, v, k, t, t-\lceil \frac{w_i^j}{k}\rceil+1 \leq x \leq t$,
%x \leq t \leq x+\lceil \frac{w_i^j}{k} \rceil-1$,
\begin{equation*}
g^{j} _{i, x}+ a^{j}_{i, v, k} \leq 1+ A^{j} _{i, v, t, k}
\end{equation*}
\begin{equation*}
g^{j} _{i, x}+ a^{j}_{i, v, k} \geq 2A^{j} _{i, v, t, k}
\end{equation*}

e) Each node's VM capacity cannot be exceeded for any time slot. For all $v, t$,
\begin{equation*}
\sum_{j=1}^J\sum_{i\in I^j} \sum_{k=1}^{H} kA^{j} _{i, v, t, k}\leq H_v
\end{equation*}

f) Each transmission from task $i$ to $r$ can start at only one time slot. For all $i, r, j$,
\begin{equation*}
\sum_{l=1}^{\Omega}u_{i, r, l}^j=1
\end{equation*}

g) Each transmission has only one start subcarrier index and has a certain number of allocated subcarriers. For all $i, r, j$,
\begin{equation*}
\sum_{y=1}^{F}\sum_{m=0}^{F} \phi_{i, r, y, m}^j=1
\end{equation*}

h) For each task, its communication start time slot cannot be earlier than its finish processing time slot. For all $i, r, j$,
\begin{equation*}
\sum_{x=1}^{\Omega}xg_{i, x}^j +\sum_{v=1}^N\sum_{k=1}^{H}(\lceil \frac{w_i^j}{k} \rceil-1)a^j_{i, v, k}<\sum_{l=1}^{\Omega} lu_{i, r, l}^j
\end{equation*}

i) A task should start after the data transfer from its preceding/parent tasks is completed. For all $ i, r, j$,
\begin{equation*}
\begin{multlined}
\sum_{l=1}^{\Omega}lu_{i, r, l}^j+\sum_{m=1}^{F}\sum_{y=1}^{F} (\lceil \frac{D^{j} _{ir}}{m} \rceil-1)\phi^{j} _{i, r, y, m}-\sum_{x=1}^{\Omega}xg^{j}_{r, x} < \Lambda(1-\bar{D}^{j} _{ir}+\sum_{v=1}^N z_{i, r, v}^j)
\end{multlined}
\end{equation*}

\noindent
where $\Lambda$ is a very large number ($> \max_{i, r, j} D^{j} _{ir}$).

j) Constraint to get the value of  $o_{i, r, c, t, e}^j$. For all $j, i, r, e, c, t, c-m+1 \leq y \leq c, t-\lceil \frac{D^{j} _{ir}}{m} \rceil+1 \leq l \leq t$,

%q \leq c \leq q+m-1$, $l \leq t \leq z+l-1, z >0, m>0$,
\begin{equation*}
\begin{multlined}
\xi_{p_{n_1, n_2}}^{e}+ u^{j}_{i, r, l}+ \phi^{j}_{i, r, y, m}+\sum_{k=1}^{H}a_{i, n_1, k}^j+\sum_{k=1}^{H}a_{r, n_2, k}^j \leq 4+ o_{i, r, c, t, e}^j
\end{multlined}
\end{equation*}
\begin{equation*}
\begin{multlined}
\xi_{p_{n_1, n_2}}^{e}+ u^{j}_{i, r, l}+ \phi^{j}_{i, r, y, m}+\sum_{k=1}^{H}a_{i, n_1, k}^j+\sum_{k=1}^{H}a_{r, n_2, k}^j \geq 5 o_{i, r, c, t, e}^j
\end{multlined}
\end{equation*}

k) Each link's subcarrier $c$ can be allocated to at most one data transfer. For all $e, t, c$,
\begin{equation*}
\sum_{j=1}^J \sum_{i\in I^{j}}\sum_{r \in I^{j}} o_{i, r, c, t, e}^j \leq 1.
\end{equation*}

\indent Current formulation needs the collaboration between networks and data centers. Exact information must be interchanged between networks and data centers. This could be justified for some large cloud service companies while more constraints should be imposed for small distributed data centers connected by different network operators. To simplify the problem, we just assume that there is a global scheduler which collects information from both the network and data center side.

\subsection {Static Case}

\indent Two heuristics based on List Scheduling are proposed. Both of them allocate computational and communication resources to jobs one by one. The first algorithm (First Fit) schedules the jobs and tasks in each job by the order of their index. The second algorithm (Children Aware) schedules them by decreasing order of their weight (the method to find a job's weight and a task's weight will be shown later). Two global states are maintained: one is $h_v(t)$, the number of free VMs on physical node $v$ at time slot $t$; and the other is $s_e^c(t)$, which equals 1 if subcarrier $c$ on physical link $e$ is available at time slot $t$ and is 0 otherwise.

\indent Before describing the algorithm details, we introduce the notion of {\em layer}. Each job's tasks are categorized into several layers -- the first layer consists of tasks that have no preceding tasks. The tasks of layer two are those not in layer one and all of their preceding tasks are in layer one. In general, the tasks of layer $\eta$ are the ones not in layers $\eta^{\prime} < \eta$, and at least one of their preceding tasks is in layer $\eta-1$.

\subsubsection{First Fit (FF)}

\indent In the First Fit algorithm, the jobs are numbered (in the order they are generated) and considered from smallest index to largest index. For each layer of a job, from the lowest layer to the highest layer, the tasks are also considered by increasing order of index. Let $I_{\eta}$ denote the set of tasks in Layer $\eta$.
%As shown in Algorithm 1.

\indent {\em Layer One:} For each start processing time slot $S_{i}^j$ of task $i$, the network nodes with $h(S_{i}^j)>0$ are considered one by one. For each node $n$, the number of potentially allocatable VMs $k_i^j$ are checked one by one, from $\min(h_{n}(S_i), w_i^j)$ to $1$. After this scheduling, the corresponding $h_v(t)$ states are updated, as shown in Algorithm 1.

\indent {\em Higher Layers:} For each task, the algorithm considers the task's start processing time slot $S_{i}^j$, and the nodes with $h(S_{i}^j)>0$ are considered one by one. Let $U_i^j$ be the set of preceding tasks of task $i^j$, then $S_{i}^j \geq \max_{i' \in U_i^j}T_{i'}^j$;  they are equal when task $i$ is assigned to the same node as that preceding task (and there is no waiting time before task $i$'s processing). For each possible $S_{i}^j$ and $n_i^j$, we check the condition that all of its preceding tasks (note that all tasks in $U_i^j$ have been assigned already) can finish their data transfer to task $i$ no later than $S_{i}^j$. For $q_{i', i}^j$, the number of allocated subcarriers is selected as the one that achieves the minimum transfer finishing time $X_{i', i}^j$. The allocation method we use is First-Fit, i.e., the first available contiguous subcarrier band on the corresponding path is allocated. If the condition is satisfied, then consider the number of allocated VMs $k_i^j$. After this scheduling, the corresponding $h_v(t)$ and $s_e^c(t)$ states are updated, as shown in Algorithm 2.

\begin{algorithm}
\caption{First Fit: Allocate Layer One}
\begin{algorithmic}[1]
\FOR {Task $i=1,2, \ldots, |I_1|$}
\STATE $T'_i=\infty$
\FOR {Start processing time $S_i=1,2, \ldots, \Omega$}
\IF {$T'_i<\infty$}
\STATE Break
\ENDIF
\STATE Keep the physical nodes with $h_v(S_i)>0$, get set $V'$
\FOR {Node $n_i=1,2, \ldots, |V'|$}
\IF {$T'_i<\infty$}
\STATE Break
\ENDIF
\FOR {VM $k_i=\min(h_{n}(S_i), w_i^j), \ldots, 1$}
\STATE Calculate $T_i=S_i+\lceil \frac{w_i}{k_i} \rceil-1$
\IF {$T_i<T'_i$, and $h_n(t) \geq k_i$ for $S_i \leq t \leq T_i$}
\STATE $T'_i=T_i$
\STATE Break
\ENDIF
\ENDFOR
\ENDFOR
\ENDFOR
\STATE Assign task $i$ to the $n_i$, $k_i$, $S_i$ that achieves $T'_i$ and update $h_n(t)=h_n(t)-k_i$ for $S_i \leq t \leq T'_i$
\ENDFOR
\end{algorithmic}
\end{algorithm}

\begin{algorithm}
\caption{First Fit: Allocate Layer $\eta$}
\begin{algorithmic}[1]
\FOR {Task $i=1,2, \ldots, |I_{\eta}|$}
\vspace{-4pt}
\STATE $T'_i=\infty$
\vspace{-4pt}
\FOR {Start processing time $S_i=\max_{i' \in U_i}T_{i'}, \ldots, \Omega$}
\vspace{-4pt}
\IF {$T'_i<\infty$}
\vspace{-5pt}
\STATE Break
\vspace{-5pt}
\ENDIF
\vspace{-4pt}
\STATE Keep the physical nodes with $h_v(S_i)>0$, get set $V'$
\vspace{-4pt}
\FOR {Node $n_i=1,2, \ldots, |V'|$}
\vspace{-4pt}
\IF {$T'_i<\infty$}
\vspace{-5pt}
\STATE Break
\vspace{-5pt}
\ENDIF
\vspace{-4pt}
\STATE $U'_i=U_i$
\vspace{-4pt}
\FOR {$i'=1,2, \ldots, |U'_i|$}
\vspace{-4pt}
\STATE Minimum finish transferring time $\bar{X}_{i', i}= \infty$
\vspace{-4pt}
\FOR {Start transferring time $E_{i', i}=T_{i'}, \ldots, \Omega$}
\vspace{-4pt}
\FOR {Number of subcarriers $B_{i', i}=1, 2, \ldots, F$}
\vspace{-4pt}
\STATE Finish transferring time $X_{i', i}=E_{i', i}+ \lceil \frac{D_{i', i}}{B_{i', i}}\rceil-1$
\vspace{-4pt}
\IF {$X_{i', i}<\bar{X}_{i', i}$, $X_{i', i} \leq S_i$, and there is contiguous available subcarrier band with size $\geq B_{i', i}$ on path from $n_{i'}$ to $n_i$}
\vspace{-4pt}
\STATE $\bar{X}_{i', i} =X_{i', i}$
\vspace{-4pt}
\ENDIF
\vspace{-4pt}
\ENDFOR
\vspace{-4pt}
\ENDFOR
\vspace{-4pt}
\ENDFOR
\vspace{-4pt}
\IF {For all $i'$, $\bar{X}_{i', i} < \infty$}
\vspace{-4pt}
\FOR {VM $k_i=\min(h_{n}(S_i), w_i^j), \ldots, 1$}
\vspace{-4pt}
\STATE Calculate $T_i=S_i+\lceil \frac{w_i}{k_i} \rceil-1$
\vspace{-4pt}
\IF {$T_i<T'_i$, and $h_n(t) \geq k_i$ for $S_i \leq t \leq T_i$}
\vspace{-4pt}
\STATE $T'_i=T_i$, Break
\vspace{-4pt}
\ENDIF
\vspace{-5pt}
\ENDFOR
\vspace{-5pt}
\ENDIF
\vspace{-5pt}
\ENDFOR
\vspace{-5pt}
\ENDFOR
\vspace{-4pt}
\STATE Assign task $i$ to the $n_i$, $k_i$, $S_i$ that achieves $T'_i$, update $h_n(t)=h_n(t)-k_i$ for $S_i \leq t \leq T'_i$
\vspace{-4pt}
\STATE Allocate First-Fit subcarriers that achieves $\bar{X}_{i', i}$ and update corresponding $s_e^c(t)$
\ENDFOR
\end{algorithmic}
\end{algorithm}
\setlength{\textfloatsep}{1pt}

\subsubsection{Children-Aware (CA)}

\indent In the Children-Aware algorithm, each task and each job is assigned a weight. To find the weight of each task $i \in I^j$, a directed subgraph of the job $j$ is considered; the top node of the subgraph is task $i$, and all tasks of higher layers which are children of $i$ are included the subgraph. For example, the subgraph with task $1$ of the job in Fig.~\ref{fig:coscheduling_networkjob} (b) is shown in Fig.~\ref{fig:subgraph}. Let $R_{i}$ denote the set of tasks of the subgraph which are directly connected to task $i$ (in the example, task $3$ is the only task of $R_{1}$). The weight of task $i$ is defined as, \begin{equation}
\theta_i=\max_{r \in R_i} (\alpha w_i+\beta D_{i, r}+\theta_r).
\label{eq:theta}
\end{equation}

\noindent
where $\alpha$ and $\beta$ are suitable constants, and the other variables are as defined in Table I. Parameters $\alpha$ and $\beta$ are chosen according to the ratio of computation resources to communication resources. The weight of job $j$ is set as $\theta^j=\max_{i \in I_0} \theta_i$ (recall that $I_0$ is set of tasks of layer $0$).

\begin{figure}
\includegraphics[scale=0.45]{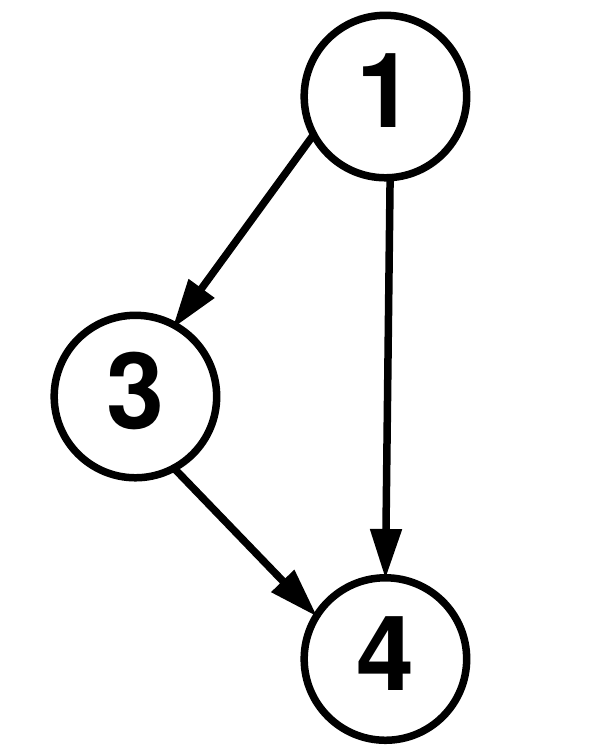}
\centering
\caption{\label{fig:subgraph} Example of subgraph.}
\end{figure}

\indent The first step of the CA algorithm is to sort the jobs by decreasing order of their weights $\theta^j$. Then the jobs are considered one by one by this order. For each layer of a job, the tasks of the current layer are sorted by decreasing order of weight $\theta_i$, and the allocation is done in this order.

\indent Two changes are made to Algorithm 1 and Algorithm 2: (a) The Break part in lines $4$-$6$ and $9$-$11$ (both Algorithm 1 and Algorithm 2) are removed, and (b) In line 12 of Algorithm 2, $U_i$ is first sorted in decreasing order of data transfer size to task $i$; and this ordered set is $U'_i$.

\indent The worst case time complexity for both algorithms are the same: $J(IwNH_v+I^2D^2NF)$. Here, $I$ denotes the maximum number of tasks in one job, $w$ denotes the maximum task workload, and $D$ denotes the maximum data transfer requirement.

\subsection{Dynamic Requests}

\indent For the dynamic case, we assume the following model. At each time slot there is a probability $p_{new}$ that a new job will arrive during this time slot. Thus the tasks of the job can start processing from the next time slot. Besides the workload and transfer size requirements, each job has a deadline requirement. If the algorithm determines that some of the tasks of the job cannot finish before the deadline, the job is blocked and no resources are allocated to it. We adapt the previous two heuristics to the dynamic case.

\indent At each time slot, the algorithms check if a new job arrives during this time slot; if so, schedule it; otherwise, go on to consider the next time slot. Note that for a job that arrives at time slot $\tau$, its tasks may start execution at a later time slot $\tau'$, as long as all of its tasks can finish before the deadline. 

\section{Simulation Results}\label{sec.label24}

\indent We present results for two network topologies, a small 5-node network shown in Fig.~\ref{fig:coscheduling_networkjob} (a) and the larger NSFNET network, shown in Fig.~\ref{fig:NSF}. Each subcarrier being $12.5$ GHz (according to standardization work related to the
flexible grid specification is on-going in ITU-T~\cite{ref125}), the data rate for QPSK is 12.5 Gbps per subcarrier. In addition, the guardband is assumed to be $G=2$ subcarriers. For each pair of network nodes, the shortest path (in terms of hops) is used.

\begin{figure}
\includegraphics[scale=0.4]{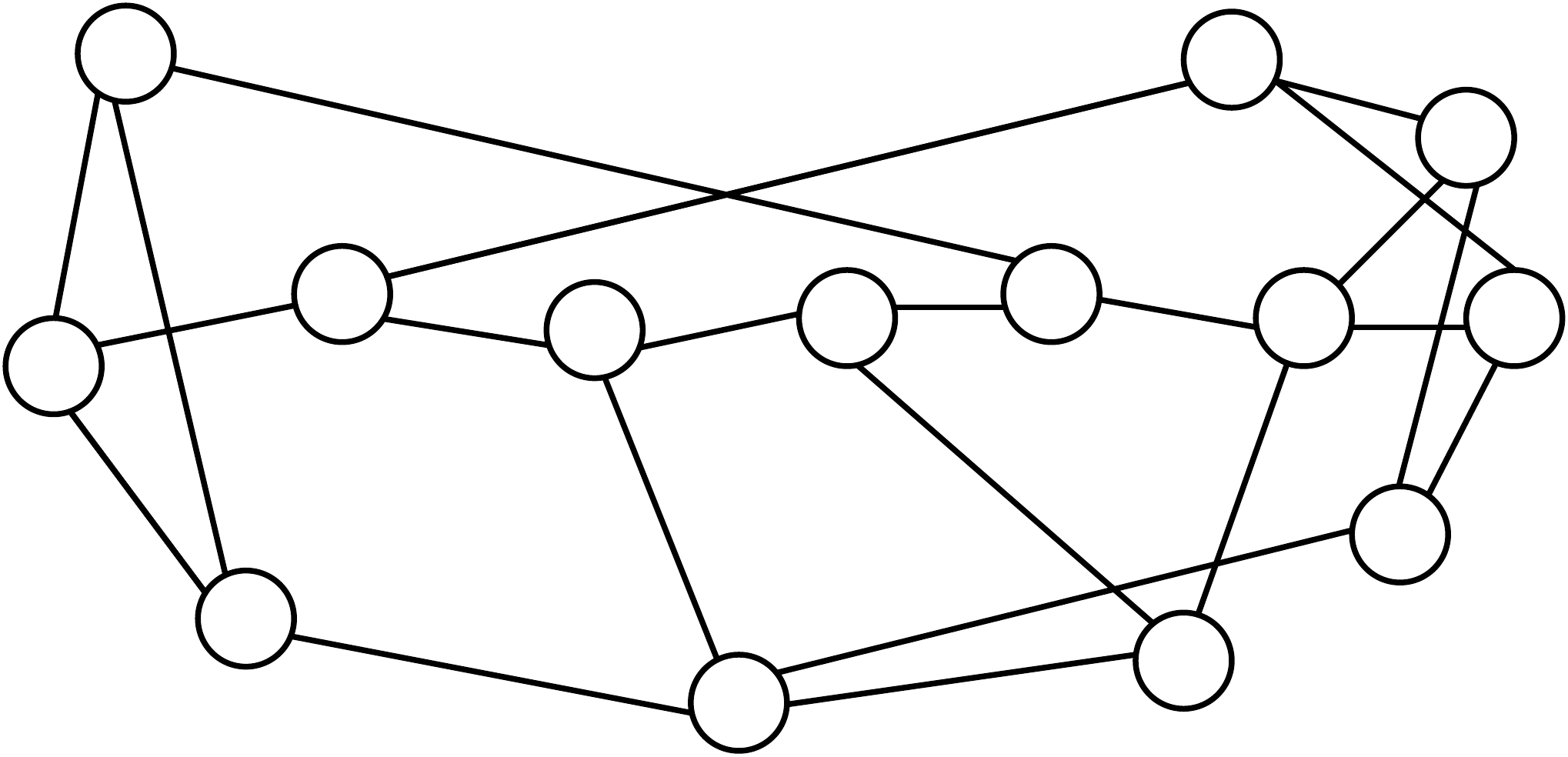}
\centering
\caption{\label{fig:NSF} NSF network.}
\end{figure}

%We assume that each physical node has 100 VMs (as modern-day servers allow tens of VMs per host \cite{SIGCOMM08}) and there are $800$ subcarriers on each link (as C-band is 4000 GHz).

\subsection {Static Requests}

\indent We first present results for the static case. Due to the high complexity of the ILP, we are only able to obtain results for the small network. There are $5$-$10$ VMs per network node. Each fiber has $12$ subcarriers. Job requests are generated as follows. Each job has between $2$ and $4$ tasks (uniformly random), and a result transfer requirement exists from task $i$ to another task $r$, $r>i$ (of the same job) with probability $0.8$ (we discard any task graph that is not connected). Each task's workload is a random number between $5$ and $15$. Each transfer size is uniformly distributed over the range $15$-$20$. In our case, since subcarrier resources are more abundant than VM resources, we choose $\alpha=0.5$ and $\beta=1$ for the Children-Aware algorithm. Results for $2$-$7$ job requests are shown in Table~\ref{coscheduling_ilp}.

\begin{table}
\caption{\label{coscheduling_ilp} Performance of ILP and algorithms for 5-node network: Static requests}
\centering
\begin{tabular}{|c|c|c|c|}
\hline
\bf{No. of Jobs}& \bf{First Fit (FF)}  &\bf{Children Aware (CA)} & \bf{ILP}\\ \hline
\bf{2} & 5  & 4  & 4\\ \hline
\bf{3}   & 7 & 7 & 6 \\ \hline
\bf{4} & 9   & 8 & 7  \\ \hline
\bf{5}   & 11 & 10  & 7 \\ \hline
\bf{6}  & 11  & 8  & 7 \\ \hline
\bf{7}   & 11  & 10 & 8 \\ \hline
\end{tabular}
\vspace{10pt}
\end{table}

\indent From these results, we observe that, in general, the CA algorithm performs better than FF. In addition, their results are quite close to those achieved by the ILP. For the NSFNET network, there are $50$-$100$ VMs per network node \cite{SIGCOMM08}, and each fiber has 320 subcarriers. Job requests are generated as follows. Each job has between $5$ and $10$ tasks, and a result transfer requirement exists from task $i$ to another task $r$ with probability $0.5$. Each task's workload is a random number between $50$ and $150$. Each transfer size is uniformly distributed over the range $250$-$350$. Results for the two heuristics are presented in Table~\ref{coscheduling_largestatic}. These results confirm our earlier observations. The execution times (ET) in minutes for the two heuristics are also given in the same table. We see that CA performs better for large workloads with reasonable time complexity.

\begin{table}
\caption{\label{coscheduling_largestatic} Performance of algorithms for NSFNET: Static requests}
\centering
\begin{tabular}{|c|c|c|c|c|}
\hline
\bf{No. of Jobs}& \bf{FF}  &\bf{ET (FF)}  &\bf{CA} &\bf{ET (CA)} \\ \hline
\bf{50} & 159  & 8 & 42 & 20 \\ \hline
\bf{100}   & 159 & 18 & 82 & 53 \\ \hline
\bf{150} & 204 & 34  & 123 & 93 \\ \hline
\bf{200}   & 204 &50 & 162 & 209 \\ \hline
\bf{250}  & 268 & 63  & 202 & 400 \\ \hline
\bf{300}   & 324 & 77  & 244 & 427 \\ \hline
\end{tabular}
\vspace{10pt}
\end{table}

\subsection{Dynamic Requests}

\indent For the dynamic case, at each time slot, there is a new job arrival with probability $0.5$. Each job's deadline $\Gamma$ is set as $\Gamma=\max_{i \in I^j}\lceil \theta_i \rceil$, where  $\theta_i$ is calculated by Equation~(\ref{eq:theta}), but with $\alpha=0.01$, $\beta=0.01$ (different from the values used to calculate the task's weight), which estimates the job makespan by assuming 100 VMs to each task and 100 subcarriers to transfer results between tasks. The parameters of the network and jobs are the same as in the static case. Results for the heuristics are shown in Fig.~\ref{fig:coscheduling_dynamic1}. The execution times for the heuristics are shown in Table \ref{coscheduling_largedynamic}. Once again, the superior performance of CA can be seen.
%In NSFNET, there are $50$-$100$ VMs per physical node.
%Each fiber has 320 subcarriers. Job requests are generated as follows. Each job has between $5$ and $10$ tasks, and a result transfer requirement exists from task $i$ to another task $r$ with probability $0.3$. Each task's workload is a random number between $50$ and $150$. Each transfer size is uniformly distributed over the range $250$-$350$.
%We can see that JJ has good blocking performance.

\begin{figure}
\includegraphics[scale=0.55]{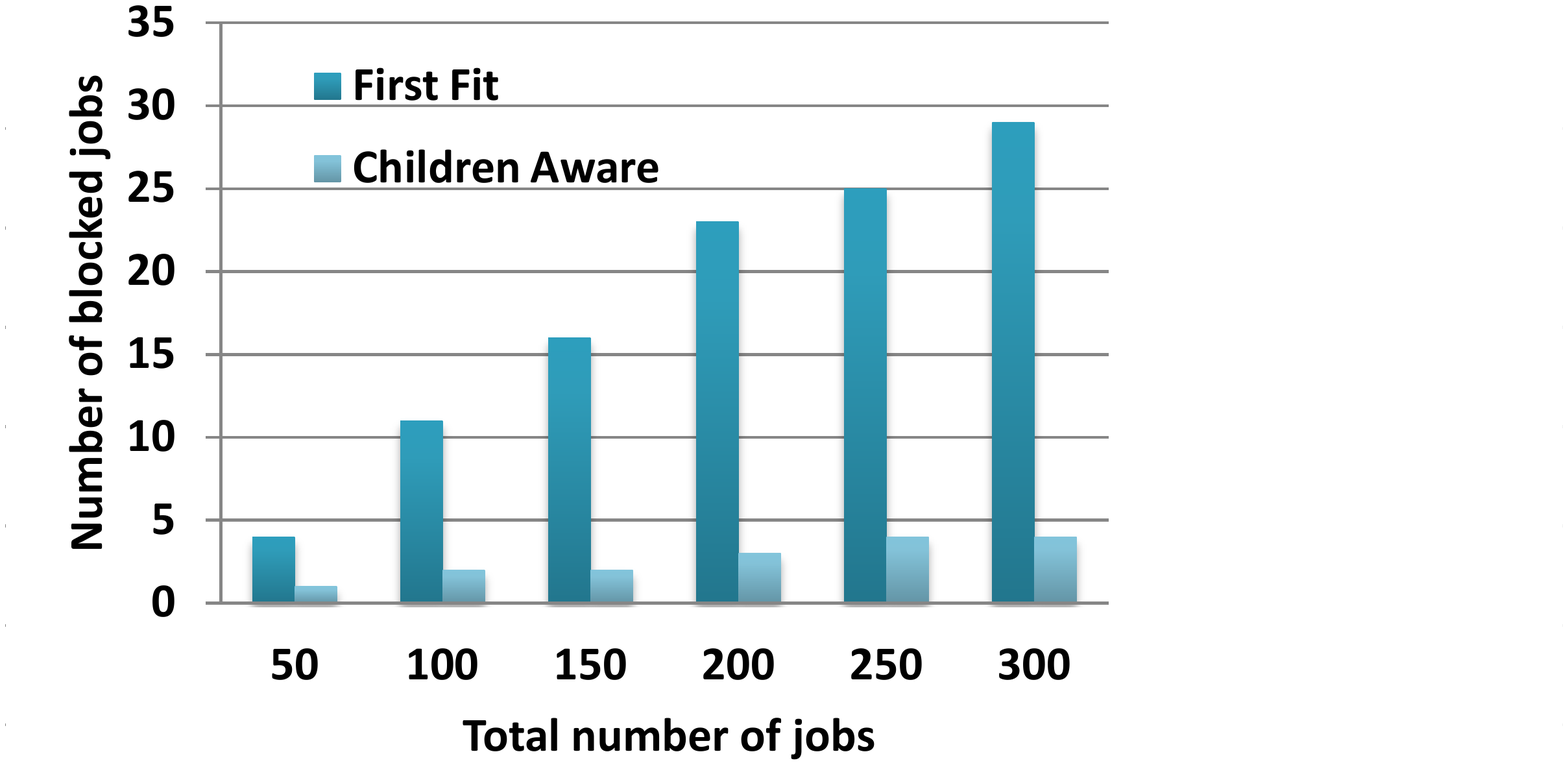}
\centering
\caption{\label{fig:coscheduling_dynamic1} Number of blocked jobs for dynamic traffic.}
\end{figure}

\begin{table}
\caption{\label{coscheduling_largedynamic} Execution time in minutes for NSFNET: Dynamic requests}
\centering
\begin{tabular}{|c|c|c|}
\hline
\bf{No. of Jobs} &\bf{Execution time (FF)}  &\bf{Execution time (CA)} \\ \hline
\bf{50}  &7  & 25 \\ \hline
\bf{100}   & 14  & 56 \\ \hline
\bf{150}  & 23   & 78 \\ \hline
\bf{200}   &32  & 115 \\ \hline
\bf{250}   & 34   & 142 \\ \hline
\bf{300}    & 49   & 210 \\ \hline
\end{tabular}
\vspace{10pt}
\end{table}

\section{Conclusions}\label{sec.label25}
\indent In this chapter, we investigated the problem of {\em co-scheduling} computational and networking resources for both static and dynamic multi-task jobs in elastic optical networks. We proposed two heuristics, and presented representative simulation results. According to the results, for the static case, in general, the Children-Aware algorithm has better makespan performance than the First Fit algorithm. For the dynamic case, the adapted Children-Aware algorithm has good blocking performance. Since we give weights to tasks according to the task dependencies, tasks which have more computation requirement and initiate more transmission requirements are expected to schedule first.

\chapter{Optimal Nonuniform Wavebanding in WDM Mesh Networks}
\label{chap_3}

\indent This chapter investigates the multi-granularity in WDM networks. The scheduling of wavelengths into wavebands without uniform wavebanding constraint to save switching costs in WDM mesh networks is studied. 

\section{Related Work}\label{sec.label31}

\indent There are two types of waveband switching -- uniform waveband switching and nonuniform waveband switching \cite{izmailov2003nonuniform}. For uniform waveband switching, the sizes of wavebands at a node are all the same and this common waveband size is used for all network nodes. Optical port-saving under uniform waveband switching is studied in \cite{torab2006waveband}. A uniform waveband design problem in mesh networks is studied in \cite{alnaimi2010uniform}. In nonuniform waveband switching, the sizes of bands could be different for each node of the network. Utilizing nonuniform waveband switching is more advantageous in saving switching elements for WDM networks \cite{chen2006uniform}. The authors in \cite{chen2006uniform} first consider a star topology with a single source node and then extend their results to ring topologies.

\indent Previous work \cite{izmailov2003nonuniform} which investigated nonuniform wavebanding focusing on a single switching node has been demonstrated to be not suitable for an entire network \cite{turkcu2014optimal}. The authors in \cite{turkcu2014optimal} focus on optimal waveband design in ring networks, and propose a novel framework for band minimization. In this chapter, we study a more complicated case, which is the optimal waveband design problem in mesh networks, without the restriction of uniform wavebanding.

\section{Network Model and Definitions}\label{sec.label32}
 
\indent Consider a mesh network with $N$ nodes. The nodes are numbered as $1,2,\cdots, N$. Given a traffic as a set of lightpaths, a Routing and Wavelength Assignment (RWA) algorithm is used to provision the lightpaths. Suppose $W$ wavelengths are assigned to the given traffic after routing and wavelength assignment. The node degree (excluding local add/drop) for node $n$ is $d_n$. We can construct a ``code'' for each node $n$ and concatenate the codes for all nodes to generate a code for each wavelength. 

\indent The principles for a code are as follows. There are $\binom{d_n}{2}+d_n =\frac{d_n(d_n+1)}{2} $ bits in the code for node $n$. We assume the connection requests are bidirectional, and the code permutations for I-O switching are not considered here. For every input-output pair in a node, $(I_i,O_j), i,j=1,2,...,d_n$, we have one bit that is 1 if the wavelength needs to be switched between $I_i$ and $O_j$, and 0 otherwise. There are $\binom{d_n}{2}$ such pairs for each node. $d_n$ more bits are appended to the code -- the $i$th bit in the $d_n$ bits is 1 if the wavelength for port $i$ is added or dropped. Thus, a code of length $\frac{d_n(d_n+1)}{2}$ fully encodes the behavior of a single wavelength at node $n$. Concatenating such codes for a wavelength at nodes 1 through $N$ makes up the row for that wavelength in the RWA matrix. Let $R=\sum_{n=1}^{N}\frac{d_n(d_n+1)}{2}$. Thus the RWA matrix has $W$ rows and $R$ columns. Let $r_w$ denote the row vector of the RWA matrix corresponding to wavelength $w$, and let $c_v$ denote the column vector corresponding to bit $v$ in the concatenated code. 

\indent We observe that the runs of 1's in a specific column, say $c_n$, represent opportunities for wavebanding at a specific input-output pair or at an add/drop port at a node. If the location of $c_n$ corresponds to an input-output pair $(I_i,O_j)$ at a node, a string of consecutive 1's in the column means that all of those consecutive wavelengths are bypassed between port $i$ and port $j$, and hence can be switched as a single band. Similarly, if the location of  $c_n$ corresponds to the $i$th add/drop port at a node, a string of consecutive 1's in this column means that all of those consecutive wavelengths are added/dropped at port $i$. Thus, the number of bands at a node can be obtained by counting the strings of consecutive 1's in each column which corresponds to a bit of the node's code and adding them together. The total number of bands in the network can be obtained by counting the strings of consecutive 1's in each column and summing them up through all columns. 

\indent For each column of the RWA matrix, we observe that when the transition between the first row and the second row is (0$\rightarrow$1), (1$\rightarrow$0) or (1$\rightarrow$1), the number of bands is one; in the case of (0$\rightarrow$0), the number of bands is zero. From the second row on, if the transition between two adjacent rows is (0$\rightarrow$1), the number of bands increases by 1 which means that a new band begins; otherwise, the number of bands remains unchanged. It is obvious that in the case of (0$\rightarrow$0), the number of bands remains unchanged. For (1$\rightarrow$0) and (1$\rightarrow$1) cases, the 1's belong to the previous bands which have already been counted, thus the number of bands remains unchanged. In this case, we can define the distance between two rows of one column as below. For each column $c$, we have
\begin{displaymath}    d_{1,j} ^c =
 \begin{cases}
    1   &  \text{if row 1 to row $j$ is (0$\rightarrow$1), (1$\rightarrow$0) or (1$\rightarrow$1);} \\
    0	 &  \text{if row 1 to row $j$ is (0$\rightarrow$0)},
 \end{cases}                
\end{displaymath}

\noindent
where $j=2,\cdots,W$. 

\begin{displaymath}    d_{i,j} ^c =
 \begin{cases}
    1   &  \text{if row $i$ to row $j$ is (0$\rightarrow$1); }\\
    0	 &  \text{if row $i$ to row $j$ is (0$\rightarrow$0), (1$\rightarrow$0) or (1$\rightarrow$1)},
 \end{cases}                
\end{displaymath}

\noindent
where $i=2,\cdots,W; j=2,\cdots,W; i\not=j$.
Then the number of bands for a column is $\beta_c = \sum_{i=1}^{W-1}d_{i,(i+1)}^c$. For example, the number of bands for the column $\left(
\begin{array}{ccccc}
1 & 0 & 0 & 1 &1
\end{array}
\right)$$^{\rm T}$ is 2. 

\indent The total number of bands in the RWA matrix is equal to the sum of the number of bands over all columns. We use $B$ to denote the total number of bands in the RWA matrix. 
\begin{equation}
B=\sum_{c=1}^{R}\beta_c=\sum_{c=1}^{R}\sum_{i=1}^{W-1}d_{i,(i+1)}^c=\sum_{i=1}^{W-1}\sum_{c=1}^{R} d_{i,(i+1)}^c.
\end{equation}
Since $d_{i,j}=\sum_{c=1}^{R} d_{i,j}^c$ is the distance between two rows $i$ and $j$, we have $B=\sum_{i=1}^{W-1} d_{i,(i+1)}$, which means that the total number of bands is equal to the summation of the distances between adjacent rows (where distances are defined as above). The Band Minimization Problem (BMP) is to minimize this number for a given RWA by re-ordering the rows of the RWA matrix (equivalent to re-assigning the wavelengths to lightpaths).

\indent We can then transform this problem into a Minimum Weight Hamiltonian Path Problem (\cite{berge1973graphs, libura1991sensitivity}) with a constraint that the first row is chosen. We can construct a graph $G = (V,E)$ consisting of $W$ nodes, each corresponding to one row/wavelength of the RWA matrix. Every node of this graph is connected to every other node with an edge weighted as the distance between the corresponding rows. Hence, the edge $(v_i, v_j)$ has the weight $d_{i,j}$ for every $i$ and $j$. $d_{i,j}$ could be different from $d_{j,i}$ according to the definition of distance above. The graph in this case is a directed graph with edge weights being integers. Then, the Hamiltonian path that visits every node in this graph with the minimum total weight gives the minimum total number of bands of the RWA matrix. Due to the definition of distance clarified above, different first rows/nodes give different link weights in the graph. The Hamiltonian path should start from the node corresponding to the chosen first row for that graph. Every row in the RWA matrix could be chosen as the first row, so we need to construct $W$ different graphs corresponding to $W$ different first rows and get solutions for each graph. The minimum solution among the $W$ solutions is the desired one. 

\section{Band Minimization ILP And Algorithms}\label{sec.label33}

\indent In this section, we first present an ILP formulation that can be used to solve small problem instances, and then present several heuristics that can be used to solve larger problem instances.

\indent Given a set of traffic requests, we first use Dijkstra's algorithm to assign shortest paths and the First-Fit algorithm to assign wavelengths to the given traffic requests. For each connection request, the first available wavelength is assigned to it; if no existing wavelength can serve this request, a new wavelength is used. The RWA matrix ($\Delta$) is then generated according to assignment results and the rules of code generation in the above section. Assume that $W$ wavelengths are used. The RWA matrix may include several empty elements ({\em don't cares}). To complete the matrix, each {\em don't care} is assigned the same value as the one immediately above it in the same column. If the {\em don't care} occurs in the first row, it is assigned the value of 0. The new matrix after assigning {\em don't cares} is denoted as $\overline{\Delta}$. Assume that there are two lightpaths between every pair of nodes in the small network shown in Figure \ref{smallnet}. The port ids are shown around each node in the network. There are 4 nodes in the network with degrees 3, 2, 2, 3, respectively. Correspondingly, there are 6, 3, 3, 6 bits for each node resulting in 18 bits for each wavelength. Figure \ref{RWA_lightpath} shows an RWA example for the given traffic. The corresponding RWA matrix is generated as in Table \ref{RWA_generation}. The entry $(i,j)$ for a node denotes the routing behavior between port $i$ and port $j$ of that node, while the entry $k$ for a node denotes the add/drop behavior of a wavelength at port $k$ of that node. The $\times$s in the matrix denote the {\em don't cares}. The generated RWA matrix after {\em don't cares} assignment is shown in Figure \ref{RWAmatrix}.

\begin{figure}
 \begin{tabular}{cc}
    \begin{minipage}[t]{2.6in}
    \centering
    \includegraphics[scale=0.66]{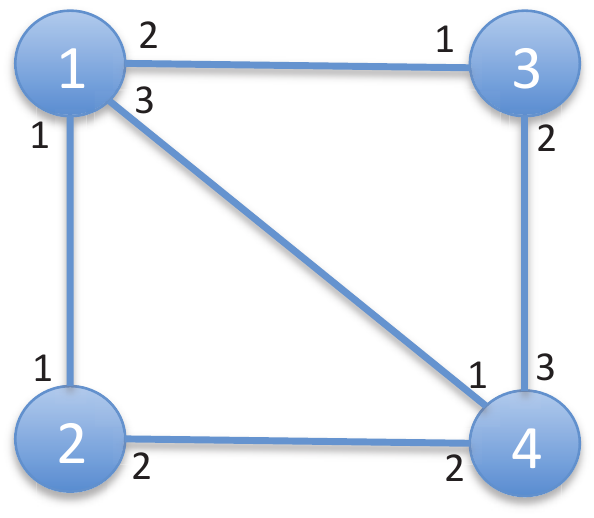}
    \caption{A 4-node mesh network example.}
    \label{smallnet}
    \end{minipage}
    \hspace{0.1in}
    \begin{minipage}[t]{2.8in}
    \centering
    \includegraphics[scale=0.55]{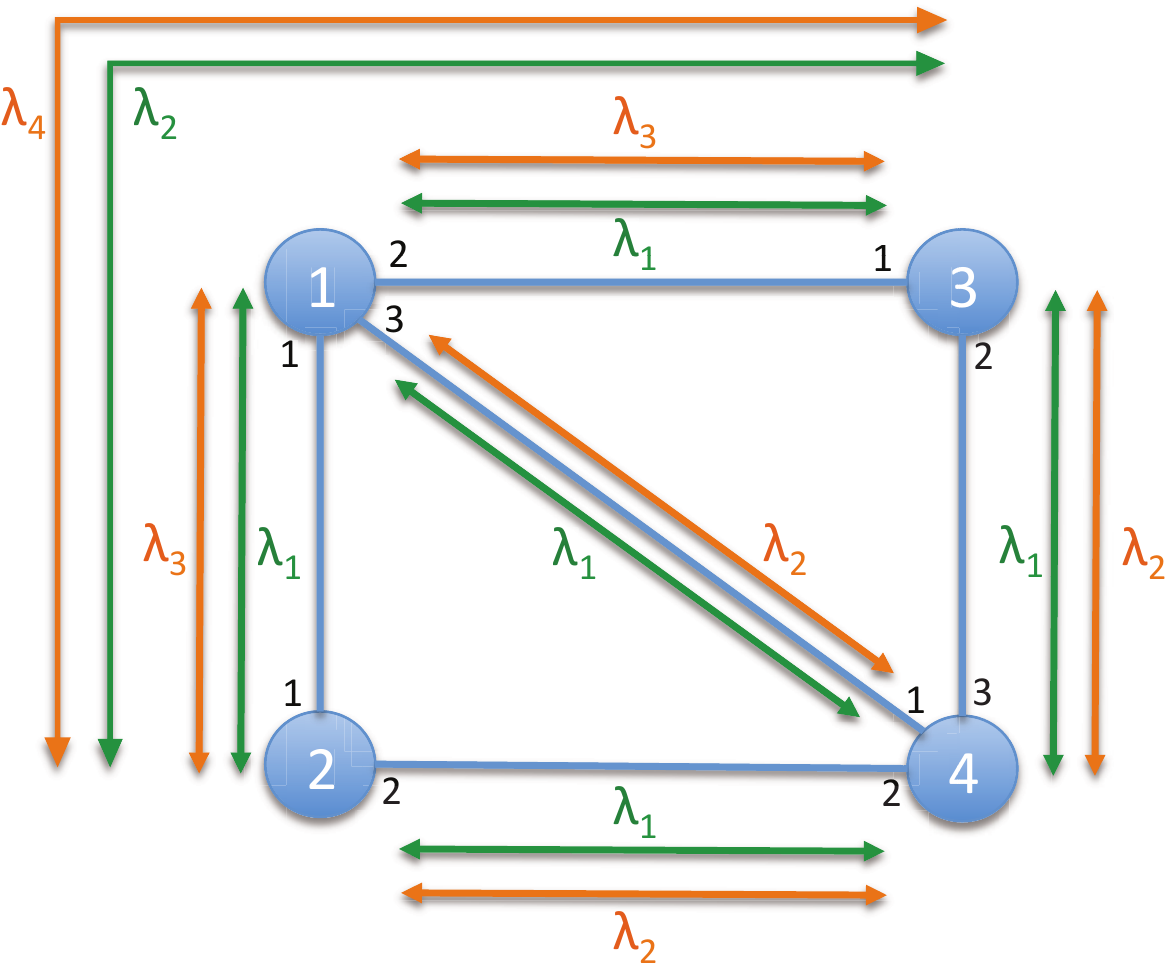}
    \caption{A routing and wavelength assignment example for given traffic.}
    \label{RWA_lightpath}
    \end{minipage}
 \end{tabular}
\end{figure}

\renewcommand\arraystretch{1.2}
\begin{table}\scriptsize
\caption{\label{RWA_generation} RWA Matrix Generation.}
\centering
\begin{tabular}{|c|c|c|c|c|c|c|c|c|c|c|c|c|c|c|c|c|c|c|}
\hline
\multirow{2}{*}{} & \multicolumn{6}{c|}{\bf{Node 1}} & \multicolumn{3}{c|}{\bf{Node 2}} & \multicolumn{3}{c|}{\bf{Node 3}} & \multicolumn{6}{c|}{\bf{Node 4}} \\
\cline{2-7} \cline{8-10} \cline{11-13} \cline{14-19}
& (1,2) & (1,3) & (2,3) & 1 & 2 & 3 & (1,2) & 1 & 2 & (1,2) & 1 & 2 & (1,2) & (1,3) & (2,3) & 1 & 2 & 3\\
\hline
\bf{$\lambda_1$} & 0 & 0 & 0 & 1 & 1 & 1 & 0 & 1 & 1 & 0 & 1 & 1 & 0 & 0 & 0 & 1 & 1 & 1 \\ \hline
\bf{$\lambda_2$} & 1 & 0 & 0 & 0 & 0 & 1 & 0 & 1 & 1 & 0 & 1 & 1 & 0 & 0 & 0 & 1 & 1 & 1 \\ \hline
\bf{$\lambda_3$} & 0 & 0 & 0 & 1 & 1 & $\times$ & 0 & 1 & $\times$ & 0 & 1 & $\times$ & $\times$ & $\times$ & $\times$ & $\times$ & $\times$ & $\times$ \\ \hline
\bf{$\lambda_4$} & 1 & 0 & 0 & 0 & 0 & $\times$ & 0 & 1 & $\times$ & 0 & 1 & $\times$ & $\times$ & $\times$ & $\times$ & $\times$ & $\times$ & $\times$ \\
\hline
\end{tabular}
\vspace{10pt}
\end{table}

\begin{figure}
\begin{center}
\includegraphics[scale=0.55]{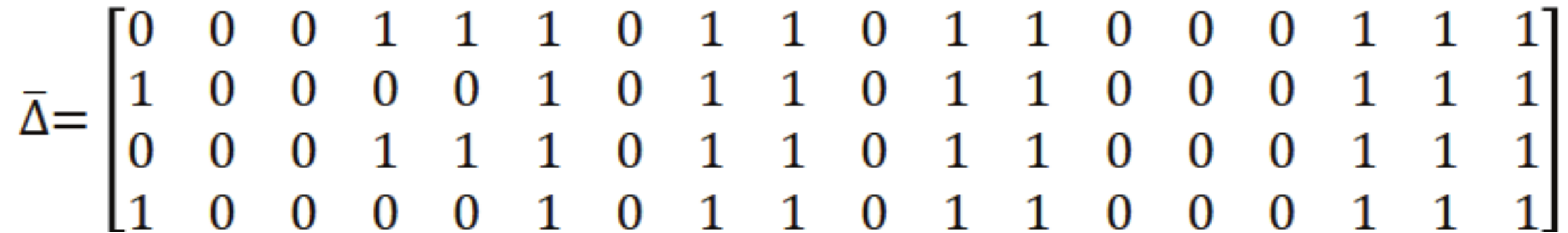}
\caption{RWA matrix example.}
\label{RWAmatrix}
\end{center}
\end{figure}

\indent The second step is to construct $W$ directed graphs according to the RWA matrix with each row being the first node to be visited in the Hamiltonian Path. For each graph, a Minimum Weight Hamiltonian Path Problem \cite{yao2013jop, moon2017understanding, yao-ijpp, liu17graphene, xia2018uksm,yao_hpca18, yan2018parallel, liu2016ibfs, liu2015enterprise} with given first node is to be solved and the desired result is the minimum result among those $W$ solutions. Figure~\ref{HMP1} gives an example of the constructed graph when row 1 in $\overline{\Delta}$ of Figure~\ref{RWAmatrix} is chosen as the first node. Each node corresponds to a row in $\overline{\Delta}$. The number on each link denotes the link weight, which is the distance between the corresponding two rows. For this graph, a Hamiltonian Path with total weight 11 is chosen as the minimum weight Hamiltonian Path. The path is shown as red lines in the figure. The constructed graph when row 2 in $\overline{\Delta}$ is chosen as the first node and the resulting Hamiltonian Path is shown in Figure~\ref{HMP2}. For each of the other two graphs generated by choosing other rows in $\overline{\Delta}$ as the first node, a Hamiltonian Path can also be chosen. Then the minimum among the four weights is the final result for this example.

\begin{figure}
 \begin{tabular}{cc}
    \begin{minipage}[t]{2.8in}
    \centering
    \includegraphics[scale=0.53]{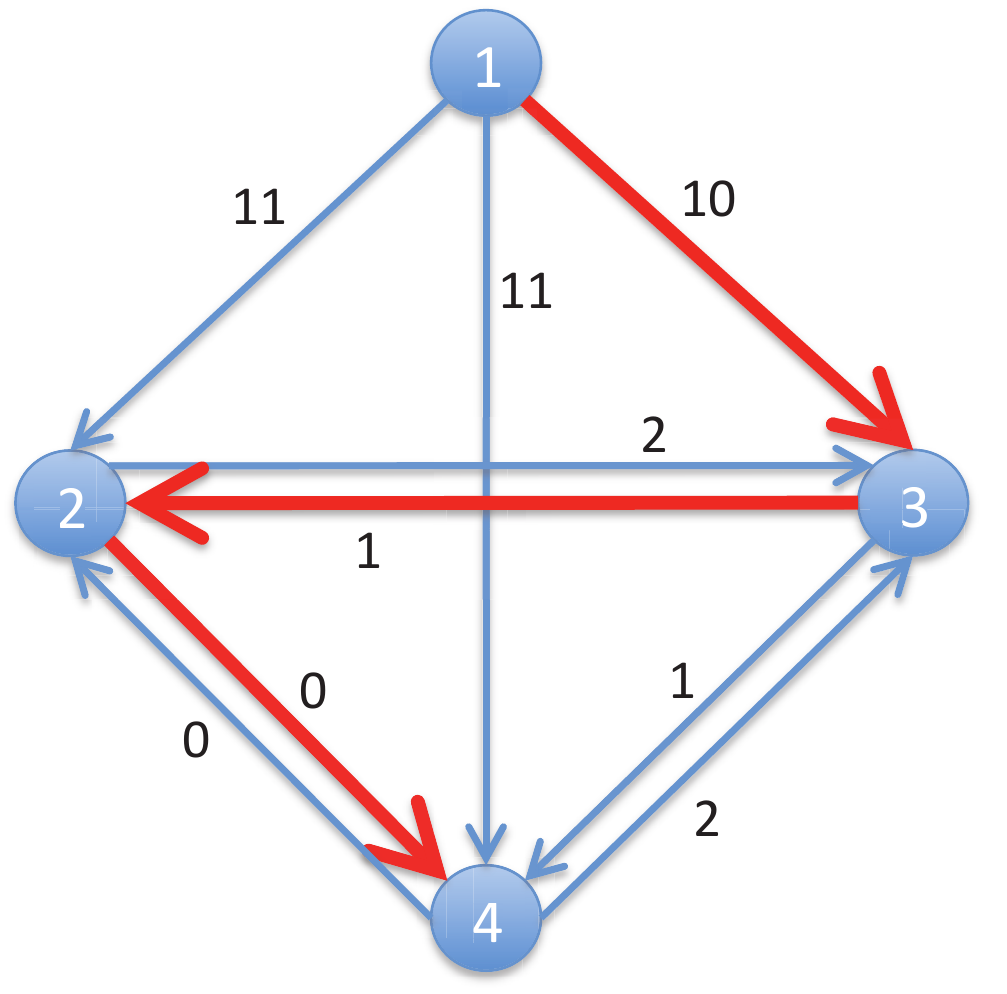}
    \caption{Graph generated from $\overline{\Delta}$ with first row chosen as the first node.}
    \label{HMP1}
    \end{minipage}
    \hspace{0.2in}
    \begin{minipage}[t]{2.8in}
    \centering
    \includegraphics[scale=0.53]{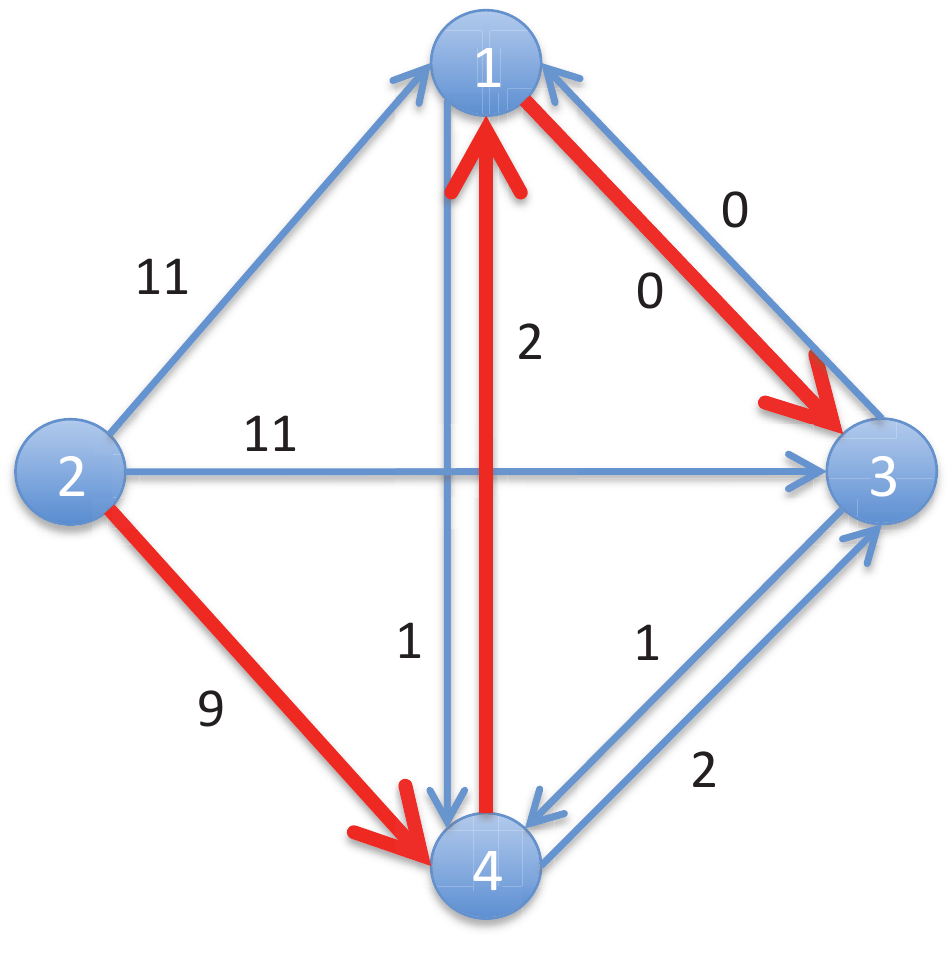}
    \caption{Graph generated from $\overline{\Delta}$ with second row chosen as the first node.}
    \label{HMP2}
    \end{minipage}
 \end{tabular}
\end{figure}

\indent We now present an ILP to solve the MWHPP for each of the $W$ constructed graphs.

\subsection{ILP Formulation}

\indent Given a constructed graph with chosen first node and corresponding link weights calculated from the complete RWA matrix, the goal is to find a Minimum Weight Hamiltonian Path Problem \cite{ liu2013gpu, prefetchguard18, mittal2016computational, prodact-ijpp, huang2014big, yao2017glsvlsi, zhao2017network, 2018asilomar} for this graph. The ILP formulation is as follows.

\begin{center}Objective: Minimize $ \sum_{i=1}^{W} \sum_{j=1,j\not=i}^{W}x_{i, j}\cdot d_{i,j},$\end{center}

\noindent
where $d_{i,j}$ is the link weight between node $i$ and $j$.

Variables:

a)
\begin{equation*}
x_{i, j}=
\begin{cases}
                1, \quad\makebox{if node } j \makebox{ is visited immediately following node } i; \\
                0, \quad\makebox{otherwise }
\end{cases}
 \end{equation*}
 
b) 
\begin{equation*}
a_{k, i}=
\begin{cases}
                1, \quad\makebox{if node } i \makebox{ is visited at step } k; \\
                0, \quad\makebox{otherwise }
\end{cases}
 \end{equation*}
 
 Constraints:
 
a) Each node must be visited exactly once.
\begin{equation*}
\sum_{k=1}^{W} a_{k,i} = 1, \forall i=1,2,\dots,W.
\end{equation*}

b) Only one node is visited at each step.
\begin{equation*}
\sum_{i=1}^{W} a_{k,i} = 1, \forall k=1,2,\dots,W.
\end{equation*}

c) If node $j$ is visited immediately following node $i$, then node $j$ is one step further than node $i$.
\begin{equation*}
\makebox{if} \mspace{5mu}  x_{i,j} = 1, \makebox{then} \mspace{5mu} \sum_{k=1}^{W} k\cdot a_{k,j} - \sum_{k=1}^{W} k\cdot a_{k,i}=1, \forall i,j=1,2,\dots,W.
\end{equation*}

d) When a node is visited at the first step, which means that this node is the starting node of the Hamiltonian Path, there should be no node visited before this node. Otherwise, for each node visited at later steps, there should be exactly one node visited just before this node.
\begin{equation*}
\sum_{j=1,j\not=i}^{W} x_{j,i} + a_{1,i} = 1, \forall i=1,2,\dots,W.
\end{equation*}

e) When a node is visited at the last step, which is the $W$th step, there should be no node visited after this node. Otherwise, there should be exactly one node visited immediately following this node.
\begin{equation*}
\sum_{j=1,j\not=i}^{W} x_{i,j} + a_{W,i} = 1, \forall i=1,2,\dots,W.
\end{equation*}

\subsection {Heuristic Solutions}

\indent We now present three heuristic algorithms to solve the MWHPP. 

\subsubsection{Nearest Neighbor (NN)}
\indent The RWA matrix with all {\em don't cares} assigned ($\overline{\Delta}$) is the input to this algorithm. Each row in $\overline{\Delta}$ can be visited as the first node in the path. For each chosen first row, a distance matrix $D$ can be generated according to the definitions in the previous section. The first step is to find the minimum element in the row corresponding to the chosen first row id in the distance matrix. This means that the row having minimum distance from the first row is chosen. For example, if $D_{1,i}$ is the minimum element in the first row of $D$, then row $i$ is chosen to be the second node in the Hamiltonian path and row $i$ is marked as visited. In the next step, find the minimum element $D_{i,j}$ for unvisited row $j$'s in the $i$th row of $D$, then row $j$ is chosen as the next node to be visited in the Hamiltonian path and marked as visited. Repeat the steps until all rows are visited. For each graph generated by each chosen first row, a Hamiltonian path is created. The path with minimum total weight gives the desired order of rows. The outline of this heuristic is shown in Algorithm~\ref{NN}. In Algorithm~\ref{NN}, $BandsCal$ is a function to calculate the total number of bands in a given matrix. $DistanceMatrixGen$ is a function to generate the distance matrix $D$ according to the first row id and the distance calculation rule described in the previous section. The algorithm $NearestNeighbor$ is used to determine new order of the matrix given the $DistanceMatrixGen$ and the first row id.

\begin{algorithm}
\caption{Pseudocode for the Nearest Neighbor heuristic}
\label{NN}
\begin{algorithmic}[1]
\REQUIRE $\overline{\Delta}$
\ENSURE $\Delta^\prime$ with reordered rows of $\overline{\Delta}$\\
\STATE {\em Initialization}: $min\_total\_bands$ = BandsCal($\overline{\Delta}$);\
\FOR {chosen first row id $f=1,2,\ldots, W$}
\STATE $D$ = DistanceMatrixGen($\overline{\Delta}$, $f$);\\
\STATE $[RowOrder, minTotalWeight]$=NearestNeighbour($D$, $f$);\\
	\IF {$minTotalWeight < min\_total\_bands$}
		\STATE $min\_total\_bands=minTotalWeight$;\\
		\STATE $NewRowOrder = RowOrder$;\\
	\ENDIF
\ENDFOR
\end{algorithmic}
\end{algorithm}

\begin{algorithm}
\caption{NearestNeighbor($D$, $f$)}
\begin{algorithmic}[1]
\STATE Initialize $OrigRows$ as a set of all row ids;
\STATE Insert $f$ into $RowOrder$ as the first row and remove $f$ from $OrigRows$;
\STATE $currentRow=f$;
\WHILE	{$OrigRows$ not empty}
	\STATE $mindis=\infty$;
	\FORALL { $j$ in $OrigRows$ }
		\IF {$D[currentRow][j]<mindis$}
			\STATE $mindis=D[currentRow][j]$;
			\STATE $nextRow=j$;
		\ENDIF
	\ENDFOR
	\STATE remove $nextRow$ from $OrigRows$ and insert it into $RowOrder$;
	\STATE $minTotalWeight=minTotalWeight+mindis$;
	\STATE $currentRow=nextRow$;
\ENDWHILE
\RETURN $RowOrder, minTotalWeight$;
\end{algorithmic}
\end{algorithm}

\subsubsection{ReassignDC (RDC)}
\indent This heuristic is based on the previous Nearest Neighbor heuristic. It is an improvement on NN in terms of reassigning {\em don't cares} after getting the new order of rows. In this case, the original RWA ($\Delta$) is reordered according to the $NewRowOrder$ and then each {\em don't care} is assigned the same value as its adjacent row above it in the same column to form $\Delta^\prime$. Then $BandsCal$ is applied to $\Delta^\prime$ to give the total number of bands in this new matrix. The Pseudocode for ReassignDC is in Algorithm~\ref{RDC}.
\begin{algorithm}
\caption{Pseudocode for the ReassignDC heuristic}
\label{RDC}
\begin{algorithmic}[1]
\REQUIRE $\overline{\Delta}$
\ENSURE $\Delta^\prime$ with reordered rows of $\Delta$ and reassigned {\em don't cares}\\
\STATE {\em Initialization}: $min\_total\_bands$ = BandsCal($\overline{\Delta}$);\
\FOR {chosen first row id $f=1,2,\ldots, W$}
\STATE $D$ = DistanceMatrixGen($\overline{\Delta}$, $f$);\\
\STATE $[RowOrder, minTotalWeight]$=NearestNeighbor($D$, $f$);\\
	\IF {$minTotalWeight < min\_total\_bands$}
		\STATE $min\_total\_bands=minTotalWeight$;\\
		\STATE $NewRowOrder = RowOrder$;\\
	\ENDIF
\ENDFOR
\STATE $\Delta^\prime$ equals to the reordering of $\Delta$ according to $NewRowOrder$;
\STATE Assign {\em don't cares} in $\Delta^\prime$;
\end{algorithmic}
\end{algorithm}

\subsubsection{SortTraffic (ST)}
\indent In this heuristic, the traffic requests are first sorted according to their path length in descending order. Then the FirstFit wavelength assignment algorithm is applied to generate the RWA matrix. Thus the RWA matrix and the original number of bands are different from those of the previous two heuristics. After sorting traffic and performing wavelength assignment, the ReassignDC heuristic is applied to this RWA matrix.

\indent The worst case time complexity of the $NearestNeighbor$ algorithm is $O(W^2)$, while that of the ILP is $O(2^{W^2})$. Since we need to run the $NearestNeighbor$ algorithm or the ILP $W$ times for $W$ different first rows, the worst-case time complexity is $O(W^3)$ for all heuristics and $O(W \cdot 2^{W^2})$ for the ILP.

\section{Simulation Results}\label{sec.label314}

\indent We present results for a small 5-node network (Figure~\ref{fig:network5}), two real network topologies - the NSFNET network (Figure~\ref{fig:NSF}) and the pan-European network (Figure~\ref{fig:pan}), and a 5$\times$5 regular mesh network. The NSFNET network consists of 14 nodes and 21 links, while the pan-European network consists of 28 nodes and 43 links. For the small 5-node network topology, we compare the total number of nonuniform wavebands calculated by the ILP and heuristics to the total number of switching elements in a wavelength switching architecture. Due to the high complexity of the ILP, we are only able to obtain results for the small network. Therefore, only the results in terms of the total number of nonuniform wavebands achieved from the heuristics are compared to the total number of switching elements in a wavelength switching architecture for the NSF, pan-European and 5$\times$5 regular mesh networks.

\begin{figure}
 \begin{tabular}{cc}
    \begin{minipage}[t]{2.7in}
    \centering
    \includegraphics[scale=0.58]{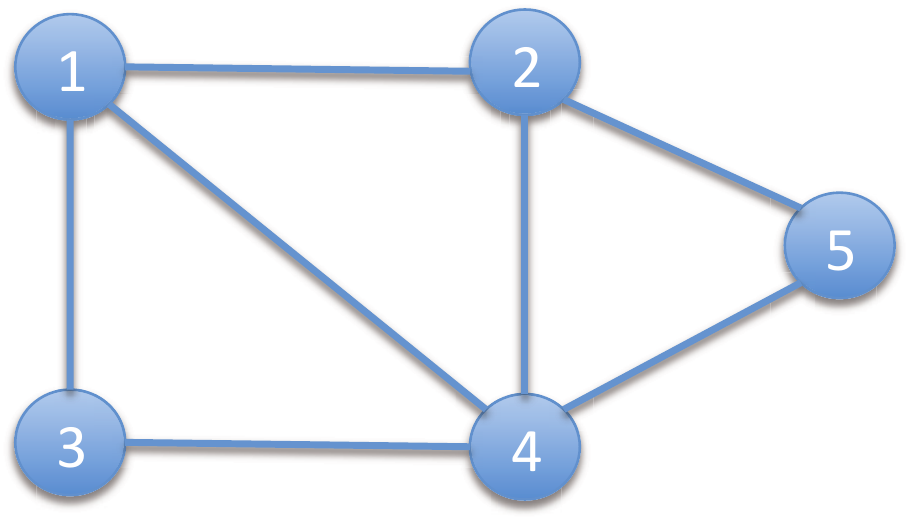}
    \caption{\label{fig:network5} A small 5-node network.}
    \end{minipage}
    \hspace{0.1in}
    \begin{minipage}[t]{3in}
     \centering
    \includegraphics[scale=0.3]{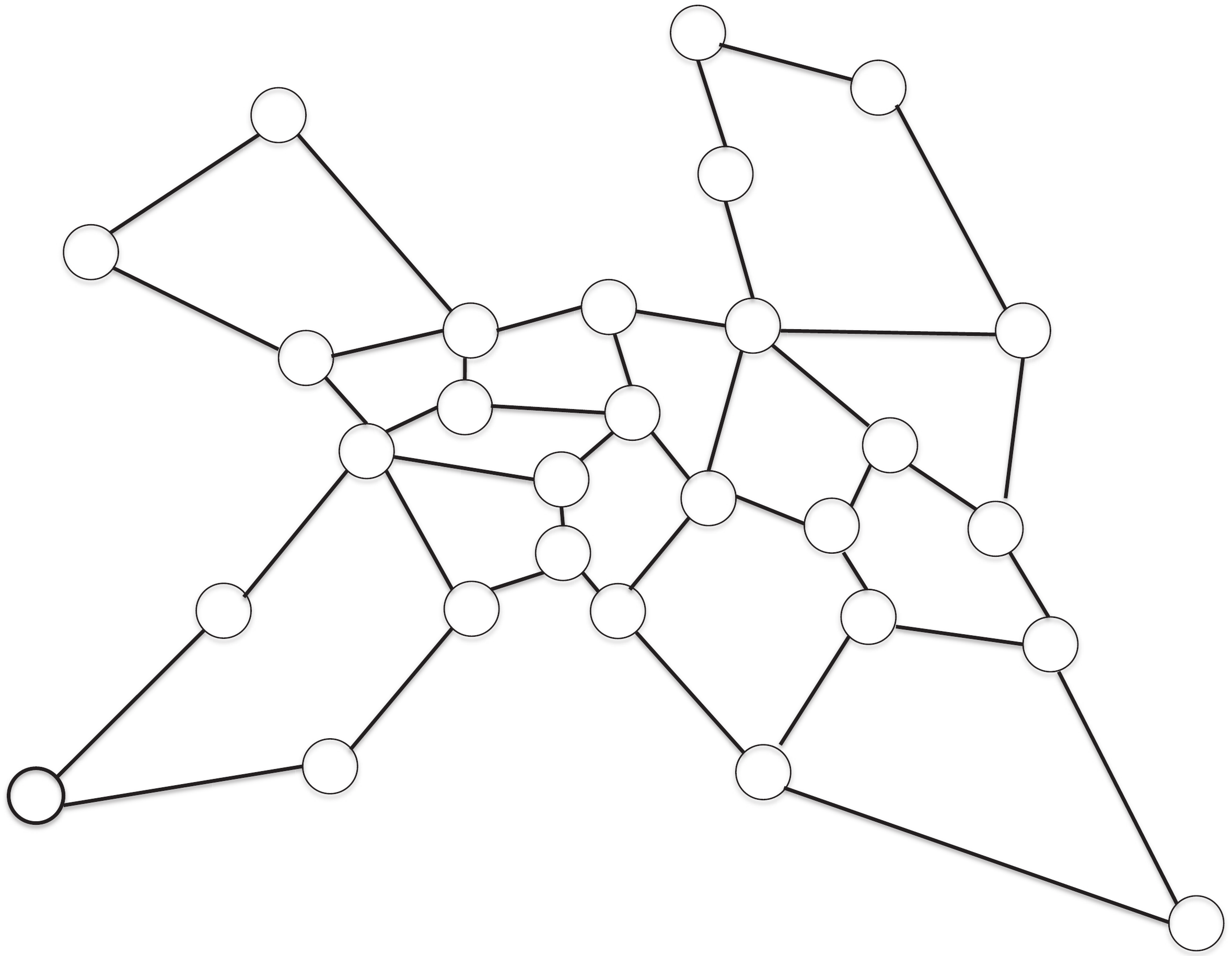}
    \caption{\label{fig:pan} 28-Node Pan-European network.}
    \end{minipage}
 \end{tabular}
\end{figure}
 
\indent The traffic requests are generated as follows. Each request represents a connection between a pair of nodes in the mesh network. The source node and the destination node are uniformly randomly chosen from the node set. Dijkstra's algorithm is applied to find the shortest path between the source and destination nodes. 

\begin{table}
\caption{\label{table:WDMwavebanding_ilp} Performance of ILP and algorithms for 5-node network.}
\centering
\scriptsize
\begin{tabular}{|p{2.1cm}<{\centering}|p{0.8cm}<{\centering}|p{1cm}<{\centering}|p{1.5cm}<{\centering}|p{1cm}<{\centering}|p{1.5cm}<{\centering}|p{1cm}<{\centering}|p{1.5cm}<{\centering}|p{0.5cm}<{\centering}|}
\hline
\multirow{3}{*}{\bf{\# of Requests}} & \multirow{3}{*}{\bf{SWN}} & \multicolumn{2}{c|}{\bf{NN}}& \multicolumn{2}{c|}{\bf{RDC}} & \multicolumn{2}{c|}{\bf{ST}} & \multirow{3}{*}{\bf{ILP}}\\
\cline{3-4} \cline{5-6} \cline{7-8}
&& Orig. \# of bands & Optimized \# of bands & Orig. \# of bands & Optimized \# of bands & Orig. \# of bands & Optimized \# of bands & \\
\hline
\bf{20} &43& 20 & 16 & 20 & 16 & 20 & 16 & 16\\ \hline
\bf{30} &64& 27 & 17 & 27 & 17 & 22 & 18 & 16\\ \hline
\bf{40} & 87& 34 & 17 & 34 & 17 & 28 & 18 & 16\\ \hline
\end{tabular}
\vspace{10pt}
\end{table}

\indent Results for the small 5-node mesh network are shown in Table~\ref{table:WDMwavebanding_ilp}. $SWN$ in the table refers to the number of switching elements used in the original wavelength switching network without banding. For each heuristic, the original number of bands (before applying the MWHPP algorithm) and the new number of bands achieved by the heuristic are presented. Since NN and RDC apply the routing and wavelength assignment to the requests in the order that they are generated, the original number of bands are the same for these two heuristics. ST first sorts the requests in descending order of their path length, thus the RWA matrix, and hence the original number of bands, for ST differs from others. From the table, we can see that the results of heuristics are very close to the ILP results, thus demonstrating the effectiveness of the heuristics. Besides, the reduction in bands is significant when comparing $SWN$ with the number of bands achieved by the heuristics. 

\begin{table}
\caption{\label{table:WDMwavebanding_nsf} Performance of algorithms for NSF network.}
\centering
\scriptsize
\begin{tabular}{|p{2.2cm}<{\centering}|p{0.8cm}<{\centering}|p{1.1cm}<{\centering}|p{1.6cm}<{\centering}|p{1.1cm}<{\centering}|p{1.6cm}<{\centering}|p{1.1cm}<{\centering}|p{1.6cm}<{\centering}|}
\hline
\multirow{3}{*}{\bf{\# of Requests}} & \multirow{3}{*}{\bf{SWN}} & \multicolumn{2}{c|}{\bf{NN}}& \multicolumn{2}{c|}{\bf{RDC}} & \multicolumn{2}{c|}{\bf{ST}} \\
\cline{3-4} \cline{5-6} \cline{7-8}
&& Orig. \# of bands & Optimized \# of bands & Orig. \# of bands & Optimized \# of bands & Orig. \# of bands & Optimized \# of bands\\
\hline
%\bf{10} &33& 24 & 24 & 24 & 24 & 28 & 24 \\ \hline
\bf{50} &170& 117 & 115 & 117 & 103 & 120 & 101 \\ \hline
%\bf{100} & 292& 185 & 185 & 185 & 170 & 172 & 146 \\ \hline
\bf{500} &1579& 865 & 655 & 865 & 596 & 844 & 452 \\ \hline
%\bf{1000} &2721& 1513 & 1245 & 1513 & 1029 & 1260 & 839 \\ \hline
\bf{5000} & 15606 & 8360 & 4607 & 8360 & 3924 & 7884 & 2538 \\ \hline
\end{tabular}
\vspace{10pt}
\end{table}

\begin{table}
\caption{\label{table:WDMwavebanding_pan} Performance of algorithms for pan-European network.}
\centering
\scriptsize
\begin{tabular}{|p{2.2cm}<{\centering}|p{0.8cm}<{\centering}|p{1.1cm}<{\centering}|p{1.6cm}<{\centering}|p{1.1cm}<{\centering}|p{1.6cm}<{\centering}|p{1.1cm}<{\centering}|p{1.6cm}<{\centering}|}
\hline
\multirow{3}{*}{\bf{\# of Requests}} & \multirow{3}{*}{\bf{SWN}} & \multicolumn{2}{c|}{\bf{NN}}& \multicolumn{2}{c|}{\bf{RDC}} & \multicolumn{2}{c|}{\bf{ST}} \\
\cline{3-4} \cline{5-6} \cline{7-8}
&& Orig. \# of bands & Optimized \# of bands & Orig. \# of bands & Optimized \# of bands & Orig. \# of bands & Optimized \# of bands\\
\hline
%\bf{10} &33& 24 & 24 & 24 & 24 & 28 & 24 \\ \hline
\bf{50} &186 & 131 & 130 & 131 & 129 & 123 & 120 \\ \hline
%\bf{100} & 292& 185 & 185 & 185 & 170 & 172 & 146 \\ \hline
\bf{500} &2061& 1141 & 1083 & 1141 & 940 & 1060 & 823 \\ \hline
%\bf{1000} &2721& 1513 & 1245 & 1513 & 1029 & 1260 & 839 \\ \hline
\bf{5000} & 20384& 10734 & 8980 & 10734 & 7419 & 9635 & 5070 \\ \hline
\end{tabular}
\vspace{10pt}
\end{table}

\begin{table}
	\caption{\label{table:WDMwavebanding_mesh} Performance of algorithms for 5$\times$5 regular mesh network.}
	\centering
	\scriptsize
	\begin{tabular}{|p{2.2cm}<{\centering}|p{0.8cm}<{\centering}|p{1.1cm}<{\centering}|p{1.6cm}<{\centering}|p{1.1cm}<{\centering}|p{1.6cm}<{\centering}|p{1.1cm}<{\centering}|p{1.6cm}<{\centering}|}
		\hline
		\multirow{3}{*}{\bf{\# of Requests}} & \multirow{3}{*}{\bf{SWN}} & \multicolumn{2}{c|}{\bf{NN}}& \multicolumn{2}{c|}{\bf{RDC}} & \multicolumn{2}{c|}{\bf{ST}} \\
		\cline{3-4} \cline{5-6} \cline{7-8}
		&& Orig. \# of bands & Optimized \# of bands & Orig. \# of bands & Optimized \# of bands & Orig. \# of bands & Optimized \# of bands\\
		\hline
		%\bf{10} &33& 24 & 24 & 24 & 24 & 28 & 24 \\ \hline
		\bf{50} &211& 145 & 138 & 145 & 136 & 143 & 142 \\ \hline
		%\bf{100} & 292& 185 & 185 & 185 & 170 & 172 & 146 \\ \hline
		\bf{500} &2156 & 1211 & 1070 & 1211 & 957 & 1041 & 737 \\ \hline
		%\bf{1000} &2721& 1513 & 1245 & 1513 & 1029 & 1260 & 839 \\ \hline
		\bf{5000} & 21706 & 11367 & 8328 & 11367 & 7028 & 9498 & 3943 \\ \hline
	\end{tabular}
\vspace{10pt}
\end{table}

\indent Table \ref{table:WDMwavebanding_nsf}, \ref{table:WDMwavebanding_pan} and \ref{table:WDMwavebanding_mesh} show the numerical results for the NSF, pan-European and 5$\times$5 regular mesh network, respectively. We can see that the $SWN$ and the number of bands for pan-European and 5$\times$5 regular mesh network are larger than that of the NSF network under the same traffic demand. This is caused by the longer average path length and more traversing switches for the connection requests in larger networks. From all the tables, we can see that heuristic ST works well in the large networks and for a large number of requests. For a better visualization, a comparison among the heuristics for NSF network is shown in Figure \ref{WDMwavebanding_comp}. Sorting the requests in descending order of the path length and then assigning wavelengths first to longer paths gives more opportunity for wavebanding. The heuristic itself achieves a large further reduction in the number of bands. There are also significant reductions in the number of bands when compared with $SWN$. This demonstrates that the number of switching elements can be reduced by a large amount using waveband switching compared to wavelength switching. 

\begin{figure}
	\centering
	\includegraphics[scale=0.5]{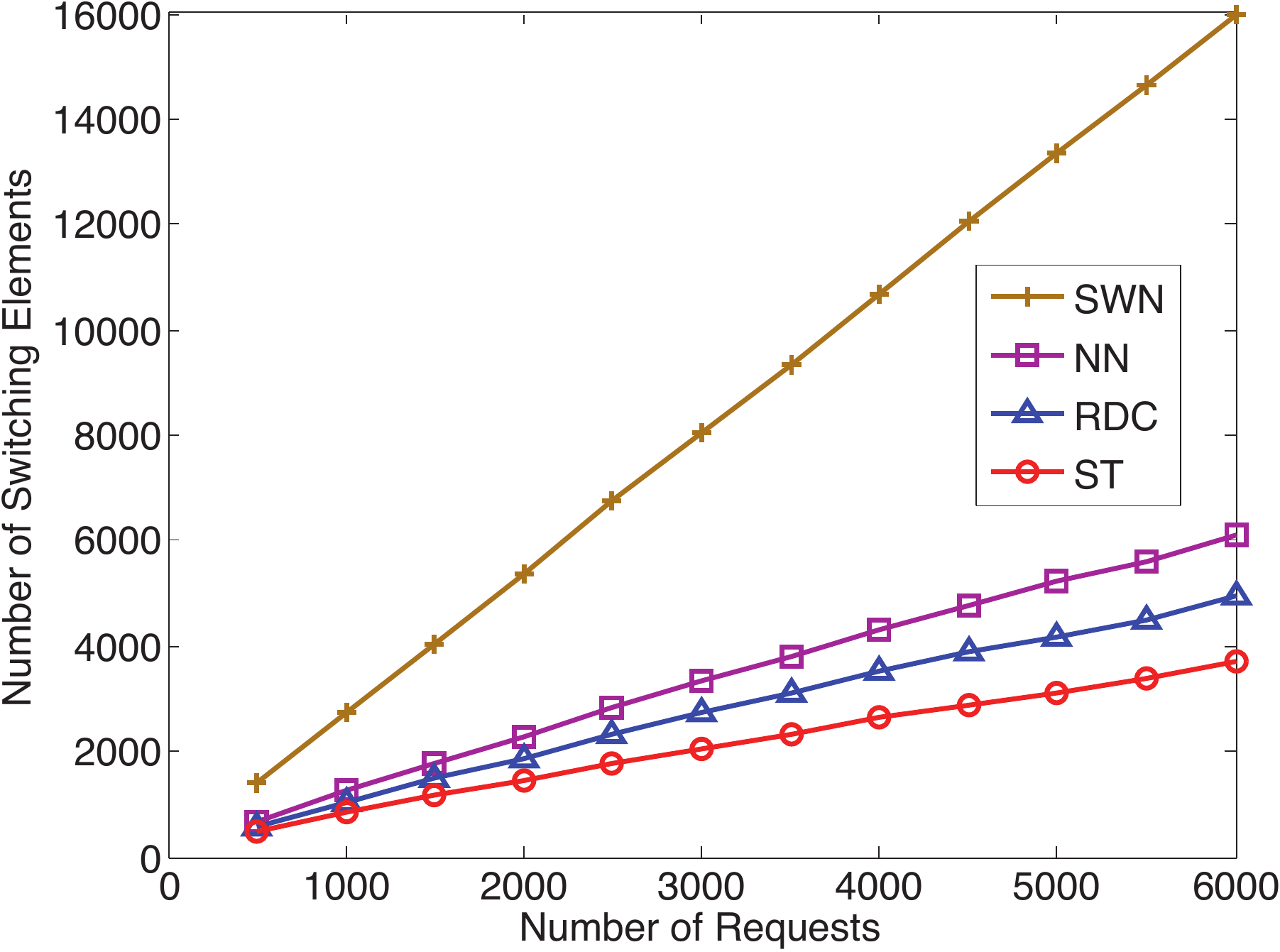}
	\caption{\label{WDMwavebanding_comp} A comparison of heuristic results for NSF network.}
\end{figure}

\section{Application to Dynamic Stochastic Traffic}\label{sec.label34}

\indent In this section, we apply the proposed band framework to the case of dynamic stochastic traffic. We first calculate the nonuniform bands for static all-to-all traffic (one lightpath between every pair of nodes) for the given network, and then assume that the wavelengths are banded accordingly. The minimum number of wavelengths required by the all-to-all traffic is denoted by $W_m$. The total number of available wavelengths in the system, which is denoted by $W_s$, could be different from $W_m$. We assume that $W_s>W_m$. In this case, the bands calculated for the deterministic traffic can be proportionally expanded according to the ratio of $W_s$ to $W_m$. Assume $r = W_s$mod$W_m$ and $n = \floor*{W_s/W_m}$. Given the reordered RWA matrix $\Delta^\prime$ for all-to-all traffic, $r$ rows of $\Delta^\prime$ are duplicated $n$ times, while the remaining $W_m - r$ rows need to be duplicated $n-1$ times.

\indent The dynamic traffic requests arriving to the network are generated according to a Poisson process in our study. Each request has exponential holding time. The source and destination nodes for each request are uniformly randomly chosen from the node set. To accommodate a connection request, a wavelength with available waveband switches on all nodes along the path is needed. 

\indent The First-Fit algorithm is adapted to make wavelength assignment decisions for incoming connection requests. For each source-destination node pair, a wavelength set is pre-calculated based on the designated wavelength in $\Delta^\prime$ for the all-to-all traffic and the expansion metrics $r$ and $n$. For each incoming connection request, we first check the designated wavelength set for this node pair starting from the first wavelength in the set. If no wavelength in this set is available for the request, the availabilities of other remaining wavelengths are examined one by one and the first available wavelength is chosen to accommodate this request. If no wavelength is available at the time the request arrives, the connection request is blocked.

\indent We conduct performance evaluation through simulations, and compare the blocking probability of dynamic traffic demands for the proposed heuristics. In our simulation, each request has a mean holding time of 1 (arbitrary time unit), and the arrival rate of traffic requests is varied in order to examine the performance under varying offered loads. Wavelength conversion is not considered here. For the networks, we assume that each link has two fibers, each with 80 wavelengths. Thus $W_s=160$. 

\begin{figure}
	\centering
	\includegraphics[scale=0.52]{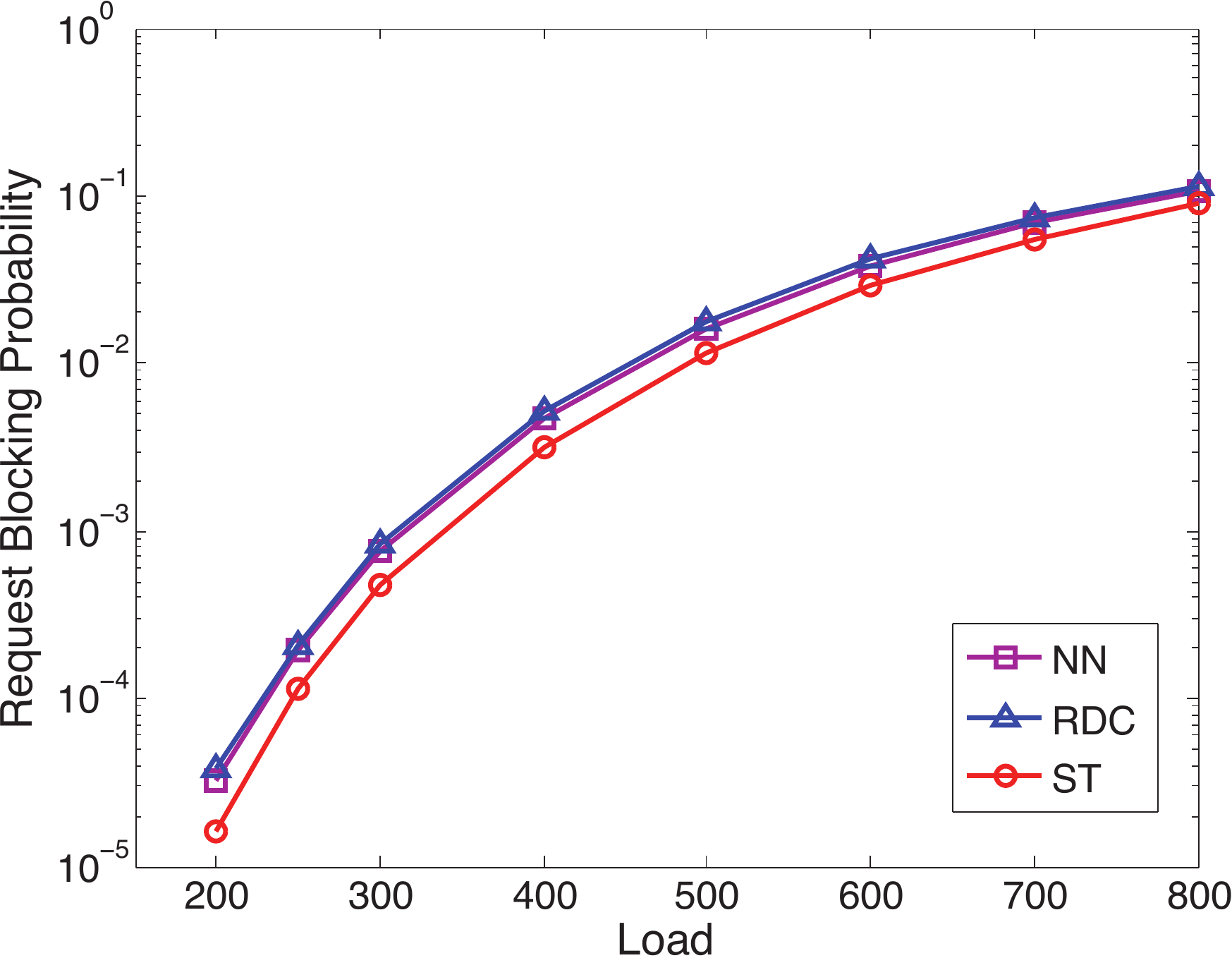}
	\caption{\label{WDMwavebanding_dynamic_NSF} Request blocking probability versus load in the NSF network for $W_s=160$.}
\end{figure}

\begin{figure}	
	\centering
	\includegraphics[scale=0.52]{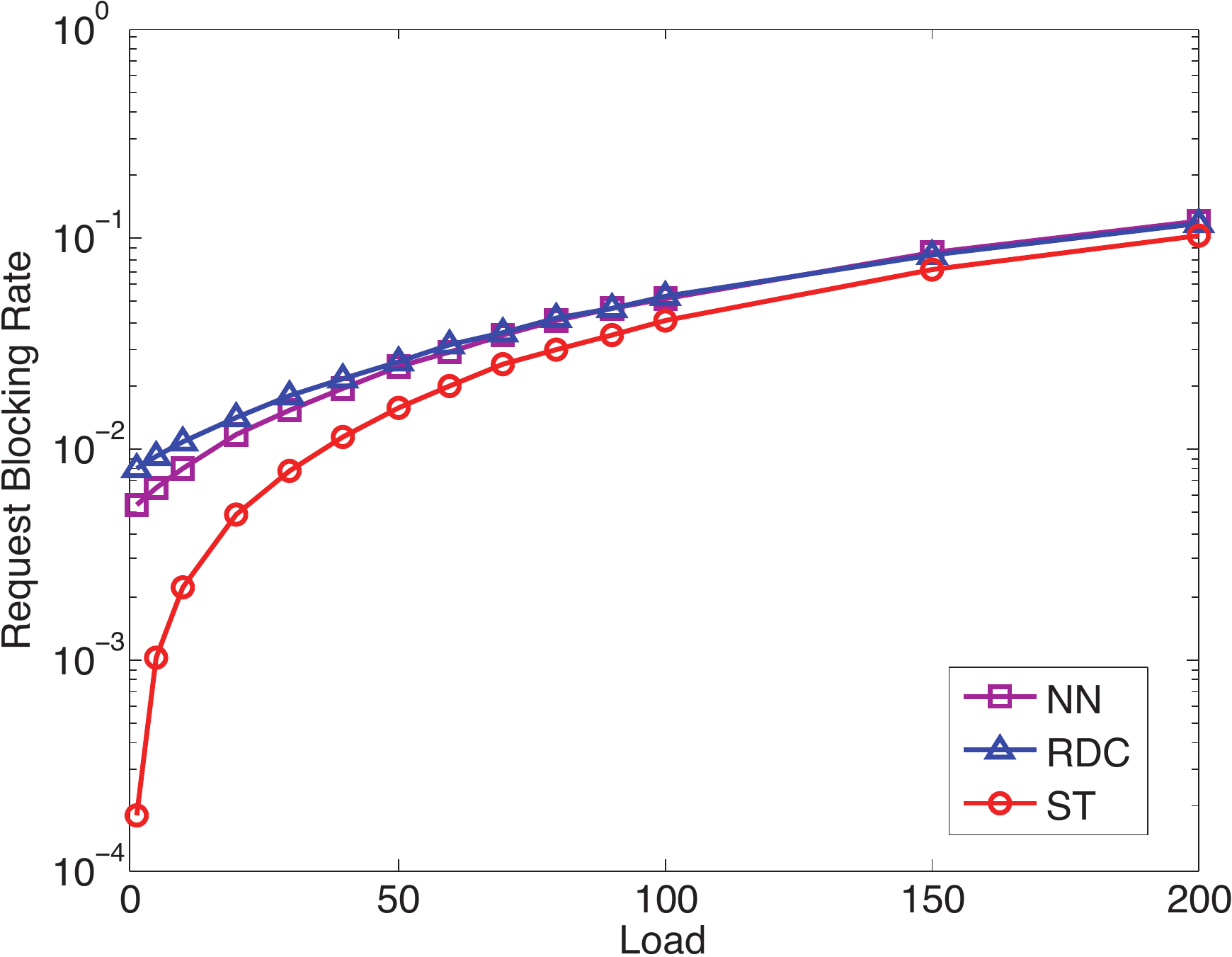}
	\caption{\label{WDMwavebanding_dynamic_euro} Request blocking probability versus load in the pan-European network for $W_s=160$.}
\end{figure}

\indent Figures \ref{WDMwavebanding_dynamic_NSF} and \ref{WDMwavebanding_dynamic_euro} show the blocking probability versus load for the NSF network and the pan-European network, respectively. Each point in the figures is obtained by simulating one million requests after 10000 network warm-up requests in order to obtain steady-state blocking rates. In Figure \ref{WDMwavebanding_dynamic_NSF}, the request arrival rate varies from 200 to 800, while in Figure \ref{WDMwavebanding_dynamic_euro}, the request arrival rate varies from 5 to 200. These ranges were chosen to obtain blocking rates in the desired range of approximately $0.1$ to $0.0001$. From the figures, we can see that ST works better than NN and RDC in terms of blocking rate of traffic demands. By using ST in the NSF network when $W_s=160$, the pre-calculated $W_m=28$, the $SWN=568$ and the number of required wavebands is 257. After the expansion described above, the number of required wavebands for this network is still 257, which is much less than the number of required switches if using wavelength switching. For the pan-European network, the pre-calculated $W_m=160$, the $SWN=3066$, and the number of required non-uniform wavebands for this network is 1253. We see that the performance for the pan-European network is worse than that of the NSF network when $W_s=160$. This is because $W_s=W_m$, and the set of available wavelengths for any connection request is limited in the pan-European network. Figure \ref{WDMwavebanding_BvsW_NSF} and Figure \ref{WDMwavebanding_BvsW_Euro}, respectively, show the request blocking probability versus number of wavelengths in the system ($W_s$) when using ST for NSF network at a load of 800, and the pan-European network at a load of 200.

\begin{figure}
	\begin{tabular}{cc}
		\begin{minipage}[t]{2.8in}
			\centering
			\includegraphics[scale=0.38]{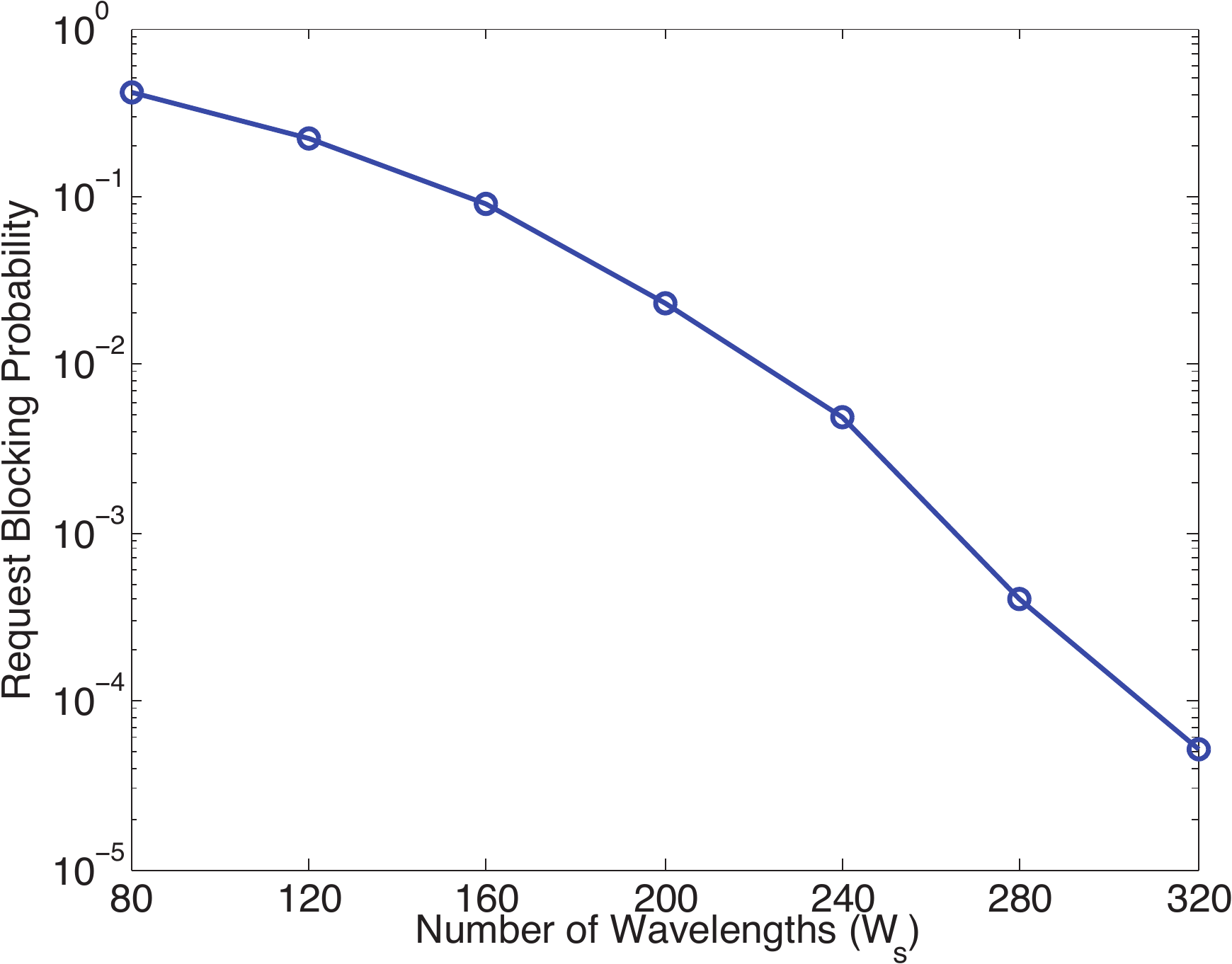}
			\caption{\label{WDMwavebanding_BvsW_NSF} Request blocking probability versus number of wavelengths in the NSF network at a load of 800 Erlangs.}
		\end{minipage}
		\hspace{0.1in}
		\begin{minipage}[t]{2.8in}
			\centering
			\includegraphics[scale=0.38]{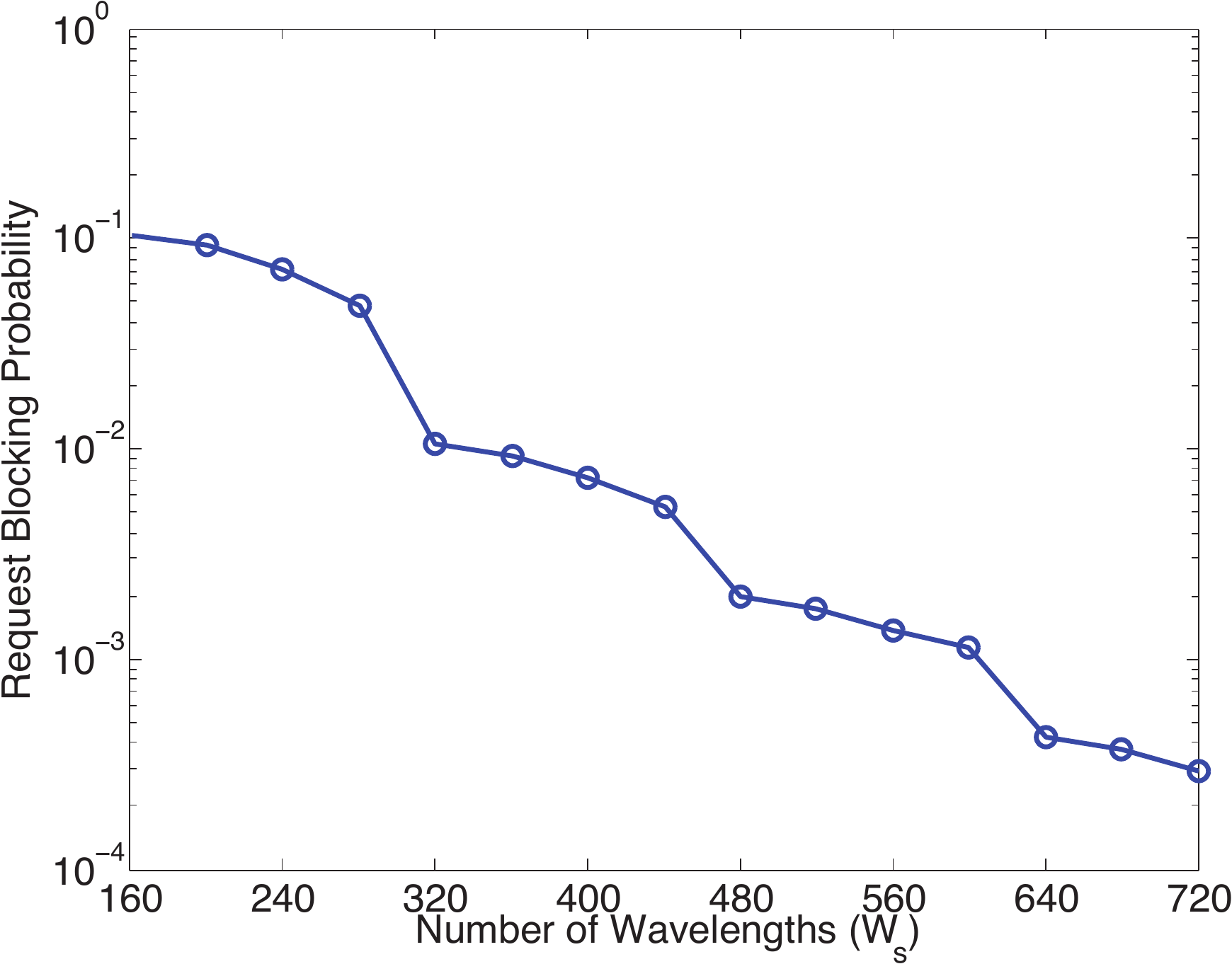}
			\caption{\label{WDMwavebanding_BvsW_Euro} Request blocking probability versus number of wavelengths in the pan-European network at a load of 200 Erlangs.}
		\end{minipage}
	\end{tabular}
\end{figure}

\section{Conclusion}\label{sec.label35}
\indent In this chapter, we proposed nonuniform wavebanding strategies to serve a given traffic in mesh networks \cite{wu2015optimal}. We formulated a band minimization problem and transformed it into a modified version of Minimum Weight Hamiltonian Path Problem. We presented an ILP formulation for this problem, and proposed heuristic algorithms for the problem. The numerical simulation results show that heuristic results are very close to ILP results for a small network. Numerical results for the comparison of number of wavebands and number of switching elements for three different larger networks were also presented. Significant reduction in terms of switching elements was achieved through wavebanding. We also applied our nonuniform waveband minimization framework to the dynamic traffic case, and the performance of the network in terms of request blocking probability with dynamic stochastic traffic was evaluated.

\chapter{Evaluation and Performance Modeling of OXC Architectures}
\label{chap_4}

\indent Despite the static configuration of wavebands in the previous chapter, this chapter introduces Optical Cross-Connect (OXC) node architectures built on top of wavelength-selective switches (WSSs) which facilitate dynamic configuration in WDM networks. An evaluation of Two Optical Cross-Connect (OXC) node architectures with multiple fibers per link -- one, a conventional architecture, and the second, a hierarchical architecture that has lower complexity than the first one is presented. Resource (fiber and wavelength) assignment and analytical models for computing the blocking probability of connection requests are studied.

\section{Related Work}\label{sec.label41}

\indent Current OXCs are built using WSSs with a single input and multiple outputs \cite{wall2008wss,yuan2008fully}. WSSs have the capability of demultiplexing, multiplexing, and switching. In a $1\times n$ WSS, each wavelength from the input can be independently switched to any of the $n$ outputs. Large port counts (i.e., large values of $n$) are required for large-scale OXCs. However, current technology limits the port count to around 20 in commercially available WSSs \cite{ban2013evaluation}. 

\indent A conventional way to build large port-count OXCs is to cascade small port-count WSSs. However, the nature of cascading leads to a square-order increment of the required hardware. To relieve the hardware requirement as well as costs, a novel OXC architecture utilizing hierarchical routing techniques was proposed in \cite{HIER}. This architecture can accommodate almost as much traffic as the conventional one, while using much fewer WSSs \cite{wu2015comparison}. For WDM networks, fixed-grid WSSs are utilized. Each optical path in the WDM network corresponds to a wavelength. The wavelengths of each fiber are located in a fixed channel frequency spacing, generally 50/100 GHz.

\indent One of the key performance metrics in optical networks is blocking probability, which is the probability that a connection request cannot be accommodated due to a lack of resources or certain constraints. There have been numerous models for studying the blocking performance of optical networks both at the network level~\cite{subramaniam1996all,zang2001dynamic,lu2002blocking,lu2004blocking,sridharan2004blocking} and the node level~\cite{yang2005node}.  % Finally, we present a comparison of the two architectures in terms of cost and power consumption and draw conclusions. 

\section{OXC Node Architectures}\label{sec.label42}

\indent Suppose the OXC node has a physical degree of $D$, i.e., $D$ is the number of connected neighbor nodes or the number of input/output links. The number of parallel fibers on link $i$ is denoted as $f_i \; (i=1,2,\ldots,D)$, and $N = \sum_{i=1}^{D} f_i$ represents the total number of input/output fibers. $N$ is usually larger than the limited port count of WSSs. We will analyze an $N\times N$ OXC node with different implementations utilizing port-count-limited WSSs. 

\begin{figure}
	\includegraphics[scale=0.5]{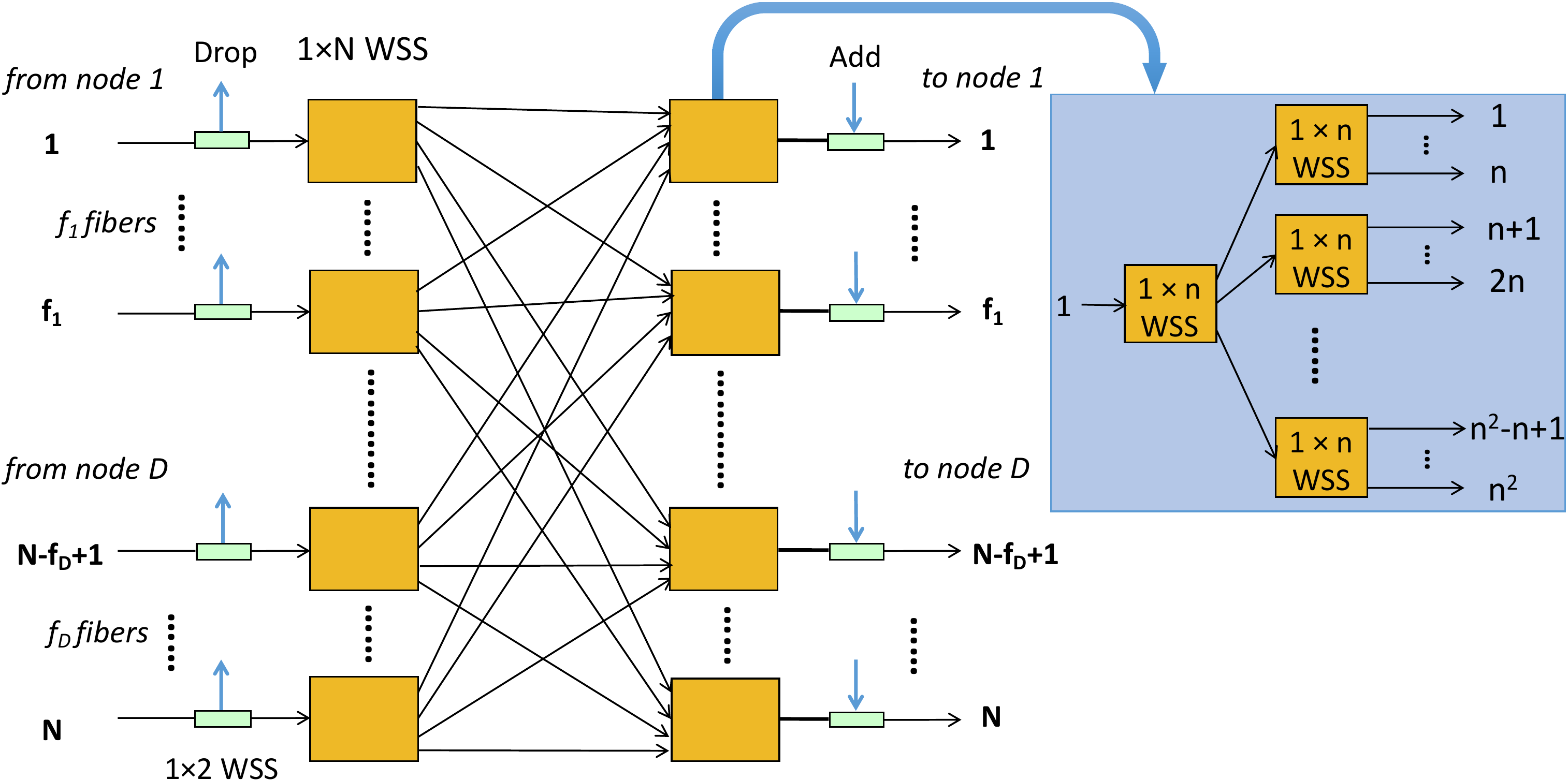}
	\centering
	\caption{\label{fig:CONV_Arch} An $N\times N$ WSS-based conventional OXC node architecture (``FLEX'').}
\end{figure}

\indent The conventional OXC node architecture (``FLEX'') implemented by cascading WSSs is shown in Figure \ref{fig:CONV_Arch}. Using this architecture, $2N$ of $1\times N$ WSSs are required, and the $1\times 2$ WSSs are used to add or drop local traffic. Suppose $1\times 4$ WSSs are utilized to build ``FLEX'' nodes. For a $4\times 4$ OXC node with $D=4$ and a single fiber per link, $8$ $1\times 4$ WSSs are needed, whereas for a $16\times 16$ OXC node with $D=4$ and $4$ parallel fibers per link, $32$ $1\times 16$ WSSs are needed. Each $1\times 16$ WSS could be realized by cascading $5$ $1\times 4$ WSSs. Thus, in total $160$ $1\times 4$ WSSs are needed for the $16\times 16$ OXC node. We can see that the required number of WSSs to realize an OXC node suffers from a square-order increment. The number of cascaded $1\times 4$ WSSs required by a large $1\times N$ WSS can be approximately calculated by
\begin{equation}
	\label{equ41}
	S(N)\approx \frac{N}{4}+\frac{N}{4^2}+\ldots+1
	=\frac{N}{3}\left(1-(1/4)^{\log_{4}N}\right).
\end{equation}

\noindent
In this architecture there is no switching constraint, i.e., {\em any} incoming request can be switched to {\em any} output fiber at any time, if there is a common available wavelength.

\begin{figure}
	\includegraphics[scale=0.51]{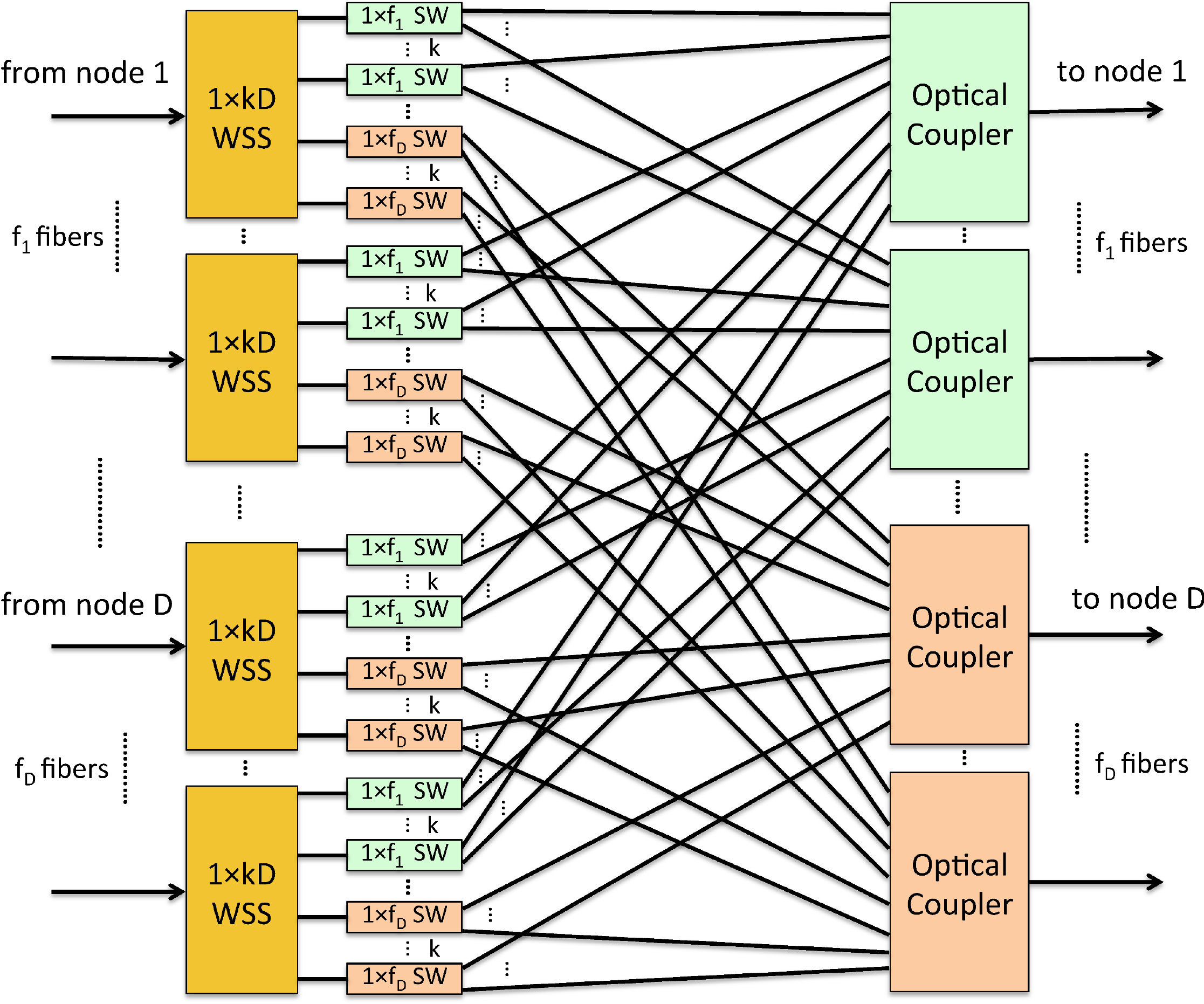}
	\centering
	\caption{\label{fig:EACH_Arch} Node architecture based on wavelength grouping and fiber selection (``HIER'').}
\end{figure}

\indent Figure \ref{fig:EACH_Arch} shows another implementation of an OXC node via wavelength grouping and fiber selection~\cite{HIER}. For each input fiber, the number of selectable fibers on each output link is $k$, where $k$ is a parameter of the OXC. The incoming traffic request is switched to one of $k$ parallel selectable fibers on the desired output link through the $1\times kD$ WSS. The input WSS partitions the incoming traffic requests into $kD$ groups, $k$ for each of the $D$ output links. Each group $j \ (j=1,\ldots, kD)$ is then switched to one of the $f_i$ parallel fibers on the output link $i$ via a $1\times f_i$ switch (SW). Even though the traffic requests can be switched to any of the $f_i$ parallel output fibers, the set of requests to an adjacent node $i$ can only be assigned to up to $k$ different fibers at the same time. $N$ $1\times kD$ WSSs, $k{\cdot}D{\cdot}N$ SWs, and $N$ optical couplers are required to implement this architecture, which is referred to as ``HIER''. For example, $k=1$ implies that all the optical paths intended for the same adjacent node from an input fiber are routed to a single fiber on the link at a given time (the single fiber may be different at different times). 

\section{Analytical Models}\label{sec.label43}

\indent In this section, we propose the resource allocation (wavelength assignment and fiber selection) algorithms and analytical models for evaluating the blocking performance of the OXC node architectures. To simplify the analysis, we make the following assumptions:
\begin{enumerate}
	\renewcommand{\labelenumi}{(\theenumi)}
	\item There are $W$ wavelengths per fiber. Each input/output link has $F$ fibers; however, this can be easily generalized to varying numbers of fibers per link. There are $N$ fibers in total, i.e., $N = D{\cdot}F$. Let $C=W{\cdot}F$ be the total number of {\em channels} on each link.
	\item Wavelength conversion is not used.
	\item A static traffic demand model in which the total number of requests at the input fibers is binomially distributed with $p$ being the probability that there is a request for an input channel. Thus, the mean number of requests from all input fibers is $NWp$. The output link of each request is chosen uniformly randomly from the $D$ output links.
	\item A wavelength is \textit{free} on a link if it is idle on at least one fiber of the link. If there is no \textit{free} wavelength along the route from input to output link, the request is blocked.
\end{enumerate}
%Each connection request is switched from an input fiber (with the assumption that all the input fibers are identical) to one of the output ports.
\subsection{Conventional OXC Architecture}
\indent We first show that the only reason for blocking in this case is the lack of available resource on the requested output link, i.e., a request can be assigned a fiber and wavelength (note that the same wavelength must be assigned on the input fiber as well) as long as there is at least one idle channel on that output link.
%We consider requests on input fibers. Next, we model the blocking probability of given traffic requests on the conventional OXC architecture.

%We first prove that when the requests on input fibers, the only reason for blocking is the lack of resource in each output link. 
\indent The basic idea is to show that we can assign at most $C$ requests to any output link, while making assigned requests to any output fiber to be no larger than $W$.

\begin{theorem}
	Given a set of requests on input fibers in ``FLEX'', the only reason for blocking is the lack of resource in the requested output links.
\end{theorem}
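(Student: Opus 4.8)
The plan is to establish that \textbf{FLEX} is rearrangeably nonblocking by casting the joint wavelength-and-fiber assignment as a proper edge coloring of a bipartite multigraph and then invoking K\"onig's edge-coloring theorem. Concretely, I would prove the sharp statement behind the theorem: if every output link $\ell$ receives at most $C=WF$ requests (the only genuine capacity bound), then the whole set of requests admits a simultaneous assignment of fibers and wavelengths, so no request can be blocked for any reason other than an output link being full.

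First I would record the two degree bounds that the architecture and the traffic model supply for free. Since each input fiber carries only $W$ channels, at most $W$ requests can originate on any input fiber, which is the ``at most $W$ per fiber'' bound on the input side. On the output side I would use the hypothesis $R_\ell \le C = WF$ to distribute the requests destined for link $\ell$ as evenly as possible among its $F$ fibers, giving each output fiber at most $\lceil R_\ell/F\rceil \le \lceil WF/F\rceil = W$ requests. This is exactly the step flagged in the sketch (``at most $C$ per link, at most $W$ per output fiber''), and crucially it uses no wavelength information, only counting; this is legitimate precisely because \textbf{FLEX} lets any input reach any of the $F$ output fibers with no switching constraint.

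Next I would build a bipartite multigraph $H$ whose left vertices are the input fibers, whose right vertices are all output fibers, and whose edges are the requests, each joining its input fiber to the output fiber chosen in the balancing step. By the two bounds, every vertex of $H$ has degree at most $W$, so $\Delta(H)\le W$. Because $H$ is bipartite, K\"onig's edge-coloring theorem yields a proper edge coloring using only $\Delta(H)\le W$ colors. Reading the $W$ colors as the $W$ wavelengths, I assign each request the color of its edge: this gives a single wavelength used on both its input and output fiber, so wavelength continuity under no conversion holds by construction, while properness of the coloring guarantees that no wavelength repeats on any fiber, which is the non-overlap requirement. Hence every request obtains a consistent $(\text{wavelength},\text{fiber})$ assignment, proving that a full valid assignment exists whenever no output link is over capacity; contrapositively, blocking occurs only when some $R_\ell > C$, i.e.\ a genuine shortage of resource on the requested output link.

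The main obstacle, and the real content, lies in the second and third steps together: one must recognize that the only global coupling among requests is through the fibers they share, that this coupling is captured exactly by vertex degrees in a bipartite graph, and that balancing each output link down to $W$ per fiber is both achievable (from $R_\ell\le WF$) and sufficient to force $\Delta(H)\le W$. Once the problem is recast as bipartite edge coloring the nonblocking conclusion is immediate from the fact that bipartite multigraphs are Class~1; the delicate point to verify carefully is that \textbf{FLEX}'s ``any input to any output fiber'' freedom is what licenses the wavelength-oblivious balancing, so that the edge-coloring hypothesis $\Delta\le W$ can actually be met. I would also check the boundary case $R_\ell = C$ to confirm that the ceiling bound is tight at exactly $W$ and never forces a $(W{+}1)$-st wavelength.
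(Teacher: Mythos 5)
Your proof is correct and follows essentially the same route as the paper's: first assign output fibers so that no output fiber carries more than $W$ requests, then model requests as edges of a bipartite multigraph on input/output fibers and invoke K\"onig's theorem (bipartite multigraphs are Class~1) to read off a wavelength assignment satisfying continuity and non-overlap. The only immaterial difference is the fiber-assignment rule --- you spread each link's $R_\ell \le WF$ requests evenly, giving $\lceil R_\ell/F\rceil \le W$ per fiber, whereas the paper fills output fibers sequentially to capacity $W$, a choice whose extra side properties (at most two output fibers per input-fiber/output-link pair) the paper reuses later for its hierarchical $k=2$ result but which are not needed for this theorem.
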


\begin{proof}
	Define $Q_{f,d}$ as the number of requests destined to output link $d$ from input fiber $f$, and suppose that $\sum_{f=1}^N Q_{f,d} \leq C, \: \forall d$. We will show that all requests can be assigned a fiber and wavelength through the following algorithm. First, we select a fiber on the desired output link for each traffic request. We consider the request assignments from input fiber $1$ to input fiber $N$ in order. For requests from input fiber $f, \: f \in [1,N]$, we do assignments in the order of destined output links. Let $\widetilde{Q}_{f,d}$ denote the remaining number of requests that need to be assigned from input fiber $f$ to output link $d$. Let $r_{\varepsilon_d}$ denote the number of requests that have been already assigned to output fiber $\varepsilon_d$ of link $d$. $\varepsilon_d$ is set to 1 initially. If the remaining capacity of fiber $\varepsilon_d$ is not less than $\widetilde{Q}_{f,d}$, we assign $\widetilde{Q}_{f,d}$ requests to $\varepsilon_d$. Otherwise, we partition $\widetilde{Q}_{f,d}$ into two parts to guarantee that the number of requests assigned to any output fiber is no larger than $W$. Also, since the number of requests from any input fiber is no larger than $W$, the requests from any input fiber to any output link is assigned to at most two different output fibers. The number of requests of the first part equals $W-r_{\varepsilon_d}$ and we assign them to $\varepsilon_d$. The remaining requests are assigned to fiber $\varepsilon_d+1$, which does not have any assigned request yet. This guarantees that at most one output fiber of any output link is partially occupied (having larger than $0$ and less than $W$ assigned requests). Since there is no routing constraint within the OXC, we can assign $C$ requests to any output link. The pseudocode is shown in Algorithm \ref{alg41}. Based on the algorithm, we can give three conclusions and summarize them in Lemma \ref{lemma41}.
	
	\begin{algorithm}    % enter the algorithm environment and a label for \ref{} commands later in the
		\caption{Requests assignment algorithm for ``FLEX''}
		\label{alg41}
		\begin{algorithmic}[1]               % enter the algorithmic environment
			\FOR{$f=1:N$}
			\FOR{$d=1:D$}
			\IF{$\widetilde{Q}_{f,d}>0$ and $W-r_{\varepsilon_d}\geq \widetilde{Q}_{f,d}$}
			\STATE{assign $\widetilde{Q}_{f,d}$ requests to output fiber $\varepsilon_d$}
			\STATE{$r_{\varepsilon_d}$ += $\widetilde{Q}_{f,d}, \widetilde{Q}_{f,d} = 0$}
			\ELSIF{$\widetilde{Q}_{f,d}>0$ and $W-r_{\varepsilon_d}<\widetilde{Q}_{f,d}$}
			\STATE{assign $W-r_{\varepsilon_d}$ requests to output fiber $\varepsilon_d$}
			\STATE{$\widetilde{Q}_{f,d}=\widetilde{Q}_{f,d}-(W-r_{\varepsilon_d}$)}
			\STATE{$r_{\varepsilon_d}=W$, $\varepsilon_d$ += $1$}
			\STATE{assign $\widetilde{Q}_{f,d}$ requests to output fiber $\varepsilon_d$}
			\STATE{$r_{\varepsilon_d}$ += $\widetilde{Q}_{f,d}, \widetilde{Q}_{f,d} = 0$}
			\ENDIF
			\ENDFOR
			\ENDFOR
			\STATE{use edge coloring algorithm for wavelength assignment}
		\end{algorithmic}
	\end{algorithm}

	\begin{lemma}\label{lemma41} The requests assignment algorithm guarantees that
		\begin{enumerate}
			\item The number of requests assigned to any output fiber is no larger than $W$;
			\item At most one output fiber of any output link is partially occupied; and
			\item Requests from any input fiber to any output link are assigned to at most two different output fibers.
		\end{enumerate}
	\end{lemma}

	Now, wavelength assignment can be done simply as follows. We generate a bipartite multigraph by denoting input fibers as left-hand vertices $r_a$ $(a=1,\cdots,N)$ and output fibers as right-hand vertices $c_b$ $(b=1,\cdots,N)$; there are $q_{ab}$ parallel edges between $r_a$ and $c_b$, where $q_{ab}$ is the number of requests from input fiber $a$ to output fiber $b$ as assigned by the algorithm above. From the given constraint, we know that the maximum degree of this bipartite multigraph is $W$, which is equal to the fiber capacity. The wavelength assignment problem is thus the same as coloring the edges of the bipartite multigraph, which can be done in polynomial time \cite{cole2001edge}. Thus, blocking occurs only if the number of traffic requests to an output link exceeds the link's capacity.
\end{proof}

\begin{table}
	\caption{Traffic demands for the small example node.}
	\label{table41}
	\centering
	\small
	\begin{tabular}{|c|c|c|c|c|c|c|}
		\hline
		\backslashbox{Output link}{Input fiber} & 1 & 2 & 3 & 4 & 5 & 6 \\ \hline
		1 & 3 & 2 & 1 & 3 & 0 & 3   \\ \hline
		2 & 1 & 2 & 3 & 1 & 4 & 1   \\ \hline
	\end{tabular}
	\centering
	\vspace{10pt}
\end{table}

\indent We illustrate the above algorithm with a small example. Suppose we have a small node with $D=2$, $F=3$, and $W=4$. Let $r_{i,d}$ denote the set of requests from input fiber $i$ to output link $d$. Let $o^d_j$ denote the output fiber $j$ in link $d$. The traffic demands ($|r_{i,d}|$) for this node are listed in Table \ref{table41}. Taking output link $1$ as an example, using Algorithm \ref{alg1}, we have $r_{1,1}$ and one of $r_{2,1}$ assigned to $o^1_1$; the remaining request of $r_{2,1}$, $r_{3,1}$, and two requests of $r_{4,1}$ assigned to $o^1_2$; the remaining request of $r_{4,1}$, and $r_{6,1}$ assigned to $o^1_3$. Thus, all the requests destined to output link $1$ can be accommadated.

\indent We now present the analytical model for computing the blocking probability.

\indent Let $R_t$ denote the total number of requests from all input fibers. Then
\begin{equation}\label{equ42}
	P(R_t=g)=\binom{NW}{g}p^{g}(1-p)^{NW-g}, \quad g = 0, 1, \ldots, NW.
\end{equation}

\indent Let $R_j$ denote the number of requests destined to output link $j$ from all input fibers. Then we have
\begin{equation}\label{equ43}
		P(R_j=n|R_t=g)=\binom{g}{n} \left(\frac{1}{D}\right)^n\left(1-\frac{1}{D}\right)^{g-n}, \quad n = 0, 1, \ldots, g.
\end{equation}

\indent Blocking occurs only when the number of requests destined to an output link exceeds the link's capacity. Thus, the blocking probability for output link $j$ is:
\begin{equation}\label{equ44}
	\begin{split}
		P(B_j)=\sum_{g=C+1}^{NW}\sum_{n=C+1}^{g}&\frac{n-C}{g}{\cdot} P(R_j=n,\: R_t=g).
	\end{split}
\end{equation}

\indent The total blocking probability for the node is
\begin{equation}\label{equ45}
	P(B) = D{\cdot} P(B_j).
\end{equation}

\subsection{Hierarchical OXC Architecture}
\indent We next analyze the hierarchical OXC architecture (``HIER'') separately for the following two cases: $k=1$, and $k \geq 2$.

\indent Case 1: $k=1$. In order to minimize blocking, the total number of requests routed to any fiber exceeding its capacity must be minimized. In other words, we have to minimize
\begin{equation}\label{equ46}
	\sum_{f'=1}^N\max\{0,\sum_{f=1}^N{q_{f,f'}}-W\},
\end{equation}

\noindent
where $q_{f,f'}$ is the number of requests assigned from input fiber $f$ to output fiber $f'$. We will present a heuristic later to minimize the above expression.

\indent In order to facilitate the analysis, we assume that the output fiber for all requests from an input fiber to a particular output link is chosen uniformly randomly from the $F$ possible fibers. (Note that since $k  = 1$, all requests from an input fiber to a given output link must be switched to a single fiber on that output link.) We call this assignment method as HIER Random Fiber Selection Algorithm (\emph{HRFS}).

\indent Let $X_{f}^j$ denote the number of input fibers which choose the output fiber $f$ for output link $j$. We have
\begin{equation}\label{equ47}
		P(X_{f}^j=m)=\binom{N}{m}\left(\frac{1}{F}\right)^{m}\left(1-\frac{1}{F}\right)^{N-m}, \quad m = 0, 1, \ldots N.
\end{equation}

\indent Let $R_x$ denote the number of requests from the $X_{f}^j$ input fibers. Then we have
\begin{equation}\label{equ48}
		P(R_x=n|X_{f}^j=m)=\binom{Wm}{n}p^n(1-p)^{Wm-n}, \quad n = 0, 1, \ldots Wm.
\end{equation}

\indent Let $R_{f}^j$ denote the number of requests assigned to fiber $f$ of output link $j$.
\begin{equation}\label{equ49}
	\begin{split}
		P(R_{f}^j=e)=&\sum_{m=0}^{N}\sum_{n=e}^{Wm}P(R_{f}^j=e|R_x=n,X_{f}^j=m){\cdot} P(R_x=n|X_{f}^j=m){\cdot} P(X_{f}^j=m)\\
		=&\sum_{m=0}^{N}\sum_{n=e}^{Wm}\binom{n}{e}\left(\frac{1}{D}\right)^{e}\left(1-\frac{1}{D}\right)^{n-e}{\cdot} P(R_x=n|X_{f}^j=m){\cdot}P(X_{f}^j=m).
	\end{split}
\end{equation}

\indent The blocking probability for one output fiber is:
\begin{equation}\label{equ410}
	\begin{split}
		P(B_f)=\sum_{g=W+1}^{NW}\sum_{e=W+1}^{g}P(B_f|R_{f}^j=e,R_t=g){\cdot} P(R_{f}^j=e)P(R_t=g),
	\end{split}
\end{equation}

\noindent
where
\begin{equation}\label{411}
	P(B_f|R_{f}^j=e,R_t=g)=\frac{e-W}{g}.
\end{equation}

\indent Recall that $R_t$ was defined before \ref{equ42}. Then the total blocking probability is
\begin{equation}\label{412}
	B=N{\cdot} P(B_f).
\end{equation}

\begin{algorithm}
	\caption{\emph{HSA} algorithm for ``HIER'' with $k=1$}
	\label{alg42}
	\begin{algorithmic}[2]               % enter the algorithmic environment
		\FOR{$d=1:D$}
		\STATE{group all requests to link $d$ based on the input fiber indexes}
		\STATE{sort all groups in non-increasing order of the number requests and the order is indexed by $s$}
		\WHILE{not all groups have been assigned}
		\IF{output fiber $\varepsilon_d$ has the largest remaining capacity}
		\IF{$W-r_{\varepsilon_d}$$\geq$$|g_s|$}
		\STATE{assign $g_s$ to output fiber $\varepsilon_d$}
		\STATE{$r_{\varepsilon_d}$ += $|g_s|$}
		\ELSE
		\STATE{assign $W-r_{\varepsilon_d}$ requests chosen from $g_s$ to output fiber $\varepsilon_d$}
		\STATE{the remaining $|g_s|-(W-r_{\varepsilon_d})$ requests are blocked}
		\STATE{$r_{\varepsilon_d}=W$}
		\ENDIF
		\ENDIF
		\STATE{$|g_s|=0$, $s=s+1$}
		\ENDWHILE
		\ENDFOR
		\STATE{use edge coloring algorithm for wavelength assignment}
	\end{algorithmic}
\end{algorithm}

\indent The \emph{HRFS} assignment algorithm lends itself to analysis, but performs poorly (as expected and as shown later) because it assigns fibers randomly and independently of other requests. We therefore propose the HIER Sort Assignment Algorithm (\emph{HSA}) to solve the problem. We consider output links one by one. For each output link, we first group all requests based on the input fiber index. Then, we sort the groups by the number of requests in the group in non-increasing order. Based on the sorted order, we assign each group to the output fiber with the largest remaining capacity, while making the number of requests assigned to any output fiber to be no larger than $W$. Finally, we use edge coloring algorithm for wavelength assignment. The pseudocode is shown in Algorithm \ref{alg42}. Here, $g_s$ represents the $s$th largest group of requests.

\indent As an example, if \emph{HRFS} is used to assign requests destined to output link $1$ in Table \ref{table41}, one possible assignment can be as follows. Input fiber $1$ and $6$ select $o^1_1$, input fiber $2$ and $4$ select $o^1_2$, and input fiber $3$ selects $o^1_3$. In this case, $3$ requests are blocked. If \emph{HSA} is used instead, input fiber $1$ and $2$ select $o^1_1$, input fiber $3$ and $4$ select $o^1_2$, and input fiber $6$ selects $o^1_3$. Thus, the number of blocked requests can be reduced to just one in this example.

\indent Case 2: $k\geq 2$. We show that in ``HIER'' with $k=2$, we can achieve the same blocking performance as that of ``FLEX'' by using Algorithm \ref{alg41}.
% the requests assignment algorithm for ``FLEX''.

\begin{theorem}
	Given the requests on input fibers in ``HIER'' with $k=2$, the only reason for blocking is the lack of resource in the desired output links.
\end{theorem}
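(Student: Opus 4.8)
The plan is to reuse the assignment algorithm and the structural guarantees already established for the conventional architecture, since almost all of the work has been done. Recall that the ``HIER'' constraint with $k=2$ asserts that, for every input fiber $f$ and every output link $d$, the requests from $f$ to $d$ may be switched to at most two distinct fibers of link $d$ (this is exactly the fiber-selection limitation imposed by the per-input-fiber $1\times kD$ WSS when $k=2$; it is an input-fiber-indexed bound, not an aggregate one). Thus it suffices to exhibit a feasible fiber assignment that (i) keeps the number of requests on each output fiber at most $W$, and (ii) spreads the requests of each (input fiber, output link) pair over at most two output fibers, under the hypothesis $\sum_{f=1}^N Q_{f,d}\le C$ for all $d$. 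Once such an assignment exists, the wavelength assignment goes through exactly as in the ``FLEX'' case.

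First I would observe that Algorithm~\ref{alg41}, the assignment procedure already used for ``FLEX'', produces precisely such an assignment without modification. Indeed, Lemma~\ref{lemma41} guarantees that the algorithm assigns at most $W$ requests to any output fiber (property~1) and, crucially, that the requests from any single input fiber to any single output link are assigned to at most two different output fibers (property~3). Property~3 is exactly condition~(ii) with $k=2$, and property~1 is condition~(i). Hence the fiber assignment returned by Algorithm~\ref{alg41} is feasible for ``HIER'' with $k=2$: it never violates the two-fiber selection limit at any input WSS and never overfills an output fiber.

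Next I would carry out the wavelength assignment exactly as in the proof of the preceding theorem. Build the bipartite multigraph whose left vertices are the $N$ input fibers, whose right vertices are the $N$ output fibers, and which has $q_{ab}$ parallel edges between input fiber $a$ and output fiber $b$, where $q_{ab}$ is the number of requests routed from $a$ to $b$ by Algorithm~\ref{alg41}. Since each input fiber carries at most $W$ requests and, by property~1 of Lemma~\ref{lemma41}, each output fiber receives at most $W$ requests, the maximum degree of this multigraph is $W$. A proper edge coloring with $W$ colors, computable in polynomial time \cite{cole2001edge}, then assigns one of the $W$ wavelengths to every request with no wavelength conflict on any fiber.

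Putting these pieces together, whenever the total number of requests to every output link is at most its capacity $C$, all requests can be accommodated, so blocking can occur only when some output link receives more than $C$ requests, which is the lack-of-resource condition in the statement. I do not expect a genuine obstacle here: the whole difficulty was discharged in advance by property~3 of Lemma~\ref{lemma41}, which was engineered so that the ``FLEX'' assignment automatically respects a two-fiber selection limit per input fiber. The only point requiring care is to confirm that the architectural ``up to $k$ different fibers'' limitation is indeed the per-input-fiber bound matched by property~3 (rather than an aggregate bound over all input fibers), so that feasibility for ``FLEX'' transfers verbatim to ``HIER'' with $k=2$; for $k>2$ the constraint only relaxes further, so the same assignment remains feasible.
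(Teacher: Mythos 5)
Your proposal is correct and follows essentially the same route as the paper: both arguments observe that the only new constraint in ``HIER'' with $k=2$ is the two-fibers-per-(input fiber, output link) limit, and both invoke property~3 of Lemma~\ref{lemma41} to conclude that the assignment produced by Algorithm~\ref{alg41} for ``FLEX'' already respects it, so blocking behavior transfers verbatim. Your explicit re-derivation of the edge-coloring step is slightly more detailed than the paper, which simply cites the equivalence with the ``FLEX'' case, but the substance is identical.
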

\begin{proof}
	The only difference between ``FLEX'' and ``HIER'' with $k=2$ is that in ``FLEX'', we can assign requests from a fiber to any of the $F$ fibers on a link, while in ``HIER'' with $k=2$, we can assign requests from a fiber to at most $2$ fibers on a link. However, from Lemma \ref{lemma41}, we can see that Algorithm \ref{alg41} assigns requests from any input fiber to a given output link to at most two different output fibers. Thus, there is no difference in blocking between ``FLEX'' and ``HIER'' with $k=2$.
\end{proof}

\section{Simulation Results}\label{sec.label44}
\indent We first validate our analytical models by comparing analytical results with those from simulations. Connection requests for each input fiber are generated according to a binomial distribution with parameters $W$ and $p$, and the requested output link is chosen randomly from $1$ to $D$. For each data point of simulation results in the graphs, we simulated $1000$ instances and obtained the average results.

\indent We present results for the OXC nodes with node degree $D=4$ and the number of wavelengths per fiber $W = 32$. Note that our analytical model for the two architectures can be easily extended to the case of different numbers of fibers on different links. For simplicity, in the evaluation we choose the number of fibers to be $10$ for all links (i.e., a $40 \times 40$ OXC node).

\begin{figure}
	\begin{tabular}{cc}
		\begin{minipage}[t]{2.85in}
			\centering
			\includegraphics[scale=0.36]{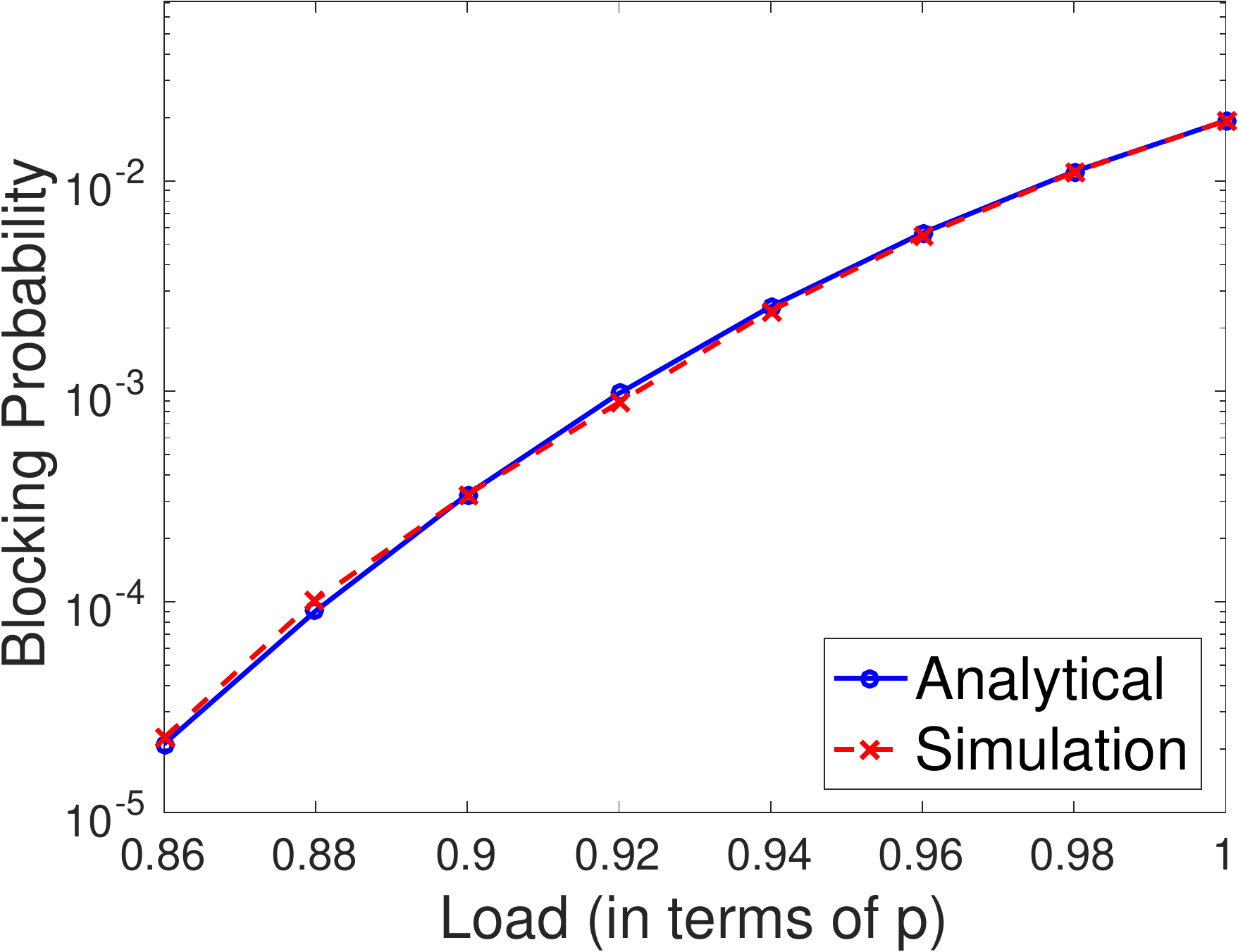}
			\caption{The validation results for ``FLEX''.}
			\label{fig43}
		\end{minipage}
		\hspace{0.1in}
		\begin{minipage}[t]{2.85in}
			\centering
			\includegraphics[scale=0.36]{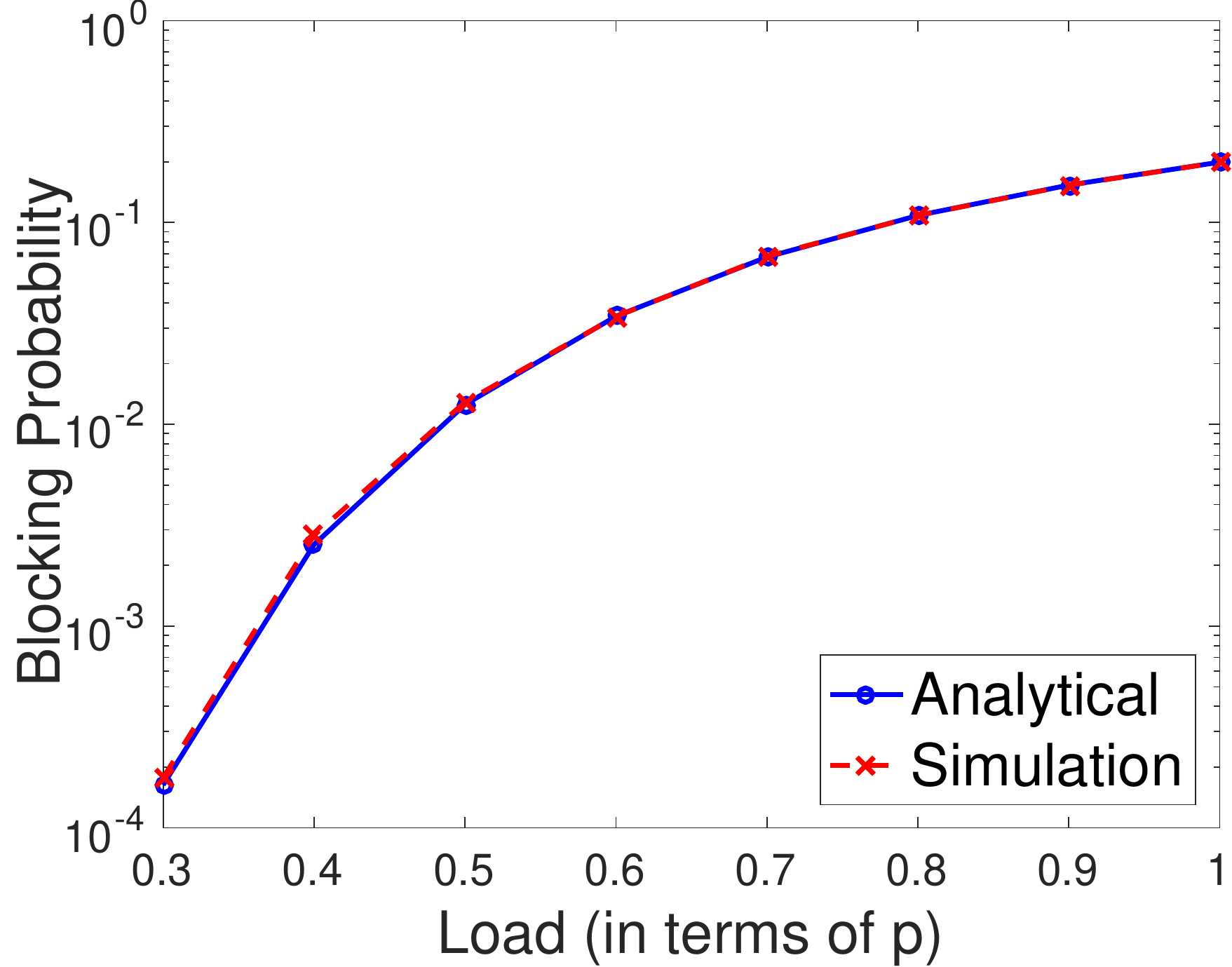}
			\caption{The validation results for ``HIER'' ($k=1$).}
			\label{fig44}
		\end{minipage}
	\end{tabular}
\end{figure}

\indent Figures \ref{fig43} and \ref{fig44} show the comparison of analytical and simulation results for ``FLEX'' and ``HIER'' ($k=1$) using \emph{HRFS}, respectively. Analytical estimations match simulation results very well for both architectures. As shown in Figure \ref{fig45}, the blocking probability of ``HIER'' ($k=1$) using \emph{HRFS} is much higher than that of ``FLEX''. This is due to the routing constraint caused by $k=1$ and the random fiber selection. Figure \ref{fig45} also shows the simulation results when we use \emph{HSA} for request assignments. Obviously, we can see that this algorithm gives much better blocking performance. Also, we can see that the blocking probabilities of ``HIER'' ($k=1$) are only a little bit higher than those of ``FLEX'' when using \emph{HSA}.

\begin{figure}
	\includegraphics[scale=0.5]{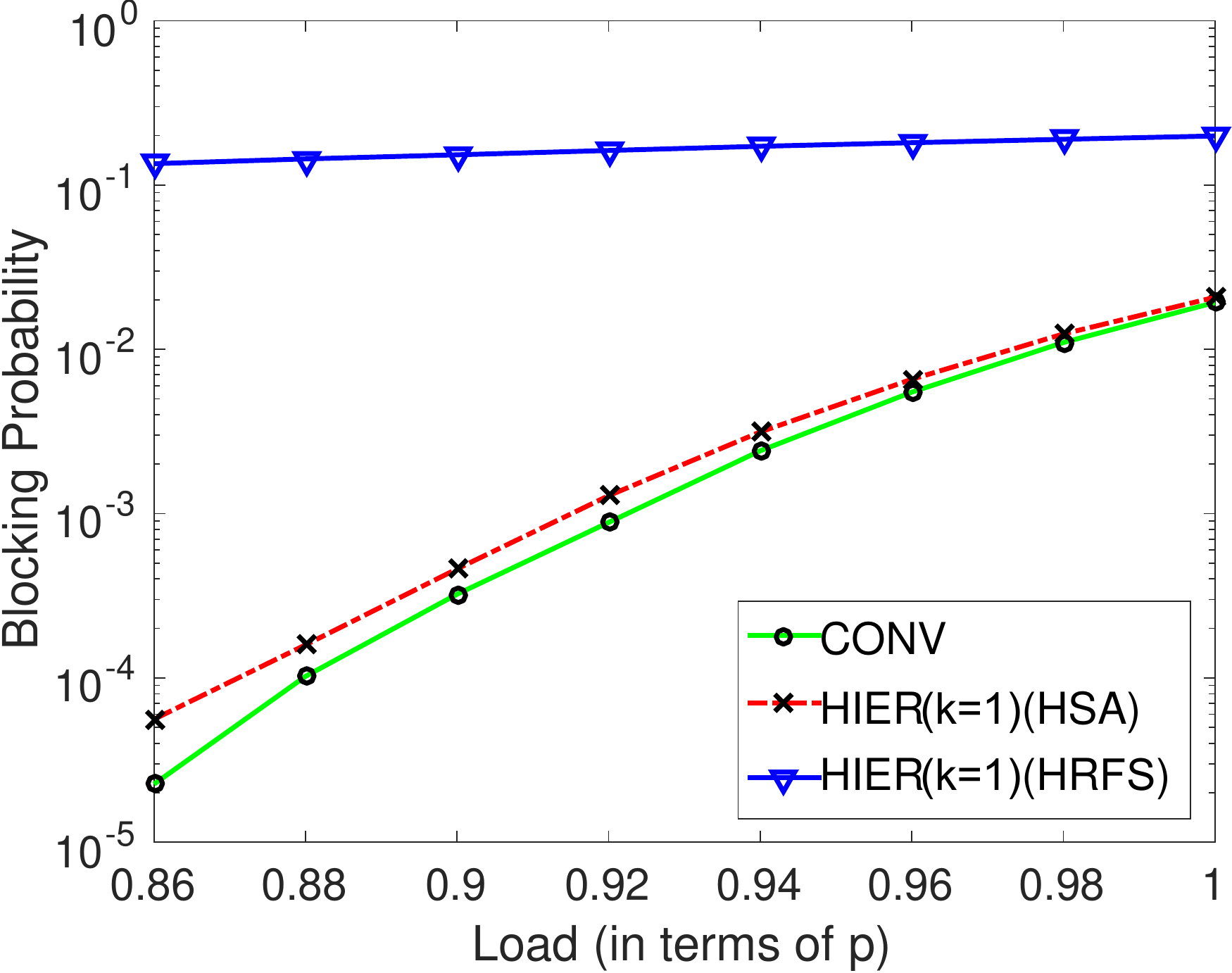}
	\centering
	\caption{\label{fig45} Comparison between ``FLEX'' and ``HIER'' ($k=1$).}
\end{figure}

\indent We next examine the cost and power consumption of the two OXC architectures.
The complexity of each architecture is as follows. Since WSS is the most costly element in the architecture, we count the number of WSSs to denote the complexity. For ``FLEX'', if $1\times 4$ WSS is utilized, the total number of WSSs required for a $40\times 40$ node is $1120$, whereas for ``HIER'' ($k=2$), $N$ $1\times kD$ WSSs are required at the node, which corresponds to $120$ $1\times 4$ WSSs. Thus, ``HIER'' requires much less hardware than ``FLEX'' without hurting the blocking performance.

\indent We compare ``FLEX'' and ``HIER'' in terms of power consumption and capital expenditure (CapEx). ``FLEX'' consists of $2N$ $1\times N$ WSSs. ``HIER'' consists of $N$ $1\times kD$ WSSs, $k{\cdot}D{\cdot}N$ $1\times F$ optical switches based on MEMS, and $N$ $N\times 1$ optical couplers. The total number of WSS ports in a $1\times N$ WSS can be denoted as $Z(N)=4{\cdot} S(N)$ (see equation \ref{equ41}). The power consumption and CapEx are calculated by summing up the consumed power and dollar cost of each component. A summary is shown in Table \ref{table42} \cite{xu2015podca}\cite{singla2010proteus}.

\renewcommand\arraystretch{1.2}
\begin{table}
	\caption{Power consumption and cost of the optical components in ``FLEX'' and ``HIER'' architectures.}
	\label{table42}
	\centering
	\small
	\begin{tabular}{|c|c|c|c|}
		\hline
		Component & WSS (port) & MEMS switch (port) & Coupler \\ \hline
		Power(Watts) & 1 & 0.25 & 0  \\ \hline
		Cost(Dollars) & 1000 & 255 & 195  \\ \hline
		CONV & $2N{\cdot} Z(N)$ & $0$ & $0$    \\ \hline
		HIER & $N{\cdot} Z(kD)$ & $k{\cdot} N{\cdot} D{\cdot} F$ & $N$ \\ \hline
	\end{tabular}
	\centering
	\vspace{10pt}
\end{table}

\indent The difference in power consumption between ``FLEX'' and ``HIER'' is
\begin{equation}\label{413}
	2N{\cdot} Z(N)-N{\cdot} Z(kD) - 0.25{\cdot}kN^2.
\end{equation}
The CapEx difference between ``FLEX'' and ``HIER'' is
\begin{equation}\label{414}
	2000N{\cdot} Z(N)-1000N{\cdot} Z(kD)-255kN^2-195N.
\end{equation}
The power consumption and CapEx difference between ``FLEX'' and ``HIER'' ($k=2$) for an OXC node with $D=4$ and varying $F$ are shown in Figures \ref{fig46} and \ref{fig47}.
\begin{figure}
	\begin{tabular}{cc}
		\begin{minipage}[t]{2.75in}
			\centering
			\includegraphics[scale=0.42]{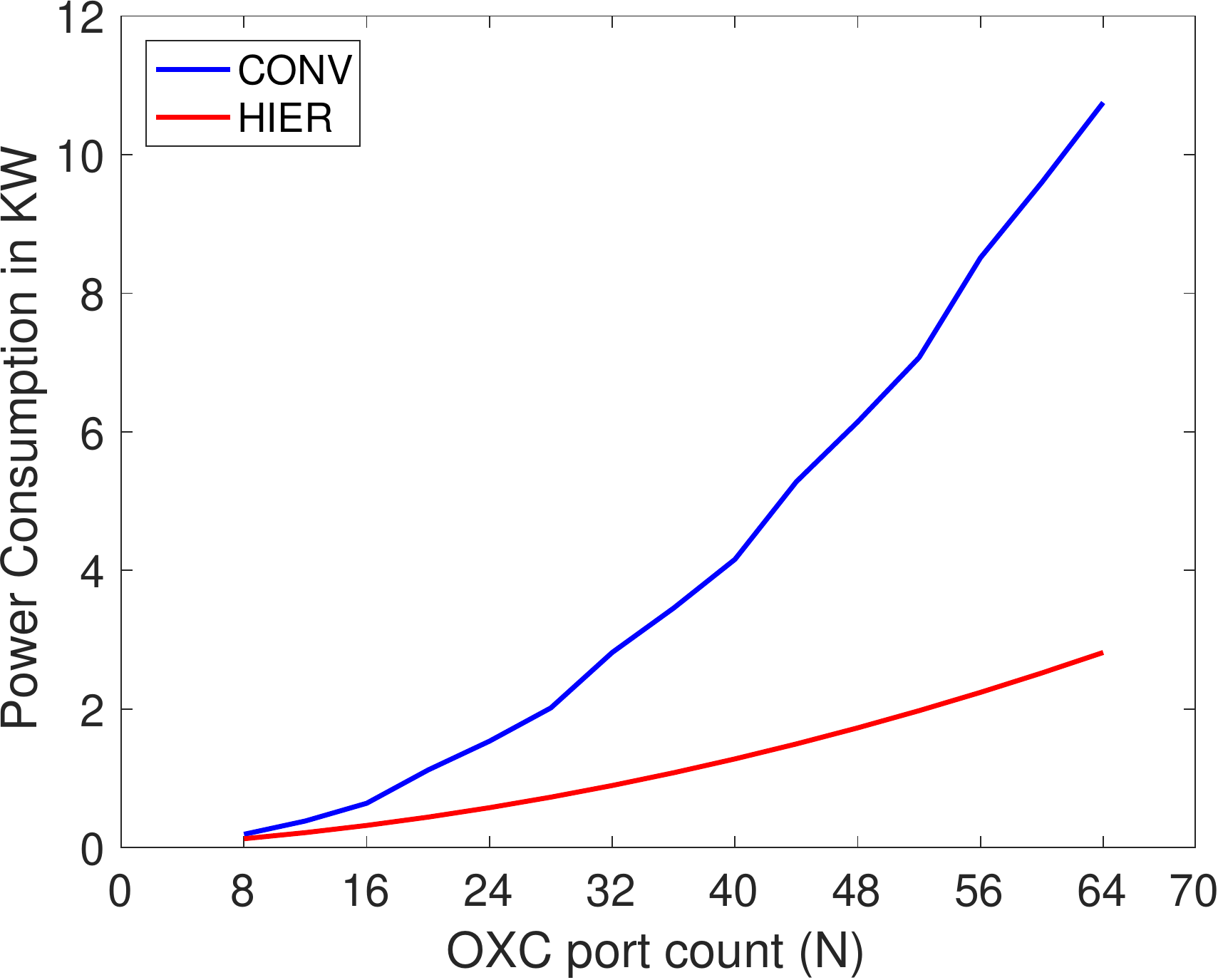}
			\caption{Power consumption comparison.}
			\label{fig46}
		\end{minipage}
		\hspace{0.1in}
		\begin{minipage}[t]{2.75in}
			\centering
			\includegraphics[scale=0.42]{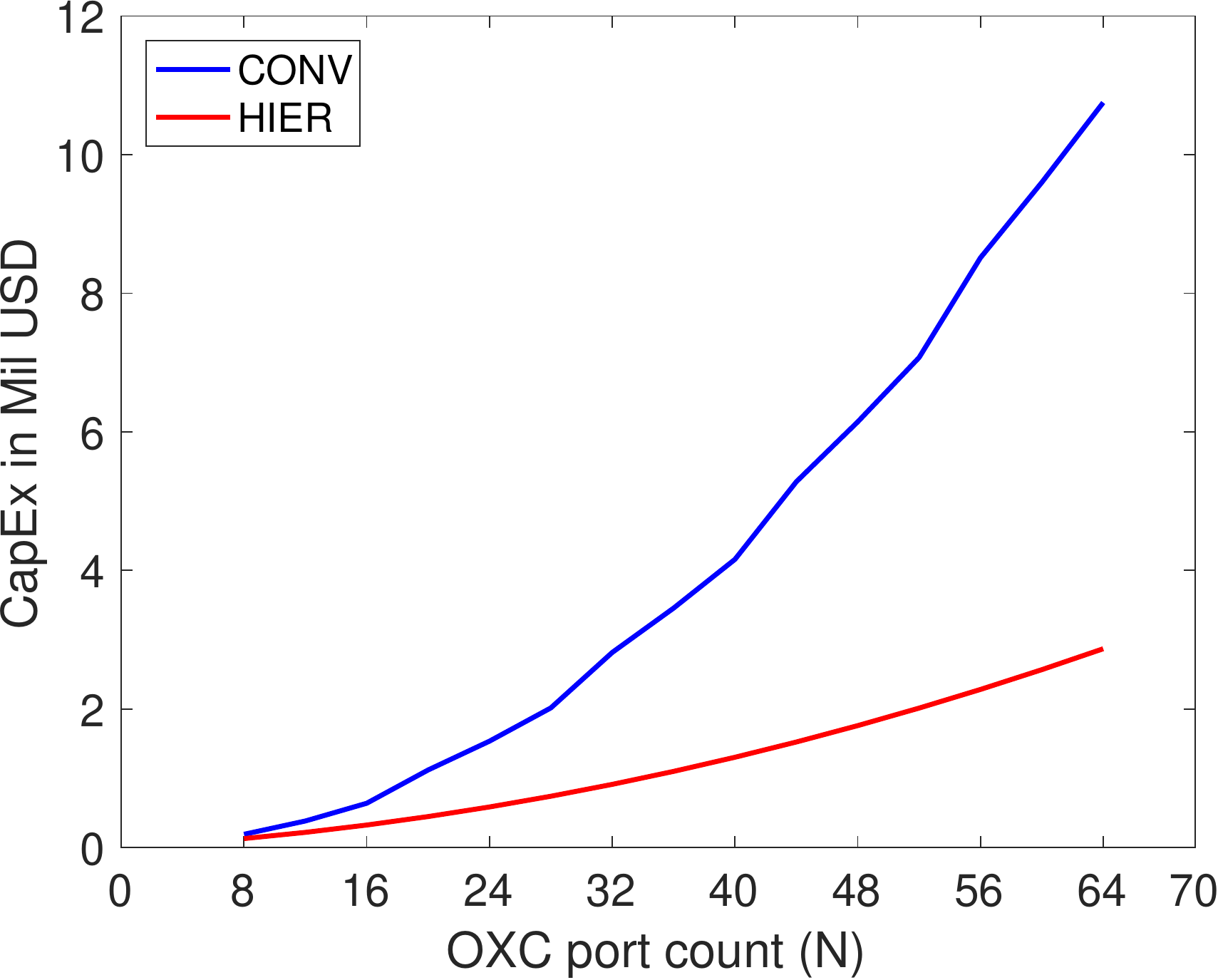}
			\caption{CapEx comparison.}
			\label{fig47}
		\end{minipage}
	\end{tabular}
\end{figure}

\section{Conclusion}\label{sec.label45} 
\indent In this chapter, we investigate the resource assignment problem and analytical models for predicting the blocking performance of multi-fiber OXC nodes using the conventional architecture and a cost-efficient hierarchical architecture \cite{jingxin2016sarnoff}. The numerical results validate the accuracy of our proposed analytical models and show that the hierarchical architecture can achieve similar blocking performance as the conventional architecture while using much less hardware. A comparison of the power consumption and cost of the two architectures also reveals the advantages of hierarchical architecture. Our results show that the hierarchical architecture which utilizes wavebanding features exhibits a good balance between performance, cost, and power consumption.

\chapter{Routing, Fiber, Band, and Spectrum Assignment (RFBSA) for Multi-granular Elastic Optical Networks}
\label{chap_5}

\indent In this chapter, we consider the wavebanding feature in EONs. A flexible waveband (``FLEX'') multi-granular architecture to increase waveband path utilization was presented in \cite{hasegawa2015flexible}. This architecture enables non-uniform and non-contiguous flex-grid wavebands \cite{wu2017routing}. We solve the routing, fiber, waveband, and spectrum assignment (RFBSA) problem introduced by elastic optical networking and flexible wavebanding.  

\section{Background and Problem Statement}\label{sec.label52}
\indent We first present a comparison of the conventional optical cross-connect (OXC) architecture and the flexible waveband cross-connect architecture for EONs.

\indent Consider an OXC node with a physical node degree of $D$, which is the number of physical nodes connected to this node. Each connectivity is represented by an input or output link. Each physical link contains several fibers. $N=\sum_{i=1}^{D}x_i$ denotes the total number of input/output fibers to/from the node, where $x_i$ denotes the number of parallel fibers on link $i$.

\subsection{``FLEX'' Architecture}
\indent The details of this architecture has been described in Section \ref{sec.label42}. In EONs, the WSSs are flex-grid in order to switch frequency slots. Current commercially available flex-grid WSSs typically have a port count limit of 4, 9, or 20, which are not easily scalable.

\subsection{``FLEX'' Waveband Architecture}

\begin{figure}
	\centering
	\includegraphics[scale=0.5]{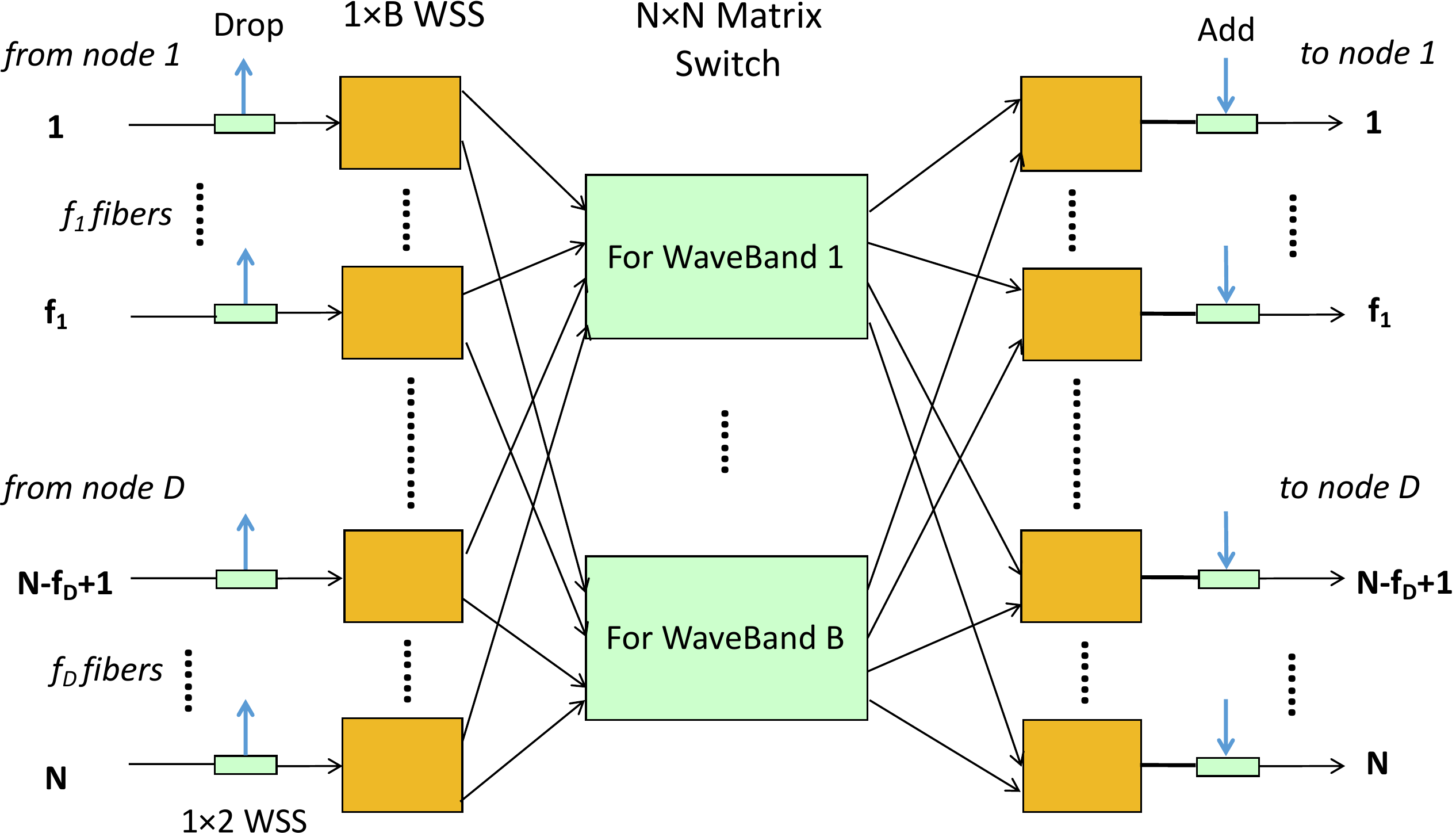}
	\caption{Flexible waveband node architecture.}
	\label{fig:FLEX_Arch}
\end{figure}

\indent Figure \ref{fig:FLEX_Arch} shows the flexible waveband OXC architecture (``FLEX'') proposed in~\cite{hasegawa2015flexible}. ``FLEX'' is composed of small-port-count $1{\times}B$ flex-grid WSSs and $B$ cost-effective matrix switches. Requests from an incoming fiber can be partitioned into $B$ groups and each group is switched as a whole to one of the output fibers of the OXC node. Note that the flex-grid WSSs provide the capability of {\em independently} switching a variable-sized set of contiguous FSs (corresponding to one lightpath) to {\em any} of its output fibers. Therefore, even though the FSs that are allocated to each lightpath must be contiguous to satisfy the spectrum contiguity constraint, the sets of lightpaths (different sets of contiguous FSs) that are switched as a band need not occupy a contiguous spectral range in ``FLEX''. This is in contrast to \cite{patel2012hierarchical} which requires the wavebands to be of uniform size and each waveband occupies a contiguous spectral range.

\indent Since $B$ is quite small compared to $N$, the hardware cost can be saved in terms of fewer number of costly WSSs in this architecture. However, the switching capability is reduced, since an entire group of lightpaths needs to be switched as a single entity, and the number of groups that can be switched simultaneously cannot exceed $B$.

\subsection{Comparison Between Architecture}
\indent We compare the two architectures in terms of power consumption and hardware cost. Suppose the port count of the node is $N$. A ``FLEX'' node requires $2N$ $1{\times}N$ WSSs, each of which is constructed with $S(N)$ $1{\times}4$ WSSs. $S(N)$ is presented in Equation \ref{equ41}. A ``FLEX'' node consists of $2N$ $1{\times}B$ WSSs, as well as $B$ $N{\times}N$ cost-effective matrix switches. Each $N{\times}N$ matrix switch can be constructed with $N$ $1{\times}N$ MEMS optical switches and $N$ $N{\times}1$ optical couplers. We calculate the power consumption and hardware cost of an OXC node by summing up the consumed power and dollar cost of each component. Table \ref{table51} gives a summary of the comparison.

\renewcommand\arraystretch{1.2}
\begin{table}
	\centering
	\small
	\caption{Power consumption and cost of the optical components in ``FLEX'' and ``FLEX'' architectures.}
	\label{table51}
	\begin{tabular}{|c|c|c|c|}
		\hline
		Component & WSS (port) & MEMS switch (port) & Coupler \\ \hline
		Power(Watts) \cite{xu2018podca} & 1 & 0.25 & 0  \\ \hline
		Cost(Dollars) \cite{xu2018podca} & 1000 & 255 & 195  \\ \hline
		CONV & $2N{\cdot}S(N){\cdot} 4$ & $0$ & $0$    \\ \hline
		FLEX & $2N{\cdot} B$ & $B{\cdot} N{\cdot} N$ & $B{\cdot} N{\cdot} N$ \\ \hline
	\end{tabular}
\vspace{10pt}
\end{table}

\indent Typically, $B$ equals 4. The power consumption of ``FLEX'' is higher than that of ``FLEX'' with a difference of
\begin{equation}\label{equ51}
8{\cdot}N{\cdot}S(N)-8{\cdot}N-N^2=O(N^2).
\end{equation}

\noindent
The difference in hardware costs between the two architectures is
\begin{equation}\label{equ52}
8000{\cdot}N{\cdot}S(N)-8000{\cdot}N-1800{\cdot}N^2=O(N^2).
\end{equation}

\noindent
Thus, the power consumption and cost of a ``FLEX'' node is $O(N^2)$ better than that of ``FLEX''. We can see that significant savings can be realized when utilizing the ``FLEX'' node architecture.

\subsection{Motivation and Problem Definition}

\indent Routing and spectrum assignment is known to be an NP-complete problem in EONs \cite{Lezama2014}. Adding the fiber and flexible waveband selection further increases the problem complexity in another two dimensions. Before we proceed to formally state the problem, we present an example in Figure \ref{fig52} to illustrate the challenge in solving the problem. In this example, there are 4 ``FLEX'' nodes, 4 parallel fibers per link, and an input fiber can be switched to at most 2 output fibers, i.e., $B = 2$. The green dashed lines represent already established connections, i.e., existing lightpaths are switched from the indicated input fibers to the corresponding output fibers. The matrix shows the already occupied FSs on the various fibers. We denote by $f_{v_s,v_d,i}$ the $i^{\rm th}$ fiber in the direction from $v_s$ to $v_d$. Our objective is to find a lightpath from node 1 to node 3 for a request whose bandwidth requirement is 2 FSs, and minimize the maximum spectral usage. Due to spectrum contiguity, spectrum continuity, and spectrum non-overlapping constraints, we cannot establish a lightpath by using any of (a) $f_{1,2,1}$ and $f_{2,3,1}$, (b) $f_{1,2,1}$ and $f_{2,4,1}$, or (c) $f_{1,2,2}$ and $f_{2,3,2}$ without increasing the maximum index of used FSs. Also, we cannot establish a new waveband connection from $f_{1,2,1}$ without exceeding the waveband constraint (since $B=2$). One possible solution is to establish a new waveband connection from $f_{1,2,2}$ to $f_{2,4,2}$ in node 2 and the lightpath passes through $f_{4,3,1}$. The lightpath uses FS 1 and 2, and this solution does not increase the spectral usage and keeps the maximum number of used FSs to be 4. The example illustrates the point that a judicious choice of fibers, bands, and FSs must be jointly made in order to optimize spectrum usage.

\begin{figure}[ht]
	\centering
	\includegraphics[scale = 0.6]{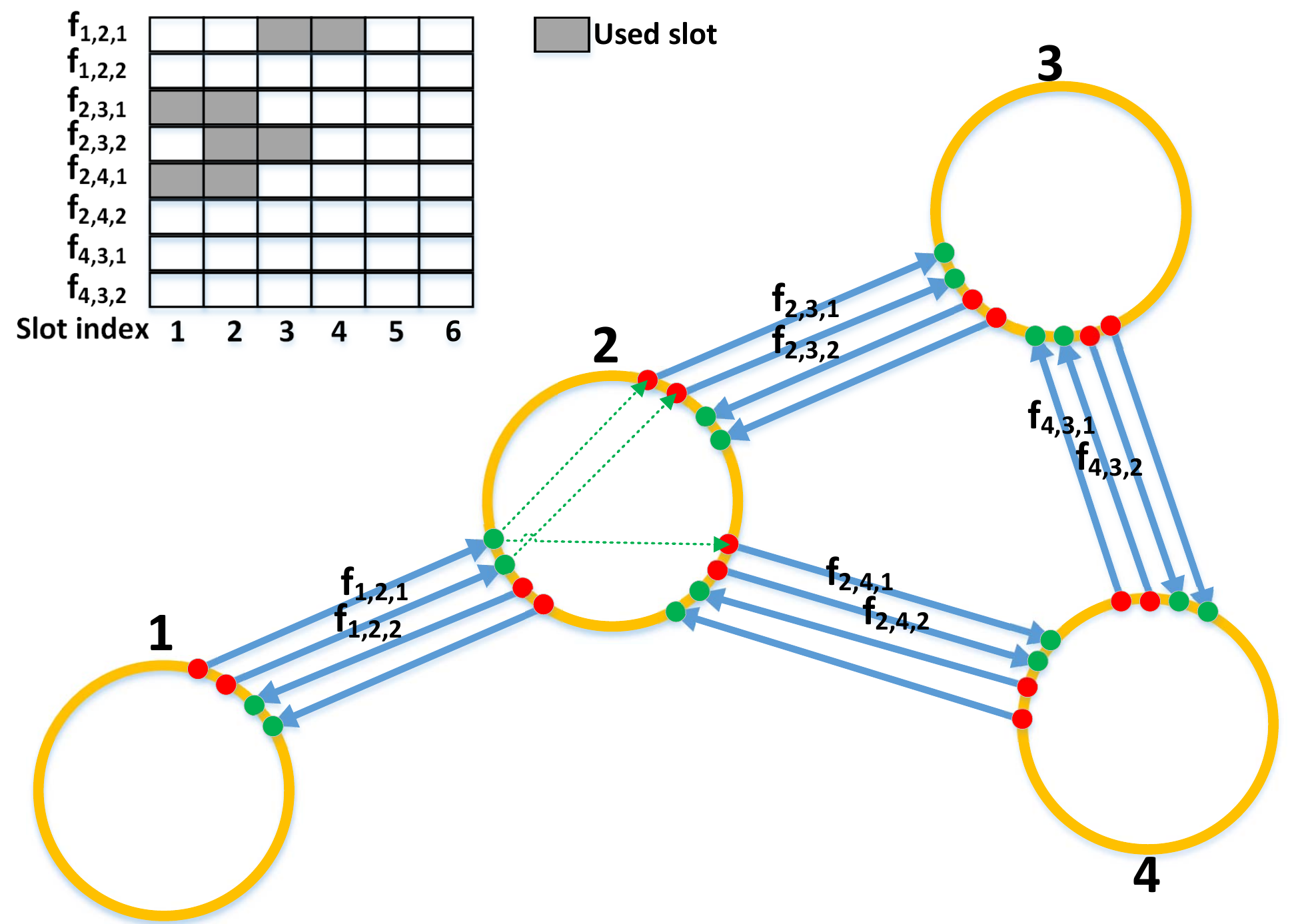}
	\caption{Illustrative example.}
	\label{fig52}
\end{figure}

\indent We formally define the problem as follows. $G=(V,E)$ represents the physical topology, wherein $V$ denotes the set of OXC nodes and $E$ denotes the set of unidirectional links. For each pair of adjacent nodes, there are two unidirectional links and $f_e$ unidirectional fibers on the unidirectional link $e$. We are given a set of requests ${\cal R}$. A request $r \in {\cal R}$ has source node $s_r$, destination node $d_r$, and a bandwidth requirement $w_r$ FSs. The resource allocation algorithm assigns a path, which consists of links and fibers, and FSs to the request $r$. The number of available slots on each fiber is assumed to be unlimited. Our objective is to minimize the maximum spectrum usage (MSU), i.e., the highest FS index used over all fibers to establish all traffic demands.

\indent We propose a new auxiliary layered-graph framework for solving the RFBSA problem. This framework generates an auxiliary graph for each outgoing fiber of the source node of a request, and generates one or more layered graphs for each auxiliary graph. The framework uses pluggable cost functions, and is therefore suitable for handling different network design objectives.

\section{The Proposed RFBSA Algorithm}\label{sec.label53}
\indent We present the details of our proposed RFBSA framework in this section.

\subsection{Auxiliary Layered-graph Framework}
\indent We start by describing the auxiliary layered-graph framework for solving the RFBSA problem. In the following, we denote $|f_{v_s,v_d}|$ as the number of fibers from $v_s$ to $v_d$.

\indent For each request $r$, we generate $\sum_{v_d}|f_{s_r,v_d}|$ auxiliary graphs. Each auxiliary graph involves one outgoing fiber from $s_r$ and includes all remaining nodes and fibers in the network. By fixing one outgoing fiber from the source node, we only need to consider the available FSs on that outgoing fiber, thereby reducing the complexity of searching for a lightpath. For a specific FS starting index $SI$ in an auxiliary graph $G_a=(V_a,E_a)$, we define a layered graph $G_{a,SI}=(V_{a,SI},E_{a,SI})$, which consists of all fibers of $G_a$ and FSs ranging from $SI$ to $SI+w_r-1$. The range of SI is from 1 to the current MSU.

\indent Since layered graph is at the FS level, it helps us to include spectrum contiguity, spectrum continuity, and spectrum non-overlapping constraints in each auxiliary graph. In the illustrative example, we can build two auxiliary graphs, each including one of the two outgoing fibers from node 1. In the first auxiliary graph, the only edge between node 1 and node 2 is $f_{1,2,1}$, while the only edge between node 1 and node 2 is $f_{1,2,2}$ in the second auxiliary graph. Each auxiliary graph has all edges among nodes 2, 3, and 4. Also, each auxiliary graph has layered graphs with contiguous FSs from $SI$ to $SI+w_r-1$, where $w_r$ equals 2 in the example.

\subsection{Resource Assignment Algorithm}
\indent In this subsection, we describe the algorithm for solving the RFBSA problem based on the above auxiliary layered-graph framework. The pseudocode is shown in Algorithm \ref{alg51}.

\indent We first sort all requests based on their bandwidth requirements in non-increasing order and store them in an ordered list ${\cal R^*}$. We consider lightpath assignments from the first request in the list, i.e., the one with the highest bandwidth demand. For each request, we first use the algorithm {\sc Switching Cost Update} for updating costs of each incoming fiber in each ``FLEX'' node.

\indent The basic idea of {\sc Switching Cost Update} is that we prefer to use already established wavebands (if any) and leave more waveband choices for subsequent requests. For the input fiber $f_{in}$ in ``FLEX'' node $v$, we compute the switching cost for the outgoing fiber $f_{out}$ of node $v$. If the waveband from $f_{in}$ to $f_{out}$ is already established, the switching cost from $f_{in}$ to $f_{out}$ is 0. If the waveband from $f_{in}$ to $f_{out}$ has not been established and the number of established wavebands from $f_{in}$ to all outgoing fibers of node $v$ is $b < B$, the switching cost between the two fibers is $\alpha{\cdot}b$. Here, $\alpha$ is a tradeoff parameter. If the number of established wavebands from $f_i$ to all outgoing fibers of node $v$ already equals $B$, the switching cost from $f_{in}$ to $f_{out}$ is set to $\infty$ because we cannot establish any more wavebands in ``FLEX''. The pseudocode is shown in Algorithm \ref{alg52}, and the cost function is shown in Equation \ref{equ1}. Denote $C_{v,f_{in},f_{out}}$ as the switching cost from $f_{in}$ to $f_{out}$ in ``FLEX'' node $v$.

\begin{equation}\label{equ53}
	C_{v,f_{in},f_{out}}=\left\{
	\begin{array}{l l}
		0,  \ \ \ \ \mbox{if the waveband is established}\\
		\alpha{\cdot}b, \ \ \mbox{if $b < B$} \\
		\infty,  \ \ \ \mbox{otherwise.}
	\end{array}
	\right.
\end{equation}

\begin{algorithm}                      % enter the algorithm environment
	\caption{{\sc RSBFA Algorithm}}          % give the algorithm a caption
	\label{alg51}                           % and a label for \ref{} commands later in the
	\begin{algorithmic}                    % enter the algorithmic environment
		\STATE{Input: ${\cal R}$, $G=(V,E)$}
		\vspace{-4pt}
		\STATE{Output: ${Path}$}
		\vspace{-4pt}
		\STATE{Sort all requests based on bandwidth requirements in non-increasing order and store in ${\cal R^*}$}
		\vspace{-4pt}
		\FOR{$r$ : ${\cal R^*}$}
		\vspace{-4pt}
		\FOR{$\kappa$ : outgoing fibers connecting to $s_r$}
		\vspace{-4pt}
		\STATE{Create auxiliary graph $AG_{\kappa}$}
		\vspace{-4pt}
		\FOR{$\widetilde{s}$ : available FS sets on $\kappa$}
		\vspace{-4pt}
		\IF{$|\widetilde{s}|<w_r$}
		\vspace{-4pt}
		\STATE{$C_{r,\kappa}=\infty$}
		\vspace{-4pt}
		\STATE{CONTINUE}
		\vspace{-4pt}
		\ENDIF
		\vspace{-4pt}
		\STATE{$SI^{start}_{\widetilde{s}}$ = the start FS index of $\widetilde{s}$}
		\STATE{$SI^{end}_{\widetilde{s}}$ = the last FS index of $\widetilde{s}$}
		\STATE{$SI=SI^{start}_{\widetilde{s}}$}
		\WHILE{$SI{\leq}SI^{end}_{\widetilde{s}}-w_r+1$}
		\STATE{$\widetilde{ss}$ = FS set from $SI$ to $SI+w_r-1$}
		\STATE{{\sc Spectrum Cost Update}}
		\STATE{Use shortest path algorithm to find a lightpath with the smallest $C_{total,r,\kappa,\widetilde{ss}}$}
		\IF{$C_{total,r,\kappa,\widetilde{ss}}<\infty$}
		\vspace{-4pt}
		\STATE{$C_{total,r,\kappa}=C_{total,r,\kappa,\widetilde{ss}}$}
		\vspace{-4pt}
		\STATE{BREAK}
		\vspace{-4pt}
		\ENDIF
		\vspace{-4pt}
		\STATE{$SI=SI+1$}
		\vspace{-4pt}
		\ENDWHILE
		\vspace{-4pt}
		\IF{$C_{total,r,\kappa}<\infty$}
		\vspace{-4pt}
		\STATE{BREAK}
		\vspace{-4pt}
		\ENDIF
		\vspace{-4pt}
		\ENDFOR
		\vspace{-4pt}
		\ENDFOR
		\vspace{-4pt}
		\STATE{$path_r$ = a lightpath with the smallest $C_{total,r,\kappa}, \forall \kappa$}
		\STATE{${Path}.add(path_r)$}
		\STATE{{\sc Switching Cost Update}} along the $path_r$
		\ENDFOR
	\end{algorithmic}
\end{algorithm}

\indent After updating switching costs, we start to generate auxiliary graphs for each outgoing fiber from $s_r$ for request $r$. For auxiliary graph $AG_{\kappa}$ of outgoing fiber $\kappa$, we need to update spectrum costs of all connected fibers. Define $C_{r,\kappa}$ as the spectrum cost of output fiber $\kappa$ for request $r$. We consider contiguous FS sets on $\kappa$ one by one. If the number of FSs in FS set $\widetilde{s}$ is less than $w_r$, we set $C_{r,\kappa}$ to infinity. We only consider the case that the number of FSs in $\widetilde{s}$ is no less than $w_r$ in the following. Denote $SI^{start}_{\widetilde{s}}$ and $SI^{end}_{\widetilde{s}}$ as the first and the last FS index in $\widetilde{s}$, respectively. Further, denote $SI$ as the FS index and denote $\widetilde{ss}$ as the FS set ranging from $SI$ to $SI+w_r-1$. We increase $SI$ starting from $SI^{start}_{\widetilde{s}}$ and stopping at $(SI^{end}_{\widetilde{s}} - w_r+2)$. For each $SI$, {\sc Spectrum Cost Update} is used to compute the spectrum cost of $\widetilde{ss}$ of all fibers in $AG_{\kappa}$.

\begin{algorithm}                      % enter the algorithm environment
	\caption{{\sc Switching Cost Update}}          % give the algorithm a caption
	\label{alg52}                           % and a label for \ref{} commands later in the
	\begin{algorithmic}                    % enter the algorithmic environment
		\STATE{Input:$v$, $f_{in}$, $f_{out}$}
		\STATE{Output:$C_{v,f_{in},f_{out}}$}
		\FOR{$f_{in}$ : all incoming fibers in node v}
		\FOR{$f_{out}$ : all outgoing fibers in node v}
		\STATE{$b$ = the number of established wavebands from $f_{in}$}
		\IF{the waveband is established}
		\STATE{$C_{v,f_{in},f_{out}}=0$}
		\ELSIF{$b<B$}
		\STATE{$C_{v,f_{in},f_{out}}={\alpha}{\cdot}b$}
		\ELSE
		\STATE{$C_{v,f_{in},f_{out}}=\infty$}
		\ENDIF
		\ENDFOR
		\ENDFOR
	\end{algorithmic}
\end{algorithm}

\indent The basic idea of {\sc Spectrum Cost Update} is to minimize the maximum FS index on each fiber after the current request is set up. We denote $f_{sl}$ as the largest FS index of fiber $f$ and denote $C_{r,\kappa,\widetilde{ss},f}$ as the spectrum cost on fiber $f$ of $\widetilde{ss}$ in $AG_{\kappa}$ for request $r$. If $\widetilde{ss}$ is not available on fiber $f$, $C_{r,\kappa,\widetilde{ss},f}$ is set to $\infty$. This means that we cannot establish a lightpath by using fiber $f$. If $\widetilde{ss}$ is available and $SI+w_r-1$ is no more than $f_{sl}$, $C_{r,\kappa,\widetilde{ss},f}$ is set to 1. In this case, the maximum FS index on fiber $f$ would not increase, and the spectrum cost works as a counter for the number of hops on the path. Otherwise, $C_{r,\kappa,\widetilde{ss},f}$ is $SI+w_r-1$. In this case, the spectrum cost is proportional to the maximum FS index after the request is set up. The pseudocode is shown in Algorithm \ref{alg53} and $C_{r,\kappa,\widetilde{ss},f}$ is shown in Equation \ref{equ54}.
\begin{equation}\label{equ54}
	C_{r,\kappa,\widetilde{ss},f}=\left\{
	\begin{array}{l l}
		1,  \ \ \ \ \ \mbox{if $SI+w_r-1 \leq f_{sl}$} \\
		\infty, \ \ \ \mbox{if $\widetilde{ss}$ is not available on $f$}\\
		SI+w_r-1,  \ \ \ \mbox{otherwise.}
	\end{array}
	\right.
\end{equation}

\begin{algorithm}                      % enter the algorithm environment
	\caption{{\sc Spectrum Cost Update}}          % give the algorithm a caption
	\label{alg53}                           % and a label for \ref{} commands later in the
	\begin{algorithmic}                    % enter the algorithmic environment
		\STATE{Input:$\widetilde{ss}$, $f$, $SI$, $w_r$}
		\STATE{Output:$C_{r,\kappa,\widetilde{ss}}$}
		\FOR{$f$ : all fibers in the network}
		\STATE{$f_{sl}$ = the largest FS index on $f$}
		\IF{$\widetilde{ss}$ is not available on $f$}
		\STATE{$C_{r,\kappa,\widetilde{ss},f}=\infty$}
		\ELSIF{$SI+w_r-1{\leq}f_{sl}$}
		\STATE{$C_{r,\kappa,\widetilde{ss},f}=1$}
		\ELSE
		\STATE{$C_{r,\kappa,\widetilde{ss},f}=SI+w_r-1$}
		\ENDIF
		\ENDFOR
	\end{algorithmic}
\end{algorithm}

\indent For a lightpath $l$, we denote $C_{total,r,\kappa,\widetilde{ss}}$ as the total cost combining Switching Cost $C_{v,f_in,f_{out}}$ of the ``FLEX'' node set $V_l$ and Spectrum Cost $C_{r,\kappa,\widetilde{ss},f}$ of the fiber set $F_l$. This is defined in Equation \ref{equ55}.
\begin{equation}\label{equ55}
	C_{total,r,\kappa,\widetilde{ss}}=\sum_{f_{in},f_{out}{\in}V_l}C_{v,f_{in},f_{out}}+\beta{\cdot}\sum_{f{\in}F_l}C_{r,\kappa,\widetilde{ss},f},
\end{equation}

\noindent
where $\beta$ is a tradeoff parameter between Switching Cost and Spectrum Cost.

\indent Once the costs for the edges in the layered graph are assigned, we use Dijkstra's algorithm to find the shortest paths from the source node output fiber to different input fibers of destination node, and choose the path $p$ with smallest total cost. If the cost of path $p$ is less than infinity, in order to keep the algorithm's complexity low, we stop generating further layered graphs for this auxiliary graph, and consider $p$ as a {\em candidate path}. We then proceed to find a candidate path on the next auxiliary graph, and so on, until one (or zero, if no paths have finite cost) candidate path is found for each auxiliary graph. Finally, we compare the total costs of all the candidate paths and choose a path with the smallest total cost. Assume $N$ is total number of nodes in the network with max node degree $D$, $L$ is total number of links with max number of fibers $F$ per link, and $\eta$ is the estimated upper bound of the maximum spectral usage. For each request, for each output fiber of the source node, $O(\eta)$ auxiliary graphs are created. Each graph consists of $2LF$ vertices and $LF+N(DF)^2$ edges. The dominant part is to find the shortest path in the auxiliary graphs, thus the time complexity of this algorithm is $O(|{\cal R}|DF\eta N(DF)^2\log(LF))$.

\indent In the illustrative example, we first consider auxiliary graph with $f_{1,2,1}$ and start to generate layered graphs from the first available contiguous FSs. In this auxiliary graph, we cannot build new waveband connections without making the switching cost to be infinity. In the first layered graph, which consists of FSs 1 and 2, the spectrum costs of $f_{2,3,1}$ and $f_{2,4,1}$ are both infinity. Then, we consider the second layered graph, which consists of FSs 5 and 6, and $SI$ equals 5. In this layered graph, the spectrum cost on each fiber is $SI+w_r-1=6$ and the spectrum cost of a lightpath is the number of fibers of the lightpath multiplied by the spectrum cost on each fiber. Thus, the spectrum cost by using $f_{1,2,1}$ and $f_{2,3,1}$ is $2(SI+w_r-1)=12$, and the spectrum cost by using $f_{1,2,1}$, $f_{2,4,1}$, and $f_{4,3,1}$ (or $f_{4,3,2}$) is $3(SI+w_r-1)=18$. Thus, the best path of the first auxiliary graph is the path consisting of $f_{1,2,1}$ and $f_{2,3,1}$ and the total cost is ${\beta}{\cdot}12$. Here, the switching cost is 0. The second auxiliary graph starts from $f_{1,2,2}$. The spectrum cost of the path consisting of $f_{1,2,2}$ and $f_{2,3,2}$ is $2(SI+w_r-1)=10$, where $SI=4$, and switching cost is 0. The spectrum cost of the path consisting of $f_{1,2,2}$, $f_{2,4,2}$, and $f_{4,3,1}$ is $3(SI+w_r-1)=6$, where $SI=1$ and switching cost is $\alpha$. Suppose we set $\alpha=0.1$ and $\beta=1$, the second path is the best path in the second auxiliary graph. Finally, we compare total costs of the best path from each auxiliary graph and the path consisting of $f_{1,2,2}$, $f_{2,4,2}$, and $f_{4,3,1}$ is chosen as the best path. The lightpath for the request is thus established on the fibers and FSs that correspond to this best path.

\section{Simulation Results}\label{sec.label54}
\indent In this section, we present simulation results to demonstrate the effectiveness of our proposed RFBSA algorithm and to compare the performances of the ``FLEX'' and ``FLEX'' architectures. The network topology used for simulations is the NSFNET network topology as shown in Figure \ref{fig:NSF}, which has 14 physical nodes and 22 bidirectional links \cite{wu2015comparison}. Each link has multiple parallel fibers. We apply two different settings: 1) all links have the same number of parallel fibers ($F=2$ per link); 2) the number of fibers on each link is randomly selected according to a uniform distribution between 5 and 10 ($F=[5,10]$ per link).

\indent We use a static traffic demand model -- a given set of traffic requests is to be allocated resources in the network. Each request denotes a connection between a pair of nodes in the network. The source and destination nodes for each connection request are uniformly randomly selected from the physical nodes of the network. We assume three different types of demands with different number of desired frequency slots \cite{hasegawa2015flexible}. The number of frequency slots required by each request is chosen from the following distribution: 3 slots with probability 0.2, 4 slots with probability 0.5, and 7 slots with probability 0.3. For the ``FLEX'' node architecture, the limited number of wavebands $B$ is assumed to be 4, which is equal to the port count of commercially available $1 \times 4$ WSS.

\indent For each given set of static traffic demands, we record the MSU, which is the largest used slot index over all fibers of the network. The results from our proposed algorithm are compared with those from a commonly-used baseline algorithm for both ``FLEX'' and ``FLEX'' architectures. The performances are also compared between the two architectures.

\indent The baseline algorithm we adapt to our multi-fiber network is the Shortest Path First Fit (SPFF) algorithm. In this algorithm, a single shortest path (i.e., smallest number of hops) is pre-defined for each source-destination pair. For each request, we try to find the first fit (i.e., a set of FSs with the lowest starting index) slot set (number of contiguous slots equal to the request demand) on the first fit fibers (considering any switching constraints) along the pre-calculated route. If multiple shortest paths are pre-defined, the baseline algorithm could be adapted to a K-Shortest Path First Fit (KSPFF) algorithm.

\indent Since the conventional node architecture has no switching constraints, any input fiber can be switched to any output fiber on each node. The selection of fibers on previous links does not affect the fiber selection on other links. For all links along the route, if a block of slots is available on at least one fiber on each link, this block of slots is allocated to the request. The first fiber containing available resources on each link is selected to accommodate the request. We use $CONV_{SPFF}$ to denote this baseline algorithm for the conventional architecture.

\indent For the ``FLEX'' architecture, there is a switching constraint defined by $B$ -- each input fiber can be switched to up to $B$ output fibers for each node. The fiber selection on the previous link of the route will affect the decision on the current link, and then all the following links on the route. If an input fiber to a node is already switched to $B$ different output fibers, with a subset belonging to the desired output link, then the fiber selection on the next link can be chosen only from that subset. Taking this switching constraint into consideration, we first filter out all unreachable fibers along the route, and then find the first fit slot set on the first fit fibers. We call this baseline algorithm for the ``FLEX'' architecture as $FLEX_{SPFF}$.

\begin{figure}
	\includegraphics[scale=0.55]{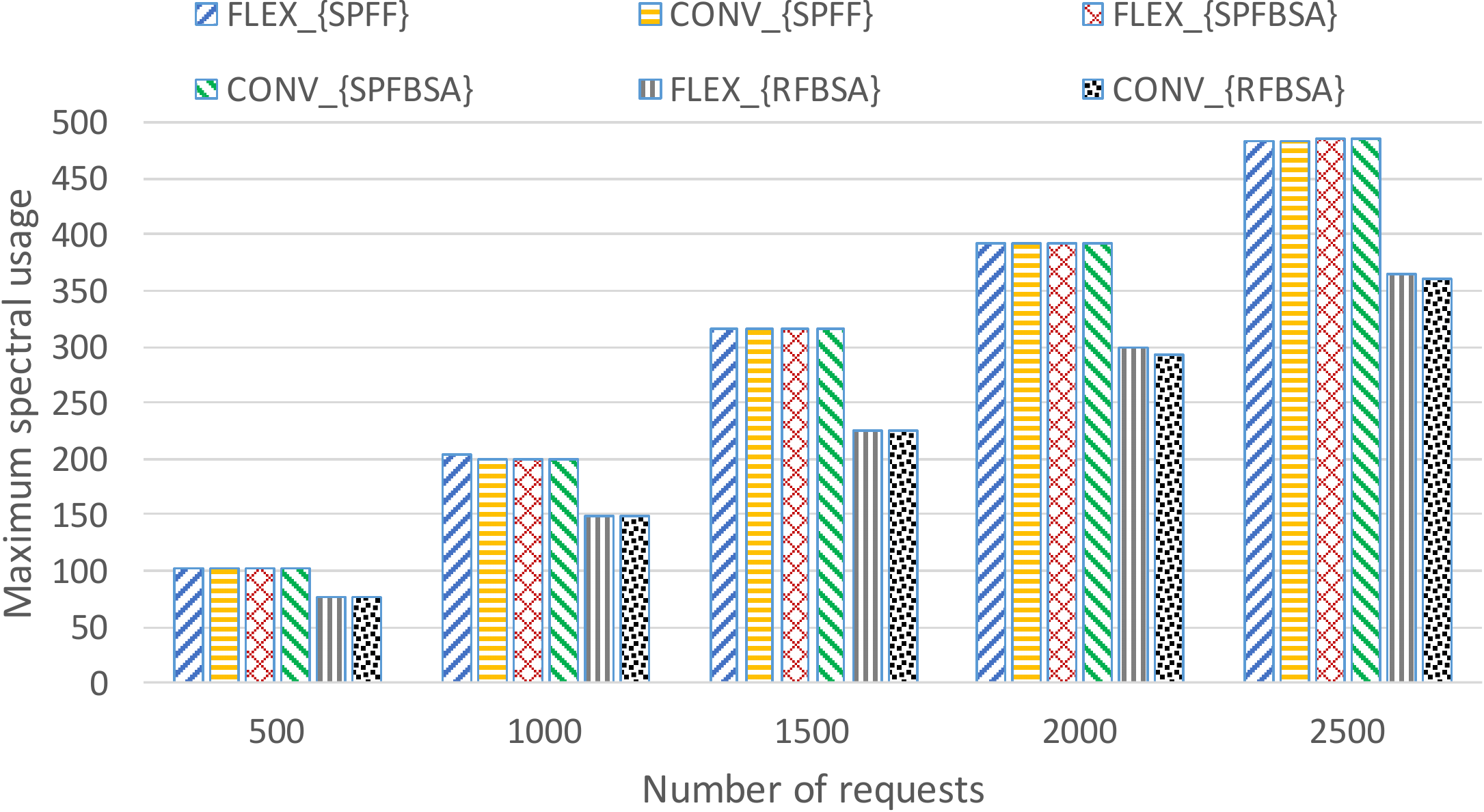}
	\centering
	\caption{\label{fig53} Performance comparison results for NSF network (F=2).}
\end{figure}

\begin{figure}
	\includegraphics[scale=0.55]{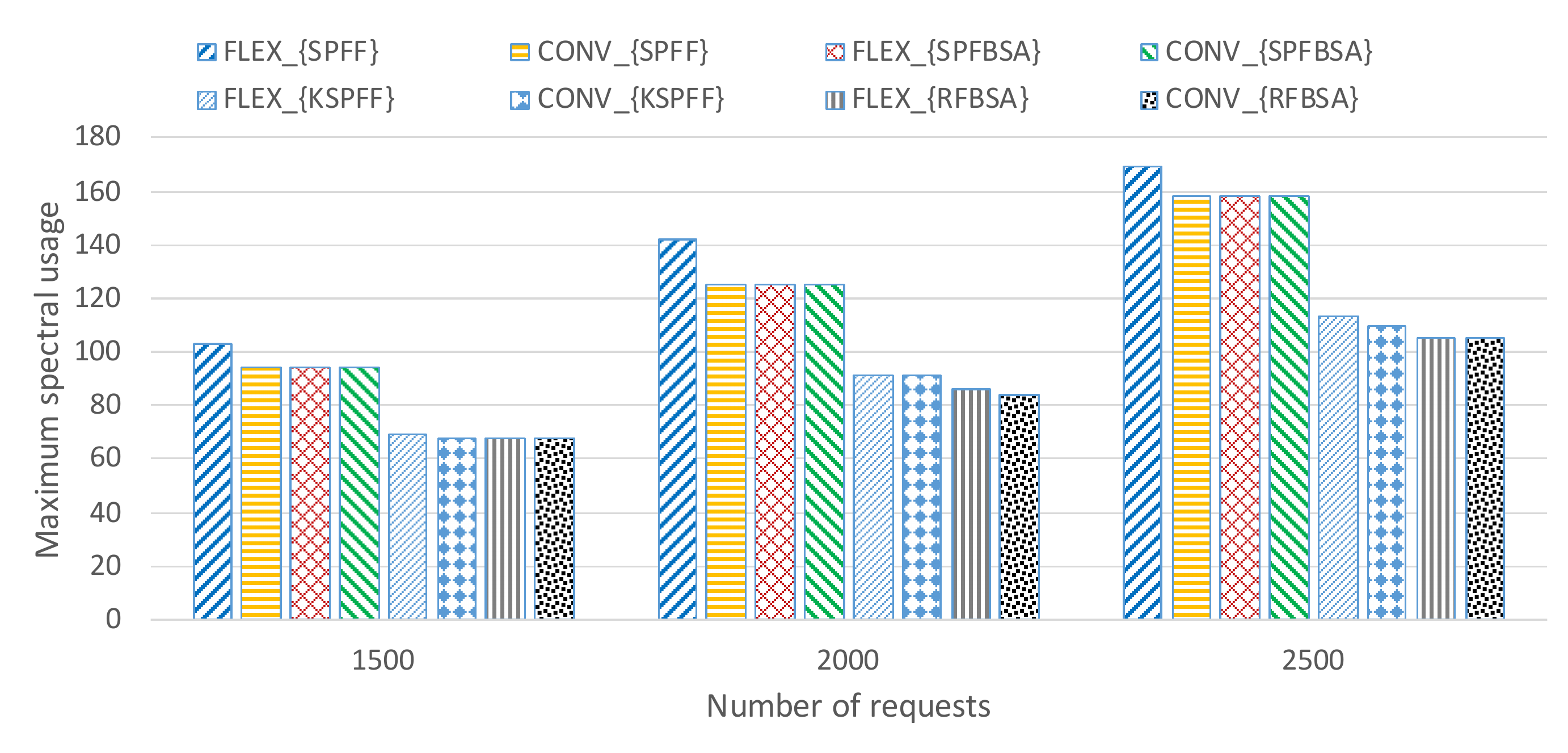}
	\centering
	\caption{\label{fig54} Performance comparison results for NSF network (F=[5,10]).}
\end{figure}

\indent Figure \ref{fig53} and \ref{fig54} show the performance results when utilizing our proposed algorithm, given different number of static traffic requests for the simulation settings $F=2$ and $F=[5,10]$, respectively. The parameter settings for the above results are: $\alpha = 1$, and $\beta = 1$. This parameter set works well because both the waveband and slot status are jointly considered with the same weight when making allocation decisions. The performance results are compared with those of $CONV_{SPFF}$ and $FLEX_{SPFF}$. We also compare the performance of our joint RFBSA algorithm with that of an algorithm in which the routing is fixed (smallest number of hops) but the frequency slots are selected using our auxiliary layered-graph approach. We call this algorithm $FLEX_{SPFBSA}$ and $CONV_{SPFBSA}$ for the two architectures. In Figure \ref{fig54}, we also compare the performance of our joint RFBSA algorithm with that of KSPFF for both architectures (denoted as $FLEX_{KSPFF}$ and $CONV_{KSPFF}$).

\indent We make several observations from the results. First, we note that SPFBSA and SPFF perform very similarly for both the ``FLEX'' and the ``FLEX'' architectures when $F = 2$. This implies that the first fit slot assignment is sufficient when SP routing is used in networks with a small number of fibers. Further, when $F = 2$, the performance of the ``CONV'' architecture and the ``FLEX'' architecture are almost the same using any algorithm. The waveband constraint does not have much influence because the number of fibers per link is small. For the $F = [5,10]$ case, due to the waveband switching constraints, the MSU of $FLEX_{SPFF}$ is slightly higher than that of $CONV_{SPFF}$. In this case, $FLEX_{SPFBSA}$ can find better results (almost the same as $CONV_{SPFF}$ and $CONV_{SPFBSA}$) than $FLEX_{SPFF}$.  

\indent Our RFBSA algorithm gives much better results than the baseline algorithms with shortest path routing for both the ``CONV'' architecture and ``FLEX'' architecture. Note that there is an improvement of $25-33\%$ in slot usage. Since MSU may be limited by a single fiber in the network, we conduct another set of comparison between the KSPFF and our RFBSA algorithm by using the average MSU across the network as performance indicator. Figure \ref{fig55} shows that RFBSA outperforms KSPFF.

\begin{figure}
	\includegraphics[scale=0.55]{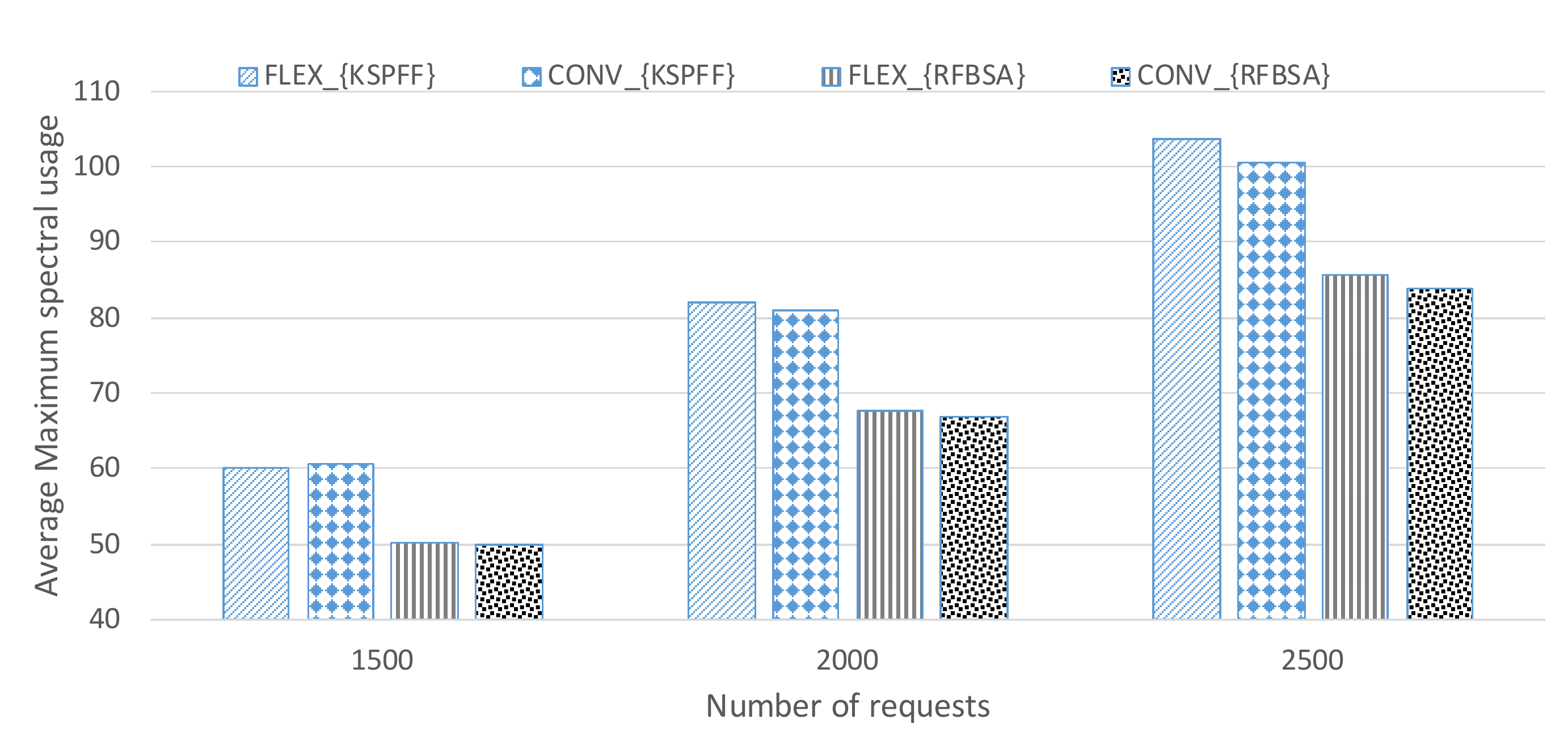}
	\centering
	\caption{\label{fig55} Comparison between KSPFF and RFBSA for NSF network (F=[5,10]).}
\end{figure}

This demonstrates the effectiveness of joint routing, fiber, band, and spectrum selection. We also note that the ``FLEX'' architecture, even with a small $B (= 4)$, can achieve a performance comparable to the ``CONV'' architecture through judicious resource allocation. Recall that the ``FLEX'' architecture is much less expensive than the ``CONV'' architecture.

\section{Conclusions}\label{sec.label55}
\indent In this chapter, the joint routing, fiber, waveband, and spectrum allocation (RFBSA) problem for EONs with a novel cross-connect architecture and multi-fiber links is investigated. Our main contributions can be summarized as follows:
\begin{itemize}
	\item We formulate a new and practical problem called the RFBSA problem. To the best of our knowledge, this is the first work that considers the RFBSA problem.
	\item We propose an auxiliary layered-graph framework for solving the RFBSA problem. The framework is suitable for solving different objective functions, and is cost-function pluggable. We also propose cost functions for minimizing the maximum used FS.
	\item Simulation results show the effectiveness of our framework by comparing its performance results with the results from a standard baseline algorithm.
\end{itemize}

\chapter{Joint Banding-node Placement and Resource Allocation for Multi-granular Elastic Optical Networks}
\label{chap_6}

\indent In this chapter, we continue our work based on the previous chapter. The following problem is addressed. Given the total number of available WSSs for the network as a budget for network design, determine how many FLEX nodes to deploy and where to deploy them, and solve the RFBSA problem jointly to optimize the network performance.

\section{Background and Problem Statement}\label{sec.label61}

\indent Due to the switching constraint in the FLEX architecture, deploying FLEX nodes in the network may cause worse performance than utilizing CONV nodes. To achieve both cost efficiency and good performance, the deployment of CONV and FLEX nodes should be carefully addressed. Different node placements may need different numbers of WSSs. If a budget in terms of number of available WSSs is given for network planning, we would like to determine how many FLEX nodes should be deployed and where to place them. 

\indent Each request can be represented by a source node, a destination node, and a bandwidth requirement in terms of number of required FSs. A path (including links and fibers) that satisfies routing constraints caused by limited wavebands, and a set of contiguous FSs should be assigned to accommodate the request. The node placement will have a direct impact on the resource assignment and performance of the network, as different FLEX node placements will cause different switching constraints. Thus, we should jointly consider the {\em RFBSA} and node placement problem. 

\indent For a given set of traffic requests and a network planning budget in terms of the number of WSSs of a given fixed port count, our objective is to find the number and locations of FLEX nodes as well as the resource allocation (FSs on fibers) of requests to minimize the total maximum spectrum usage (MSU), which is the sum of the maximum slot indices used on all fibers in the network. In the case of dynamically arriving and departing traffic requests, given a planning budget, our objective is to determine node placements and perform resource allocation to minimize the demand blocking ratio for dynamic traffic requests. The demand blocking ratio is defined as the ratio of the sum of bandwidths of blocked requests to the sum of bandwidths of all requests. Since requests can have different bandwidth requirements, we use demand blocking ratio instead of lightpath blocking ratio as the performance indicator.

\section{The Integer Linear Programming Model}\label{sec.label62}
\label{sec:ILP}

\indent We now present an ILP formulation for the joint problem. The ILP model can be used to solve small problem instances for a set of static connection requests.

\subsection{Notations}
\indent The input parameters to the ILP formulation are shown in Table \ref{table:NP_ILPnotations}. Consider a network $\mathcal{G}=(\mathcal{V},\mathcal{L})$, where $\mathcal{V}$ denotes the set of OXC nodes, and $\mathcal{L}$ denotes the set of unidirectional links.

\renewcommand\arraystretch{1.1}
\begin{table}\footnotesize
	\centering
	\caption{\label{table:NP_ILPnotations} Notation for the banding-node placement problem}
	\begin{tabular}{|>{\centering}m{1.5cm}| m{13.5cm}<{\centering}|}
		\hline
		\bf{Symbol}& \bf{Meaning} \\ \hline
		$G$ & the network topology \\ \hline
		$\mathcal{V}$ & the set of OXC nodes \\ \hline
		$\mathcal{L}$ & the set of unidirectional links \\ \hline
		$f_l$ & the number of parallel fibers on link $l{\in}\mathcal{L}$ \\ \hline
		$M$ &  number of physical nodes in the network \\ \hline
		$L$ &  number of physical links in the network\\ \hline
		$F$ &  total number of fibers in the network\\ \hline
		$B$ &  the band limit of FLEX nodes\\ \hline
		$T$ &  the budget, the maximum available number of WSSs\\ \hline
		$K$ &  the number of predetermined paths between each node pair\\ \hline
		$v$ &  an arbitrary network node\\ \hline
		$e$ &  an arbitrary network link\\ \hline
		$f$ &  an arbitrary network fiber\\ \hline
		$s$ &  an arbitrary slot\\ \hline
		$p$ &  an arbitrary path ($p_{s, d, r}$ is the r\emph{th} shortest path from node $s$ to node $d$)\\ \hline
		$P_{s,d,r}^{e}$ & = 1 if link $e$ is on path $p_{s, d, r}$; = 0, otherwise\\ \hline
		$\xi_{e}^{f}$ & = 1 if fiber $f$ is on link $e$; = 0, otherwise\\ \hline
		$IN_{v}^{f}$ & = 1 if fiber $f$ is an input fiber of node $v$; = 0, otherwise\\ \hline
		$OUT_{v}^{f}$ & = 1 if fiber $f$ is an output fiber of node $v$; = 0, otherwise\\ \hline
		$\zeta_{v}$ & the number of WSSs needed if node $v$ is CONV\\ \hline
		$\Gamma _{v}$ & the number of WSSs needed if node $v$ is FLEX\\ \hline
		$\Omega$ & estimated upper bound of maximum slot usage\\ \hline
		$\mathcal{J}$ & a given set of requests\\ \hline
		$J$ & number of requests \\ \hline
		$j$ & an arbitrary request\\ \hline
		$s^j$ & source node of request $j$, $s^j{\in}\mathcal{V}$\\ \hline
		$t^j$ & destination node of request $j$, $t^j{\in}\mathcal{V}$\\ \hline
		$d^{j}$ & the required number of slots for request $j$\\ \hline
	\end{tabular}
\end{table}

\subsection{Formulations}

\indent We use $\mathcal{J}$ to denote a given set of requests. Request $j{\in}\mathcal{J}$ requires $d^{j}$ FSs with source node $s^j$ and destination node $t^j$. The $K$ shortest paths for each request are precomputed. $T$ denotes the network planning target, which is the total number of available WSSs. Our goal is to determine the placement of FLEX nodes and resource allocation for requests. The MSU for a fiber is defined as the highest slot index utilized to accommodate demands on that fiber.

Variables:

a)
\begin{equation*}
a^{j}_{s}=
\begin{cases}
1, \quad\makebox{if starting slot of request } j \makebox{ is } s;\\
0, \quad\makebox{otherwise }
\end{cases}
\end{equation*}

b)
\begin{equation*}
z^{j}_{s}=
\begin{cases}
1, \quad\makebox{if request } j \makebox{ uses slot } s;\\
0, \quad\makebox{otherwise }
\end{cases}
\end{equation*}

c) 
\begin{equation*}
y^{j}_{f}=
\begin{cases}
1, \quad\makebox{if fiber } f \makebox{ is allocated to accommodate request } j;\\
0, \quad\makebox{otherwise }
\end{cases}
\end{equation*}

d)\vspace{-5pt}
\begin{equation*}
x^{j}_{f,s}=
\begin{cases}
1, \quad\makebox{if slot } s \makebox{ on fiber } f \makebox{ is allocated to request } j;\\
0, \quad\makebox{otherwise }
\end{cases}
\end{equation*}

e)\vspace{-5pt}
\begin{equation*}
w^{j,v}_{f_a,f_b}=
\begin{cases}
1, \quad\makebox{if there is a switching from fiber } f_a \makebox{ to}\\ \quad\quad \makebox{fiber } f_b \makebox{ at node } v \makebox{ caused by request } j;\\
0, \quad\makebox{otherwise }
\end{cases}
\end{equation*}

f)\vspace{-5pt}
\begin{equation*}
W^{v}_{f_a,f_b}=
\begin{cases}
1, \quad\makebox{if there is a switching from fiber } f_a\\ \quad\quad \makebox{to fiber } f_b \makebox{ at node } v;\\
0,\quad\makebox{otherwise }
\end{cases}
\end{equation*}

g)\vspace{-5pt}
\begin{equation*}
C_{v}=
\begin{cases}
1, \quad\makebox{if node } v \makebox{ is chosen to use CONV architecture};\\
0,\quad\makebox{if node } v \makebox{ is chosen to use FLEX architecture}
\end{cases}
\end{equation*}

h)\vspace{-5pt}
\begin{equation*}
\lambda^{j}_{r}=
\begin{cases}
1, \quad\makebox{if request } j \makebox{ uses path } r \makebox{ of the $K$ shortest paths};\\
0,\quad\makebox{otherwise }
\end{cases}
\end{equation*}

\indent Our objective is to minimize the average MSU, or, equivalently, the total MSU over all fibers, as the number of fibers for a single network is constant:
\begin{center}Minimize \quad $\sum_{f=1}^{F}\max_{s} (s\cdot\sum_{j=1}^{J}x^{j}_{f,s})$\end{center}
Here, $(s\cdot\sum_{j=1}^{J}x^{j}_{f,s})$ denotes the index of a slot that is used by some request on fiber $f$.

Constraints:

a) There is only one starting slot index for each request. For all $j$,
\begin{equation*}
\sum_{s=1}^{\Omega-d^{j}+1} a^{j}_{s} = 1, \quad
\sum_{s=\Omega-d^{j}}^{\Omega} a^{j}_{s} = 0
\end{equation*}

b) The spectrum contiguity should be met for each request. The slots assigned to accommodate a request must use consecutive slots from its starting slot. For all $j, s$,
\begin{equation*}
\sum_{i=0}^{d^{j}-1}z^{j}_{s+i} \geq d^{j} \cdot a^{j}_{s}, \quad 
\sum_{s=1}^{\Omega}z^{j}_{s} = d^{j}
\end{equation*}

c) Only one of the $K$ precomputed shortest paths should be chosen for a request. For all $j$,
\begin{equation*}
\sum_{r=1}^{K}\lambda^{j}_{r}=1
\end{equation*}

d) Only one fiber on each link along the chosen path is selected to route each request. For all $j, e$,
\begin{equation*}
\sum_{f=1}^{F}\xi_{e}^f\cdot y^{j}_{f} = \sum_{r=1}^{K}P_{s,d,r}^{e}\cdot \lambda^{j}_{r}
\end{equation*}

e) The value of $x^{j}_{f,s}$ should be based on both $y^{j}_{f}$ and $z^{j}_{s}$. Slot $s$ on fiber $f$ is used by request $j$ if and only if slot $s$ is assigned to $j$, and fiber $f$ is allocated on the path. For all $j, f, s$,
\begin{equation*}
y^{j}_{f} + z^{j}_{s} \leq 1 + x^{j}_{f,s}, \quad y^{j}_{f} + z^{j}_{s} \geq 2x^{j}_{f,s}
\end{equation*}

f) Any slot on any fiber can accommodate at most one request. For all $f,s$,
\vspace{-10pt}
\begin{equation*}
\sum_{j=1}^{J}x^{j}_{f,s} \leq 1
\end{equation*}

g) The value of $w^{j,v}_{f_a,f_b}$ is based on the value of $y^{j}_{f_a}$ and $y^{j}_{f_b}$. For any two fibers $f_a$ and $f_b$, there is a switching between them to route request $j$ if $f_a$ is an input fiber and $f_b$ is an output fiber of a node, and both $f_a$ and $f_b$ are allocated to the request. For all $j, v, f_a, f_b$,
\begin{equation*}
y^{j}_{f_a}\cdot IN_{v}^{f_a} + y^{j}_{f_b}\cdot OUT_{v}^{f_b} \leq 1 + w^{j,v}_{f_a,f_b}
\end{equation*}
\begin{equation*}
y^{j}_{f_a}\cdot IN_{v}^{f_a} + y^{j}_{f_b}\cdot OUT_{v}^{f_b} \geq 2w^{j,v}_{f_a,f_b}
\end{equation*}

h) There is a waveband from fiber $f_a$ to fiber $f_b$ if at least one request is routed by this switching. For all $v,f_a,f_b$,
\begin{equation*}
\sum_{j=1}^{J}w^{j,v}_{f_a,f_b} \geq W^{v}_{f_a,f_b},\quad
\sum_{j=1}^{J}w^{j,v}_{f_a,f_b} \leq J\cdot W^{v}_{f_a,f_b}
\end{equation*}

i) The total required number of WSSs for the whole network should be no larger than the given budget.
\vspace{-5pt}
\begin{equation*}
\sum_{v=1}^{M}C_{v}\zeta_{v}+(1-C_{v})\Gamma _{v}\leq T
\end{equation*}

j) The routing capacity limit should be satisfied for each input fiber on any FLEX node. Here, $\Lambda$ is a large value to loosen the constraint for CONV nodes. For all $v,f_a$,
\vspace{-5pt}
\begin{equation*}
\sum_{f_b=1}^{F}W^{v}_{f_a,f_b}\leq B + \Lambda C_{v}
\end{equation*}

\section{Heuristic Solutions}\label{sec.label63}

\indent Our proposed heuristics to solve the problem are presented in detail in this section. We first describe our framework to solve the {\em RFBSA} problem for static instances. Then we propose the node placement scheme which is based on both topology information and results of {\em RFBSA}. Finally we apply the proposed schemes to accommodate dynamic traffic requests.

\subsection{Route, Fiber, Band and Slot Assignment Framework}

\indent For a given set of requests, we first sort them based on the bandwidth requirements in non-increasing order. The requests are stored in an ordered list $\mathcal{J}$. We consider resource assignments in order from the list, i.e., starting with the request having highest bandwidth demand.

\indent We start by describing the auxiliary layered-graph framework for solving the assignment problem. The vertexes in an auxiliary graph represent all input and output fibers (denoted by I vertices and O vertices). The edges between vertexes are classified into two types. The first type of edges is from an I vertex to an O vertex, representing the switching inside a corresponding physical node. Since there is a band limit on FLEX nodes in the network, we define a switching cost between each input fiber and each output fiber on a physical node. The costs of this type of edges are related to the wavebanding status of the node. The other type of edges from an O vertex to an I vertex represents the physical link connection. The costs of such edges are related to the spectrum usage on the particular fiber.

\indent In the following, we briefly describe the algorithm for route, fiber, band, and slot assignment. The pseudocode is shown in Algorithm \ref{alg0}. We modify the algorithm in our previous work \cite{wu2017joint} to reduce the time complexity.

\indent For each request $j$, we examine the $K$ shortest paths. For each path $k$, we generate a basic auxiliary graph $AG^k$ based on the links and fibers on this particular path. The algorithm {\sc Switching Cost Update} is used to update the costs of the first type of edges in $AG^k$. 

\indent The switching cost function is shown in Equation (\ref{equ1}). We denote $C_{v,f_{in},f_{out}}$ as the switching cost or cost of the edge from vertex $f_{in}$ to vertex $f_{out}$ in physical FLEX node $v$.
\begin{equation}\label{equ1}
C_{v,f_{in},f_{out}}=\left\{
\begin{array}{l l}
0,  \ \ \ \ \ \mbox{if \ waveband \ is \ established}\\
\alpha{\cdot}\frac{b}{B}, \ \ \mbox{if \ $b < B$} \\
\infty,  \ \ \ \mbox{otherwise}
\end{array}
\right.
\end{equation}

\noindent
The basic idea of Switching Cost Update (Equation (\ref{equ1})) is to give preference to existing wavebands (if any) so as to leave more waveband choices for later requests. When computing the switching cost from an input fiber $f_{in}$ to an outgoing fiber $f_{out}$ of a FLEX node $v$, we use the following rules. If the waveband from $f_{in}$ to $f_{out}$ already exists, the switching cost from $f_{in}$ to $f_{out}$ is set to 0. Otherwise, if the number of already established wavebands from $f_{in}$ to other outgoing fibers of node $v$ is $b < B$, the switching cost is set to $\alpha{\cdot}\frac{b}{B}$, where $\alpha$ is a tuning parameter. If the number of established wavebands corresponding to $f_i$ equals $B$, the switching cost is set to be $\infty$, which means that no new waveband can be established from $f_{in}$. If $v$ is a CONV node, the switching costs from any input fiber to any output fiber of that node are set to 0.

\begin{algorithm}                      % enter the algorithm environment
	\caption{{\sc {\em RFBSA} Algorithm}}          % give the algorithm a caption
	\label{alg0}                           % and a label for \ref{} commands later in the
	\begin{algorithmic}                    % enter the algorithmic environment
		\STATE{Input: $G=(V,E)$, request $j$, $K$ shortest paths}
		\STATE{Output: $path_j$ --- Fiber and slot assignment for ${j}$}
		\FOR{$k$ : $K$ shortest paths of request $j$}
		\STATE{Create auxiliary graph $AG_{k}$}
		\STATE{{\sc Switching Cost Update}} along path $k$
		\STATE{$C_{total}^k = \infty$}
		\FOR{$a$ : outgoing fibers from $s^j$}
		\STATE{Copy $AG_{k}$ to create auxiliary graph $AG_{a}^k$}
		\STATE{$C_{a}^k = \infty$}
		\FOR{$\widetilde{ss}$ : available FS sets of size $d^j$ on $a$}
		\STATE{$SI$ = the start FS index of $\widetilde{ss}$}
		\STATE{Create layered graph $AG_{a,SI}^k$}
		\STATE{{\sc Spectrum Cost Update}}
		\STATE{Use Dijkstra's algorithm to find a lightpath with the smallest cost $C_{a,\widetilde{ss}}^k$}
		\IF{$C_{a,\widetilde{ss}}^k<\infty$}
		\STATE{$C_{a}^k=C_{a,\widetilde{ss}}^k$}
		\STATE{BREAK}
		\ENDIF
		\ENDFOR
		\IF{$C_{a}^k<C_{total}^k$}
		\STATE{$C_{total}^k=C_{a}^k$}
		\ENDIF
		\ENDFOR
		\ENDFOR
		\STATE{$path_r$ = a lightpath $p$ with the smallest $C_{total}^k, \forall k$}
	\end{algorithmic}
\end{algorithm}

\indent Then we copy $AG^k$ to generate an auxiliary graph for each outgoing fiber from $s^j$ in the first link of that path for the request. Each auxiliary graph involves one outgoing fiber from source node $s^j$, and includes all remaining nodes and fibers in the $k$th shortest path. Let $|f^k_{l_0}|$ denote the number of fibers on the first link of the $k$th shortest path. In total, we generate $\sum_{k}|f^k_{l_0}|$ auxiliary graphs. 

\indent By fixing one outgoing fiber $a$ in the first link of the path, we only need to consider the available FSs on that fiber, thereby reducing the complexity of searching for a valid lightpath. In an auxiliary graph $AG_a^k=(V_a^k,E_a^k)$, we define a layered graph on FS level $AG_{a,SI}^k=(V_{a,SI}^k,E_{a,SI}^k)$ for a specific spectrum range from FS index $SI$ to $SI + d^j - 1$. Thus, in the layered graphs, spectrum contiguity, spectrum continuity, and spectrum non-overlapping constraints are automatically satisfied. For auxiliary layered graph $AG_{a,SI}^k$ corresponding to the outgoing fiber $a$ of $s^j$, we need to do Spectrum Cost Update for the second type of edges, which correspond to all fibers on the physical links along the path regarding an available slot set $\widetilde{ss}$ on fiber $a$.  

\begin{algorithm}                      % enter the algorithm environment
	\caption{{\sc Switching Cost Update}}          % give the algorithm a caption
	\label{alg1}                           % and a label for \ref{} commands later in the
	\begin{algorithmic}                    % enter the algorithmic environment
		\STATE{Input: $v$, $f_{in}$, $f_{out}$}
		\STATE{Output: $C_{v,f_{in},f_{out}}$}
		\FOR{$f_{in}$ : all incoming fibers in node v}
		\FOR{$f_{out}$ : all outgoing fibers in node v}
		\STATE{$b$ = the number of established wavebands from $f_{in}$}
		\IF{the waveband is established}
		\STATE{$C_{v,f_{in},f_{out}}=0$}
		\ELSIF{$b<B$}
		\STATE{$C_{v,f_{in},f_{out}}={\alpha}{\cdot}b/B$}
		\ELSE
		\STATE{$C_{v,f_{in},f_{out}}=\infty$}
		\ENDIF
		\ENDFOR
		\ENDFOR
	\end{algorithmic}
\end{algorithm}

\begin{algorithm}                      % enter the algorithm environment
	\caption{{\sc Spectrum Cost Update}}          % give the algorithm a caption
	\label{alg2}                           % and a label for \ref{} commands later in the
	\begin{algorithmic}                    % enter the algorithmic environment
		\STATE{Input: $f$, $\widetilde{ss}$, $SI$, $d^j$}
		\STATE{Output: $C_{a,\widetilde{ss},f}^k$}
		\STATE{$m_{f}$ = the largest FS index on $f$}
		\IF{$\widetilde{\widetilde{ss}}$ is not available on $f$}
		\STATE{$C_{a,\widetilde{ss},f}^k=\infty$}
		\ELSIF{$SI+d^j{\leq}m_{f}$}
		\STATE{$C_{a,\widetilde{ss},f}^k=1$}
		\ELSE
		\STATE{$C_{a,\widetilde{ss},f}^k=(SI + d^j - m_{f})/\Omega + 1$}
		\ENDIF
	\end{algorithmic}
\end{algorithm}

\indent For each contiguous slot set $\widetilde{ss}$ ranging from $SI$ to $SI + d^j - 1$ that is available on source ouput fiber $a$ of path $k$, the spectrum cost is defined in Equation (\ref{equ2}).
\begin{equation}\label{equ2}
C_{a,\widetilde{ss},f}^k=\left\{
\begin{array}{l l}
1,  \ \ \ \ \ \mbox{if \ $SI + d^j {\leq} m_{f}$}\\
\infty, \ \ \mbox{if \ $\widetilde{ss}$ \ is \ not \ available \ on \ $f$} \\
%SI + d^j - m_{f},  \ \ \ \mbox{otherwise}
(SI + d^j - m_{f})/\Omega + 1,  \ \ \ \mbox{otherwise}
\end{array}
\right.
\end{equation}

\noindent
The basic idea of Spectrum Cost Update (Equation (\ref{equ2})) is to try to not increase MSU on each fiber after the request is established. We denote $m_{f}$ as the largest FS index on fiber $f$ and $C_{a,\widetilde{ss},f}^k$ as the spectrum cost on fiber $f$ for the slot set $\widetilde{ss}$ in $AG_{a}^k$. If $\widetilde{ss}$ is not available on fiber $f$, $C_{a,\widetilde{ss},f}^k$ is set to $\infty$. This means that we cannot establish a lightpath by using the slot set $\widetilde{ss}$ on fiber $f$. If $\widetilde{ss}$ is available and the ending FS index is no more than $m_{f}$, $C_{a,\widetilde{ss},f}^k$ is set to 1. In this case, the local MSU on fiber $f$ would not increase, and the spectrum cost works as a hop count. Otherwise, $C_{a,\widetilde{ss},f}^k$ is set to be $(SI + d^j - m_{f})/\Omega + 1$, which is proportional to the number of slots by which the local MSU increases on that fiber, while taking the hop count into consideration.

\indent For a lightpath $l$ in the $k$th path, we denote $C_{a,\widetilde{ss}}^k$ as the total cost combining Spectrum Costs $\sum_{f}C_{a,\widetilde{ss},f}^k$ of the fiber set $F_l$ and Switching Costs $\sum_{v}C_{v,f_{in},f_{out}}$ of the node set $V_l$ which the lightpath is going through. Once the costs of all edges in the layered graph are determined, we use Dijkstra's algorithm to find the shortest paths from one output fiber of the source node to different input fibers of the destination node, and choose the valid lightpath $p$ with smallest path cost. In order to decrease the algorithm's complexity, we stop generating further layered graphs for this auxiliary graph, and consider $p$ as a {\em candidate lightpath}. We compare the path costs of all the candidate lightpaths and choose a path with the smallest total cost.    

\indent Suppose the total number of links in the network is $L$, the maximum number of fibers per link is $\varepsilon$, and $\Omega$ is an estimated upper bound of the MSU. For each request, the number of hops in the path of a mesh network, denoted as $h$, is usually small compared to $L$. For each output fiber of the source node, $O(\Omega)$ auxiliary graphs are created. Since each path is precalculated, each auxiliary graph consists of $O(h\varepsilon)$ vertex and $O(h\varepsilon^2)$ edges ($O(\varepsilon^2)$ first type of edges on each intermediate node of the path and $O(\varepsilon)$ second type of edges on each link of the path). The dominant part is to find the shortest path in the auxiliary graphs, thus the time complexity of the algorithm is $O(\Omega h\varepsilon^3\log(h\varepsilon))$, which is much less than the time complexity of our previous algorithm $O(\Omega L\varepsilon^3\log(L\varepsilon))$, that generates the auxiliary graphs based on the whole network.

\subsection{Node Placement Schemes}

\indent Given only the budget, i.e., the total number of available small-port-count WSSs for the network, the number and location of FLEX nodes should be determined. Assume all nodes are initially CONV nodes; by replacing CONV nodes with FLEX nodes, the required number of WSSs will be decreased. Then the problem transforms to how many and which CONV nodes should be replaced to meet the budget. Of course, the performance of the network after replacement should be taken into consideration when making the replacements.

\indent Indeed, how many CONV nodes need to be replaced also depends on where they are. Replacing a node with a larger port count may have the same reduction in number of WSSs as replacing two nodes with small port counts. Therefore, we divide the joint placement and assignment problem into two stages. The first stage is to decide the location of FLEX nodes. Then a routing path, fibers on links along the path, waveband, and a slot set are selected for each request by our {\em RFBSA} algorithm.

\subsubsection{Random Node Placement}
\indent We use a random node placement scheme as the baseline. We first try to randomly select a CONV node to replace. The number of required WSSs is reduced accordingly. The procedure of randomly replacing the remaining CONV nodes and updating the required number of WSSs is performed repeatedly until the budget constraint is met.

\subsubsection{Traffic-aware Node Placement}
\indent For a better node placement scheme, the network topology and the traffic requests should be taken into consideration. The first intuition is that if fewer CONV nodes are replaced with FLEX, the performance degradation would be less. In this case, we would like to replace the physical nodes with larger port count first. The physical nodes are sorted by their port counts in descending order and replaced one by one to meet the budget.

\indent On the other hand, physical nodes with large port counts usually have more traffic going through them. Replacing these nodes might affect more connection requests and add routing constraints to them, leading to worse performance. Accordingly, the node placement should consider the traffic pattern, as well as the port counts of physical nodes. 

\indent We assume all nodes are CONVs initially, and the {\em RFBSA} is applied to accommodate all requests in the network. To facilitate node placement, the band usage information and the total number of requests going through each physical node are recorded. A cost function related to the traffic for each node $v$ is defined in Equation (\ref{equ3}). Let $maxb_{v}$ denote the maximum number of bands originated from all input fibers of node $v$. $maxb_{v}$ is related to the waveband usage on each node. $\Delta_{v}$ denotes the number of traffic demands going through node $v$ (excluding requests originating or terminating at this node). Let $D_{v}$ denote the port count of node $v$. The larger $D_{v}$ is, the more reduction in required number of WSSs that is achieved.  When $maxb_{v}$ is not larger than the band limit $B$, we set the cost to 0, otherwise we set the cost to the traffic density ($\Delta_{v} / D_{v}$) that would be affected by the limited banding. The larger $C_{v}$ a node has, the more traffic it will affect when replaced by FLEX. Nodes with a small cost $C_{v}$ are preferred to be replaced. We sort the nodes in ascending order of the costs $C_{v}$, and replace them one by one until the budget constraint is met.
\begin{equation}\label{equ3}
C_{v} = \left\{
\begin{array}{l l}
0,   \ \ \mbox{if $maxb_{v} {\leq} B$}\\
\Delta_{v} / D_{v},  \ \ \ \mbox{otherwise}
\end{array}
\right.
\end{equation}

\indent When all the FLEX nodes are placed, the route, fiber, band and slot assignment for each request should be determined in the mixed network. The {\em RFBSA} algorithm described in the previous subsection is utilized to assign the route, fibers, wavebands and a contiguous slot set to accommodate each request. We finally select the node placement by comparing the performance of the above two strategies for each budget.

\subsection{Extension to Dynamic Instances}
\indent To accommodate dynamically arriving traffic requests, we should determine the node placements first. Given the information on network topology and traffic (request size distribution and traffic pattern), we first generate a large number of static requests according to the given traffic pattern, and use the Traffic Aware Placement introduced in the previous subsection to find appropriate FLEX node deployments. Then the {\em RFBSA} algorithm is applied to the mixed network for the arriving requests. In the dynamic case, the goal is to minimize demand blocking ratio. We adapt the spectrum cost function in the {\em RFBSA} algorithm to achieve good performance. $C_{a,\widetilde{ss},a}^k = SI$ denotes the spectrum cost on one outgoing fiber $a$ from the source node for its available slot set $\widetilde{ss}$ and $C_{a,\widetilde{ss},f}^k = 1$ if $\widetilde{ss}$ is available on fiber $f$ along the path $k$. Here we try to compact the total utilized network resources to leave more resources for later requests. The cost function emphasizes both starting slot and hop count of the path.

\section{Simulation Results}\label{sec.label64}

\indent In this section, we present performance evaluation results for both static and dynamic instances to demonstrate the effectiveness of our proposed method. ILP results for small network instances and simulation results for large network instances are generated to examine the effects of changing budget on the performance of the network. 

\indent Given a set of connection requests and a budget for the network, the number and locations of FLEX nodes should be determined, and networking resources need to be allocated to the requests. Each request represents a connection from a source node to a destination node with a demand size requirement. A uniform traffic pattern means the source/destination nodes are randomly selected from the physical nodes of the network. We assume three types of demands in terms of the number of required FSs~\cite{hasegawa2015flexible}, with the following distribution: 3 slots (40 Gbps) with probability 0.2, 4 slots (100 Gbps) with probability 0.5, and 7 slots (400 Gbps) with probability 0.3. For the FLEX nodes, the maximum number of wavebands $B$ is assumed to be 4 or 9.

\indent For a given set of static traffic requests, we record the total MSU. The average MSU over the total number of fibers is used as the performance measure. The parameter $\alpha$ is set to 1 for RFBSA.\footnote{This was chosen based on observed performance for a large set of parameters.} In the node selection scheme, both band usage and the traffic distribution are considered. RP is used to denote the random node placement strategy, while TAP represents the traffic aware node placement strategy in the following results.

\subsection{Results for a Small Network}
\indent We get single-run ILP results for a 5-node small network as shown in Figure \ref{fig:NPsmallnet}. The number of parallel fibers, $x$, on each link is randomly distributed between 2 and 3 ($x=[2,3]$ per link). The band limit $B = 4$.

\begin{figure}
	\centering
	\includegraphics[scale=0.6]{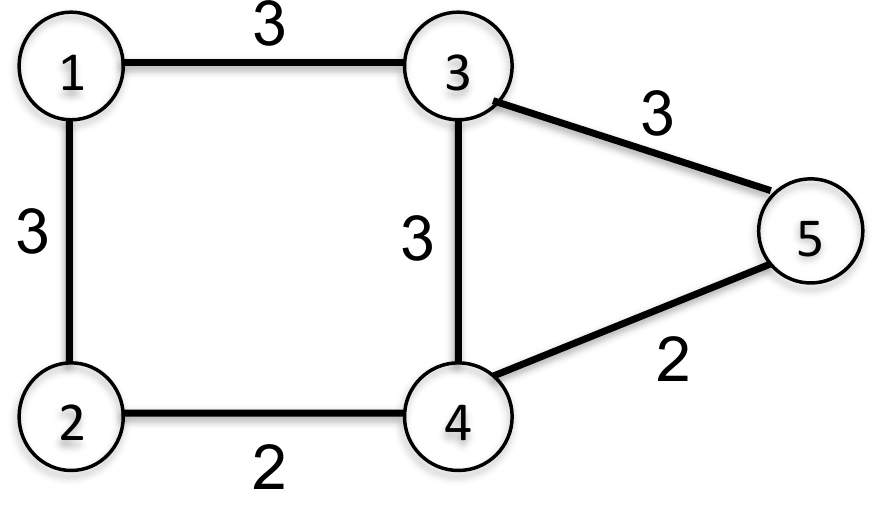}
	\caption{\label{fig:NPsmallnet} Small network. The numbers are the number of fibers on the links.}
\end{figure}

\indent Table \ref{NPILP} shows the average MSUs over all fibers achieved by the ILP and our proposed heuristics for different numbers of requests in the 5-node small network. The given budget is 89 WSSs (which is between the required number of WSSs for all FLEX nodes - 64, and all CONV nodes - 125). We only simulate the $K=1$ case here due to the complexity of the ILP. For example, the execution time of the heuristics is a few seconds, while that of ILP is more than 6 hours. We can see that the results from TAP are better than for RP, and are not far from the ILP's results.

\renewcommand\arraystretch{1.1}
\begin{table}
	\caption{\label{NPILP} Average MSU results for the small network.}
	\centering
	\begin{tabular}{|c|c|c|c|}
		\hline
		\bf{Number of Requests} & RP & TAP & ILP\\ \hline
		\bf{40} & 10.25 & 9.5 & 9.03125 \\ \hline
		\bf{50} & 13.53125 & 12.46875 & 12.125 \\ \hline
		\bf{60} & 15.8125 & 14.96875 & 14 \\ \hline
		%14.96875
		\bf{100} & 24.75 & 23.0625 & 22.21875\\ \hline
	\end{tabular}
\end{table}

\subsection{Results for Larger Topologies}

\indent The larger topologies used for simulations are the NSFNET network and pan-European network. The NSFNET as shown in Figure \ref{fig:NSFNetwork-Detail} has 14 physical nodes and 22 bidirectional links \cite{wu2015comparison}, while the pan-European network as shown in Figure \ref{fig:pan} consists of 28 physical nodes and 43 bidirectional links \cite{wu2016optimal}. The number of fibers, $x$, on each link is randomly distributed between 3 and 5 ($x=[3,5]$ per link). The network topology is fixed for every run of the simulation.

\begin{figure}
	\centering
	\includegraphics[scale=0.4]{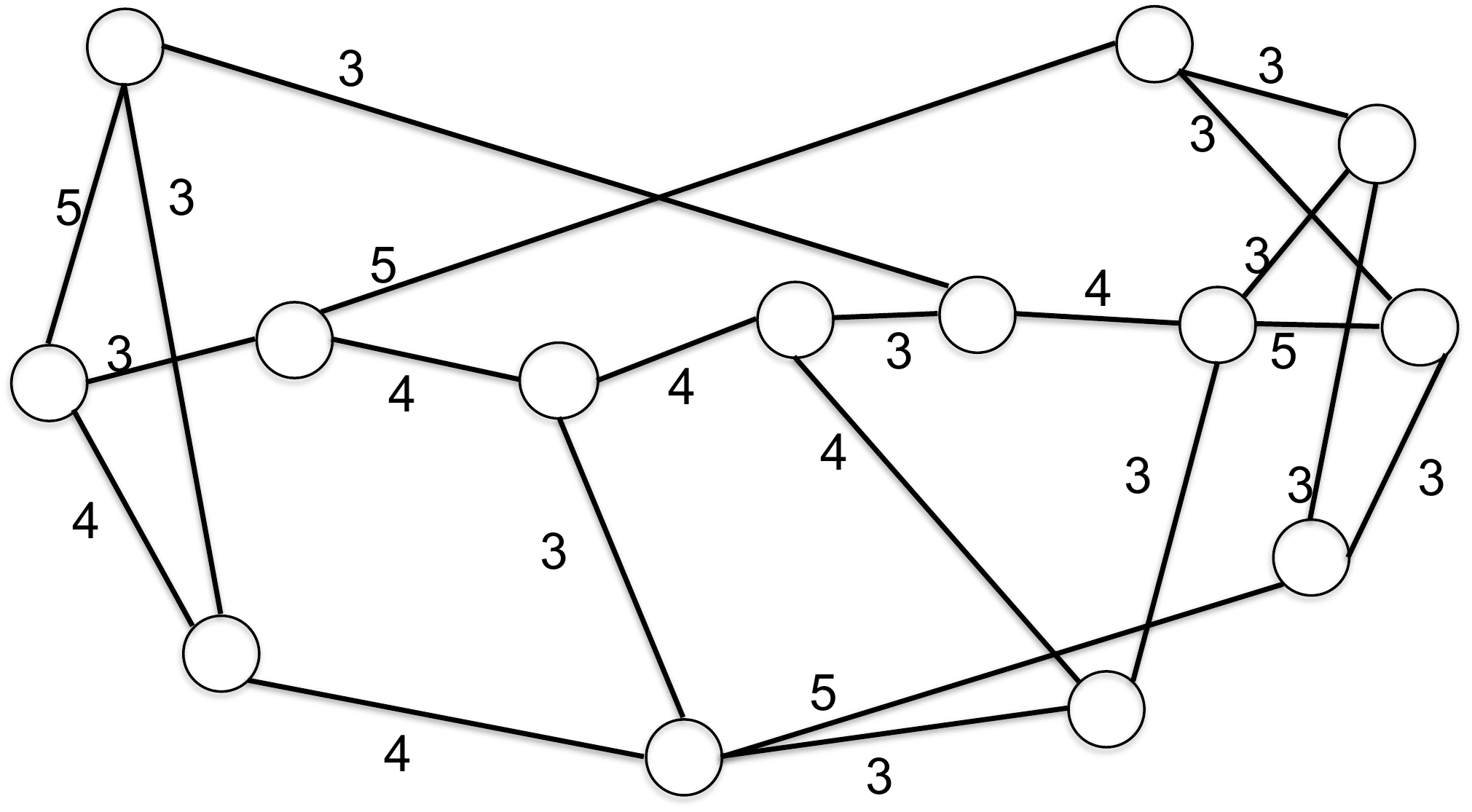}
	\caption{\label{fig:NSFNetwork-Detail} NSF network. The numbers are the number of fibers on the links.}
\end{figure}

\begin{figure}
	\centering
	\includegraphics[scale=0.9]{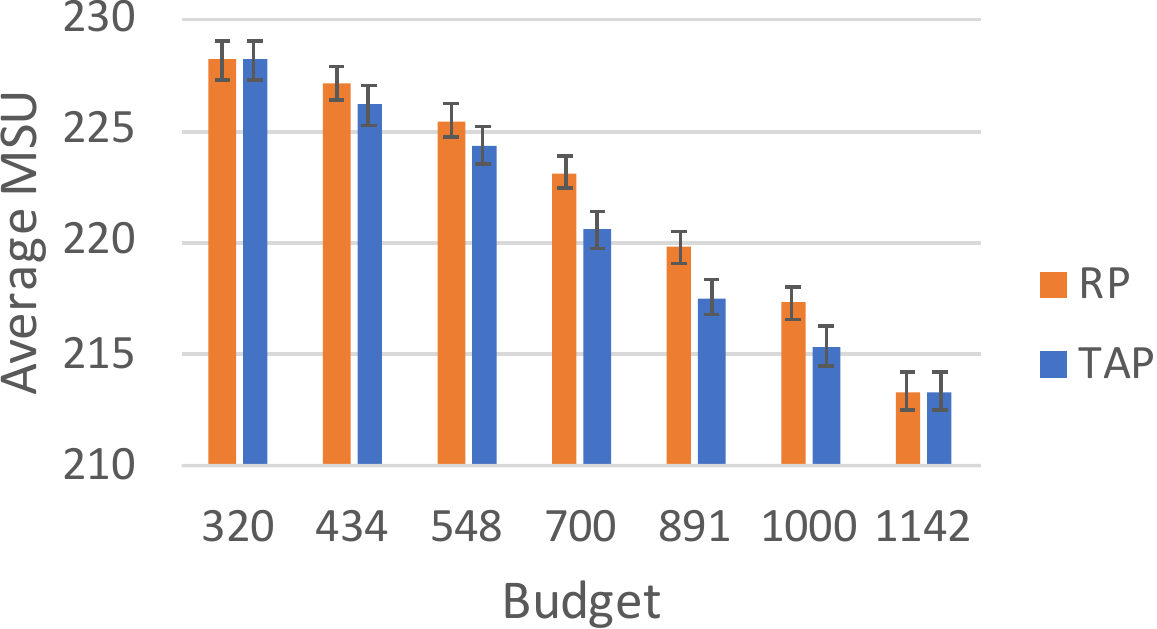}
	\caption{\label{fig:compk1} Network performance vs. budget for NSF network ($B$=4).}
\end{figure}

\begin{figure}
	\centering
	\includegraphics[scale=0.92]{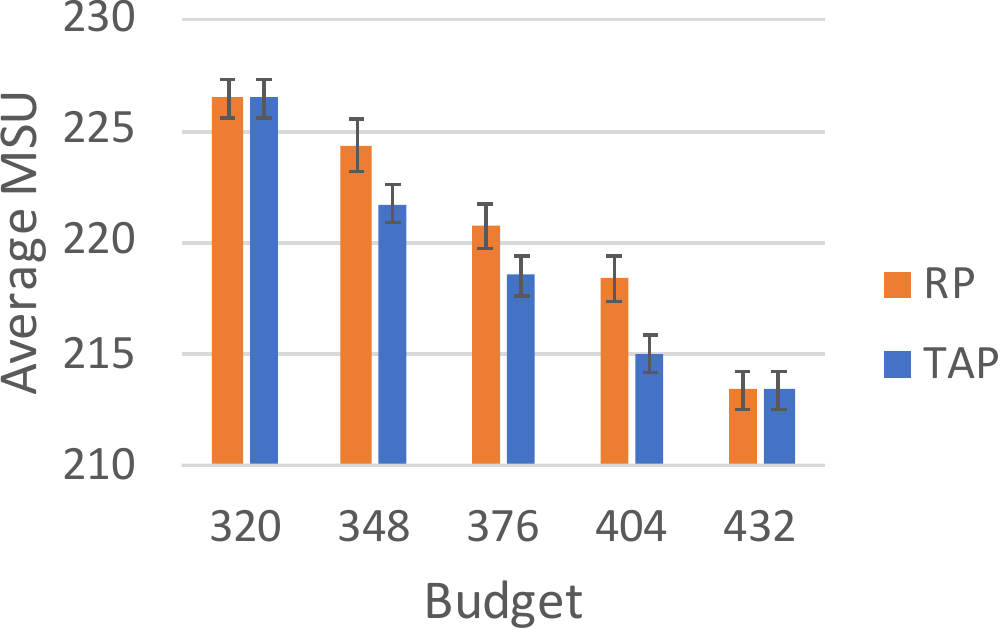}
	\caption{\label{fig:compk2} Network performance vs. budget for NSF network ($B$=9).}
\end{figure}

\renewcommand\arraystretch{1.1}
\begin{table}
	\caption{\label{flexno} Average number of FLEX nodes for NSF network ($B$=4).}
	\centering
	\begin{tabular}{|c|*{7}{c|}}
		\hline
		\backslashbox[0.5em]{Schemes}{Budget}
		&\makebox[0.6em]{320}&\makebox[0.6em]{434}&\makebox[0.6em]{548}&\makebox[0.6em]{700}&\makebox[0.6em]{891}&\makebox[0.6em]{1000}&\makebox[1.5em]{1142}\\\hline
		%\bf{Schemes} & 320 & 434 & 548 & 700 & 891 & 1000 & 1142\\ \hline
		\bf{TAP} & 14 & 11.66 &	9.52 &	6.36 &	3.98 &	2.68 & 0 \\ \hline
		\bf{RP} & 14 & 12.436 &	10.52 &	8.03 &	4.78 &	2.838 & 0 \\ \hline
	\end{tabular}
\end{table}

\indent Figures \ref{fig:compk1} and \ref{fig:compk2} show the performance change in the NSF network as the given budget changes. We conducted 50 trials of simulation, each consisting of 3000 traffic requests, and show the average results with 95$\%$ confidence intervals. Figure \ref{fig:compk1} shows the comparison results when $B=4$, while Figure \ref{fig:compk2} is for $B=9$. We use $K=3$ for all simulations. The range of the budget (number of available WSSs) for the network varies from 320 (all nodes are FLEX) to 1142 (all nodes are CONV) when $B=4$, and to 432 when $B=9$. Table \ref{flexno} shows the illustrative results depicting the average number of FLEX nodes for each budget in the 14-node NSF network when $B = 4$.

\begin{figure}
	\centering
	\includegraphics[scale=0.9]{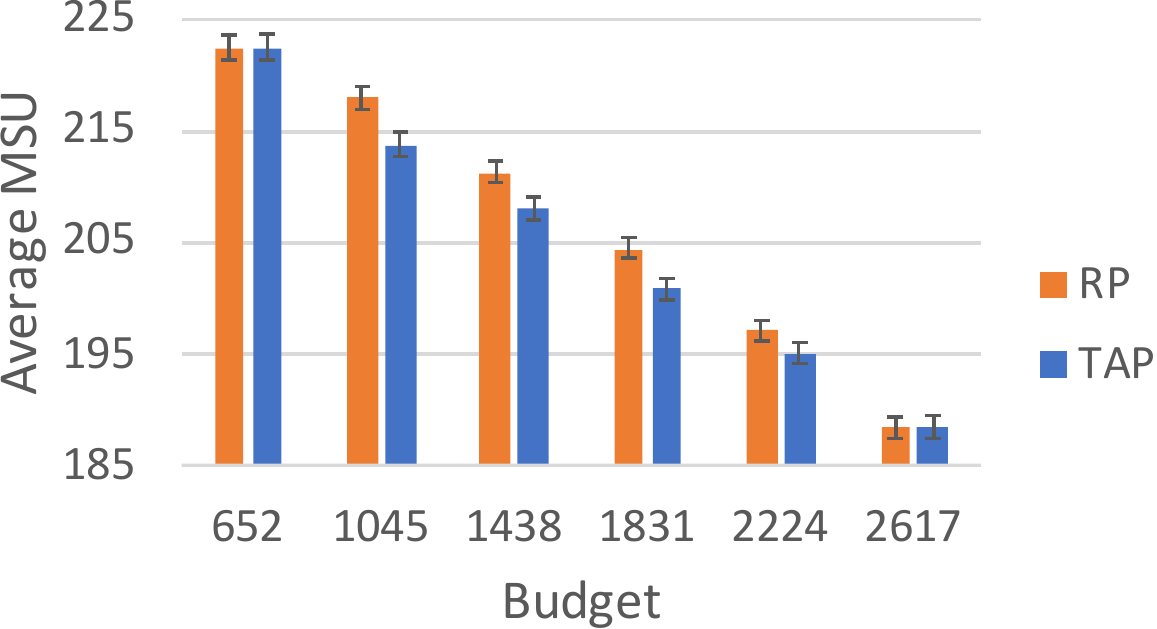}
	\caption{\label{fig:compk3} Network performance vs. budget for pan-European network ($B$=4).}
\end{figure}

\begin{figure}
	\centering
	\includegraphics[scale=0.92]{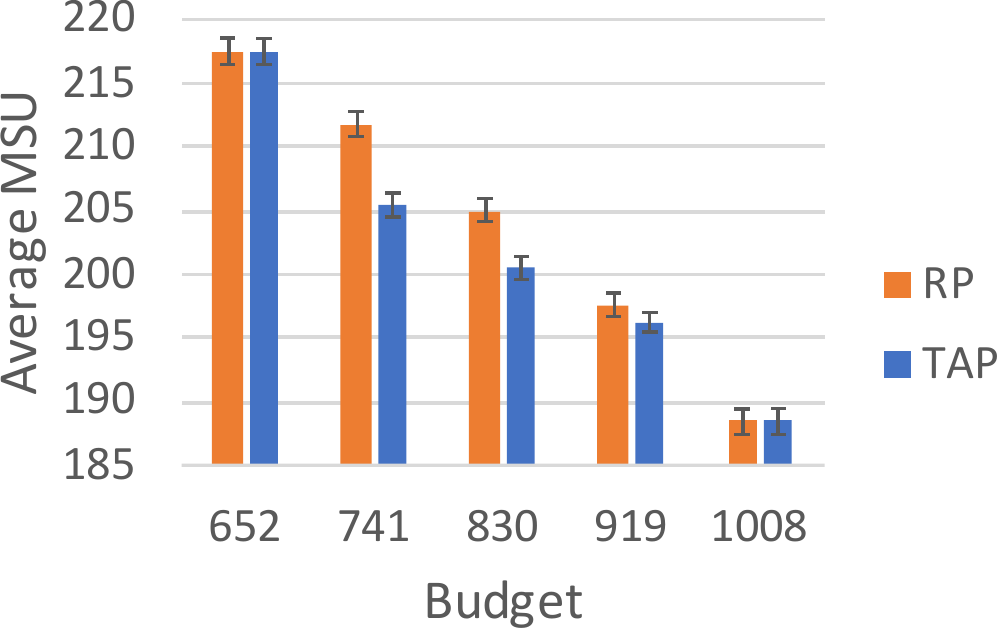}
	\caption{\label{fig:compk4} Network performance vs. budget for pan-European network ($B$=9).}
\end{figure}

\indent Similarly, Figures \ref{fig:compk3} and \ref{fig:compk4} show the performance change in the pan-European network as the given budget changes. We use similar settings for the simulations. Figure \ref{fig:compk3} shows the comparison results when $B=4$, while Figure \ref{fig:compk4} is for $B=9$. The range of the budget (number of available WSSs) for the network varies from 652 (all nodes are FLEX) to 2617 (all nodes are CONV) when $B=4$, and to 1008 when $B=9$.

\indent As expected, the average MSU increases, which means network performance degrades, as the budget decreases (more FLEX nodes and fewer CONV nodes). The comparison of the two node placement strategies is shown in the figures. When all nodes are FLEX or CONV, there is no difference between RP and TAP as the network is the same for both policies. In general, TAP performs better than RP in the sense that fewer slots are required to accommodate all requests. However, even in the worst case, the degradation of network performance is relatively small, which means that it may be possible to replace most CONV nodes with FLEX nodes in the network without a significant penalty. We note that the MSU is increased by $10\%$ to $15\%$ for the two topologies when all CONV nodes are replaced by FLEX nodes. However, the number of WSSs needed can be reduced by a factor of 2 or more in many cases. These results suggest that the slight increase in spectrum cost is more than offset by the savings in WSSs.

\subsection{Results for Dynamic Requests}

\indent The traffic demand is a set of dynamically arriving connection requests. Connection requests arrive to the network according to a Poisson process. Each request has a mean holding time of 1 (arbitrary time unit), and the arrival rate of traffic requests is varied in order to examine the network performance under varying offered loads (denoted by $L$). We use the demand blocking ratio of dynamic traffic requests to indicate the network performance. The parameter $\alpha$ is set to 100\footnote{This was chosen based on observed performance for a large set of parameters.} for {\em RFBSA}. For each simulation, the results of 200,000 dynamic requests excluding 10,000 warm-up requests are recorded.

\indent The NSNET topology is used for this set of simulations. Each link has a random number of fibers, $x$, that are uniformly distributed between 5 and 10 fibers ($x=[5,10]$ per link). We assume the fiber capacity of 352 frequency slots, with each slot having a bandwidth of 12.5 GHz. The banding limit of FLEX nodes is assumed to be 4 (i.e., $B = 4$). The range of the budget for the network varies from 660 (all nodes are FLEX) to 5137 (all nodes are CONV).

\indent We first show a set of simulation results for the uniform traffic pattern. A set of static connection requests is first generated according to this traffic pattern in order to determine the node placement to satisfy a given budget of the network. Then the demand blocking ratio of dynamic requests is evaluated. Figures \ref{fig:u6900} and \ref{fig:u7400} show the performance change in the NSF network as the given budget changes for different traffic loads. We can see that the network performance degrades (demand blocking ratio increases) as the budget decreases. In general, TAP performs better than RP (lower demand blocking ratio under each budget). When the traffic load is small, the difference between TAP and RP is more obvious. 

\begin{figure}
	\begin{tabular}{cc}
		\begin{minipage}[t]{2.8in}
			\centering
			\includegraphics[scale=0.4]{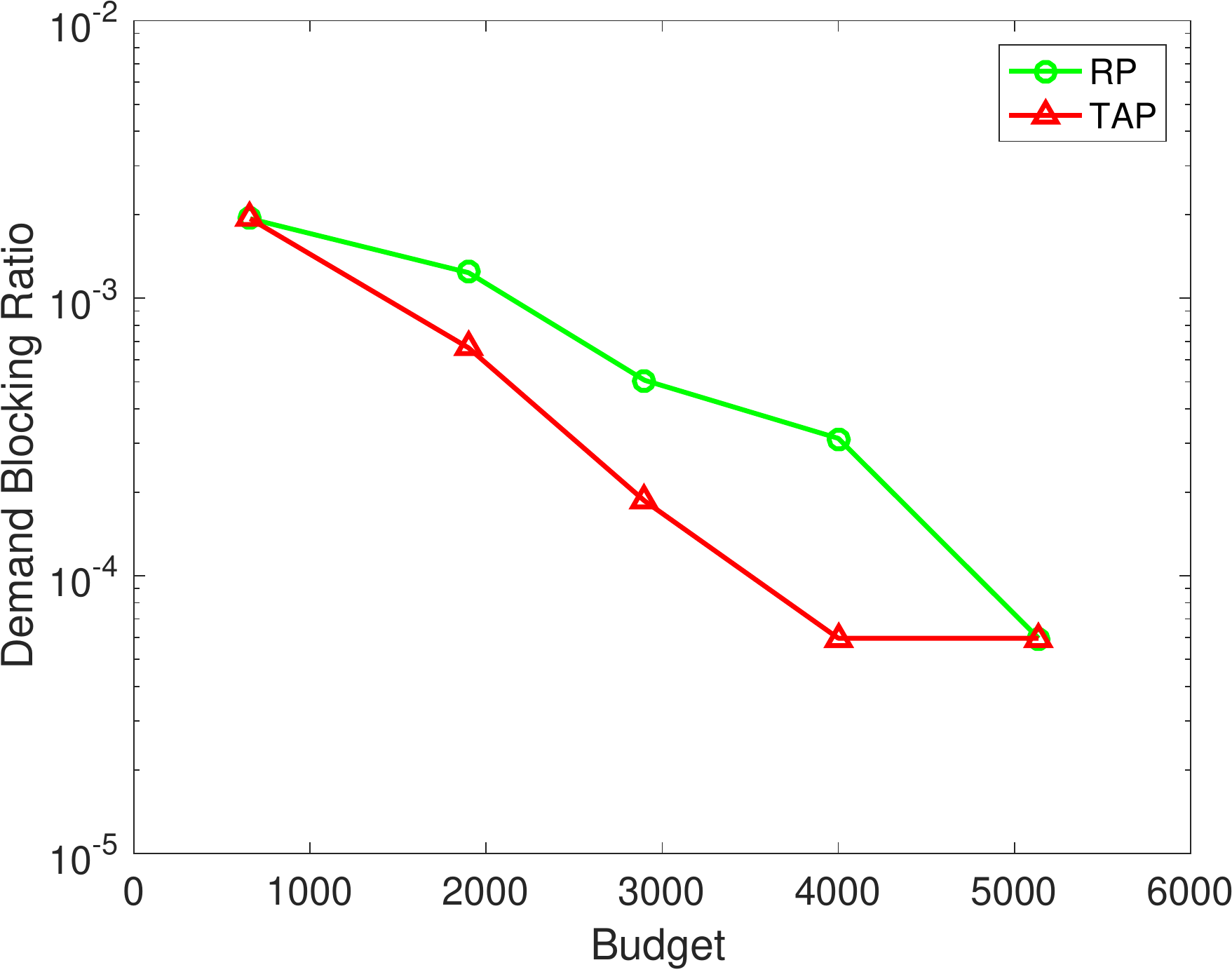}
			\caption{\label{fig:u6900} Network performance vs. budget for NSF network with uniform traffic pattern ($L$=6900).}
		\end{minipage}
		\hspace{0.1in}
		\begin{minipage}[t]{2.8in}
			\centering
			\includegraphics[scale=0.4]{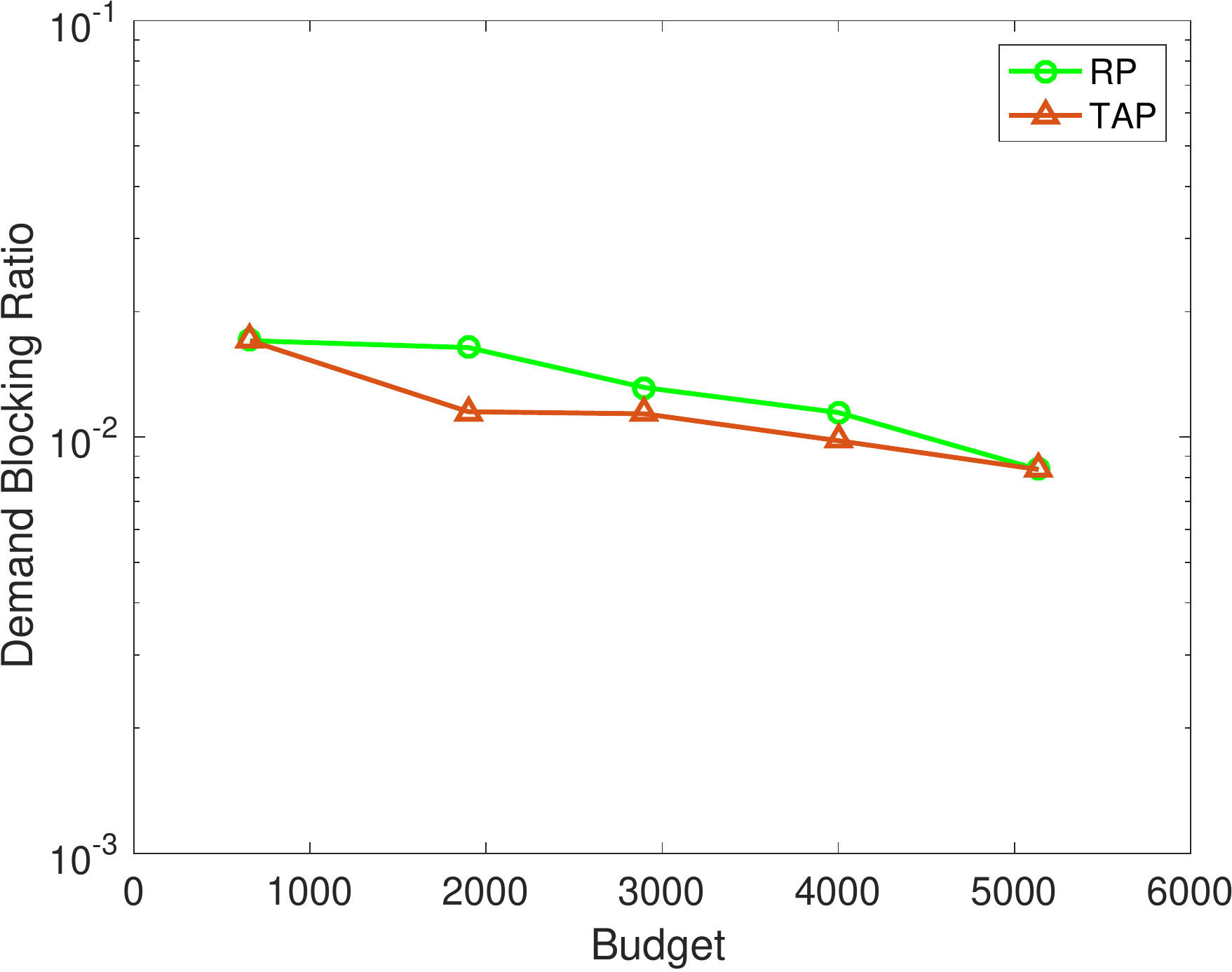}
			\caption{\label{fig:u7400} Network performance vs. budget for NSF network with uniform traffic pattern ($L$=7400).}
		\end{minipage}
	\end{tabular}
\end{figure}

\indent We also conduct simulations for a non-uniform traffic pattern. We assume that nodes with higher connectivity have larger opportunity to send and receive traffic, and choose the probability that a node $v$ is selected as a source or destination, $u_v$, in proportion to the node's physical degree. Again, a set of static connection requests of this traffic pattern is generated in order to determine the node placement for each budget. The comparison results for different network loads are shown in Figures \ref{fig:nu7100} and \ref{fig:nu7400}. We can see similar trends as in previous results.

\begin{figure}
	\begin{tabular}{cc}
		\begin{minipage}[t]{2.8in}
			\centering
			\includegraphics[scale=0.4]{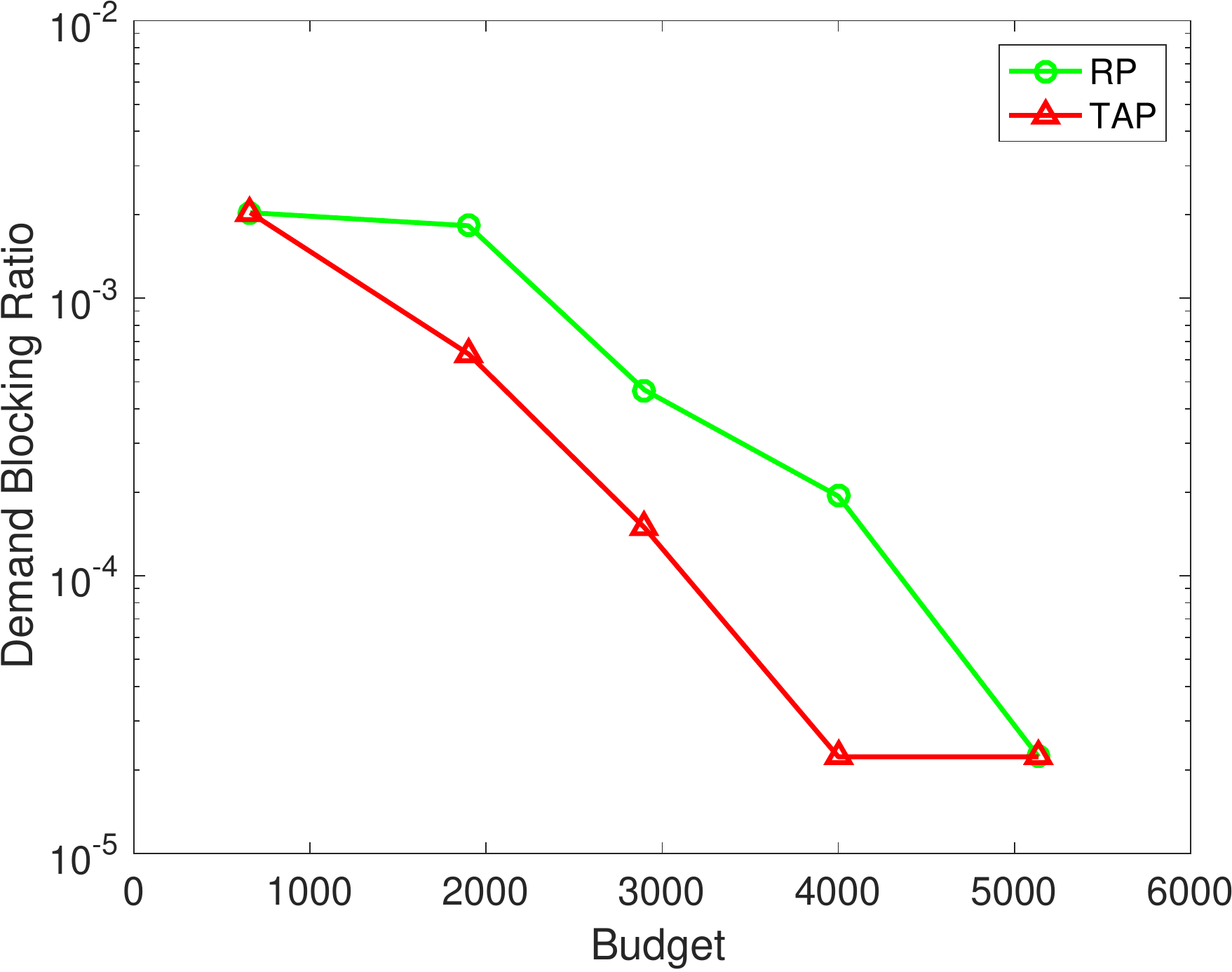}
			\caption{\label{fig:nu7100} Network performance vs. budget for NSF network with non-uniform traffic pattern ($L$=7100).}
		\end{minipage}
		\hspace{0.1in}
		\begin{minipage}[t]{2.8in}
			\centering
			\includegraphics[scale=0.4]{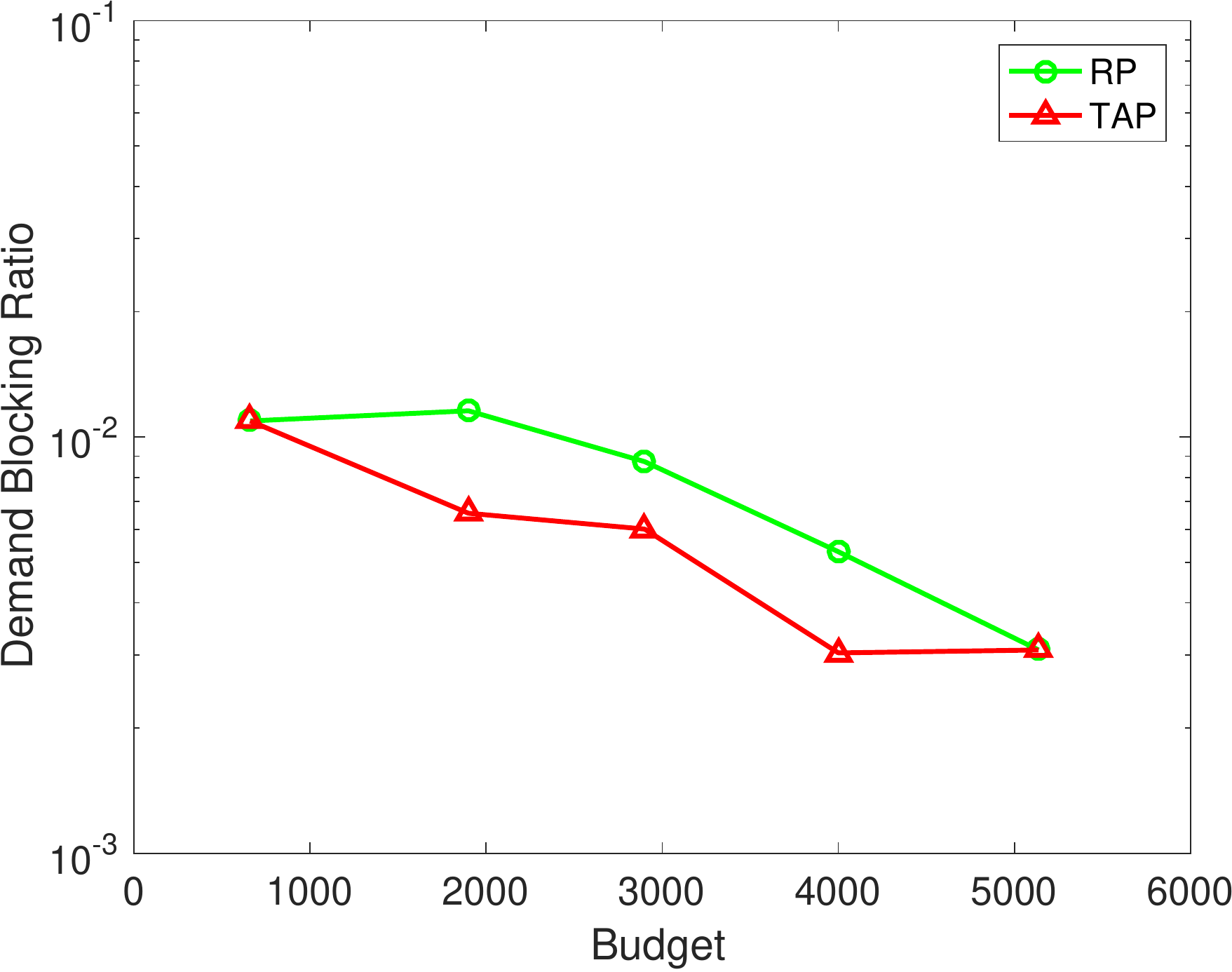}
			\caption{\label{fig:nu7400} Network performance vs. budget for NSF network with non-uniform traffic pattern ($L$=7400).}
		\end{minipage}
	\end{tabular}
\end{figure}

\begin{figure}
	\centering
	\includegraphics[scale=0.8]{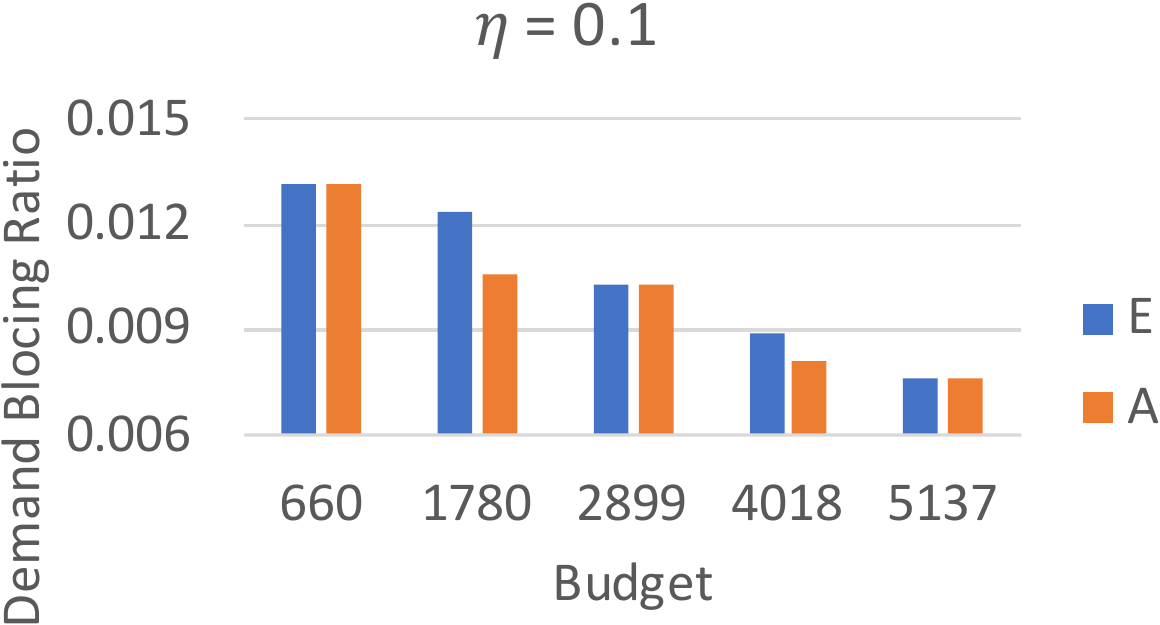}
	\caption{\label{fig:ste1} Sensitivity test for traffic pattern with a perturbation of $0.1$.}
\end{figure}

\indent We also conduct sensitivity tests to see how well our TAP scheme performs when the actual traffic pattern is slightly different from what the network was planned for. Here, we perturb the source/destination selection probabilities $u_v$ by a factor $\eta_v$ ($\eta_v$ could be either positive or negative, with the average of $|\eta_v|$ being $\eta$). In other words, we set the node selection probabilities to be ($u_1(1 + \eta_1), u_2(1 + \eta_2), \cdots, 1 - \sum_{v = 1}^{N-1}{u_v}(1+\eta_v)$). Then, we compare the performance obtained by applying the original node placement to the changed traffic pattern with the performance obtained by applying the node placement for the changed traffic pattern. Figure \ref{fig:ste1} shows results for $\eta = 0.1$. The results are represented by ``A" (with placement based on exact information) and ``E" (without exact information). There is no performance difference between ``E" and ``A" when all nodes are FLEX or CONV because the node placements are the same. For other budgets, the difference in performance between ``E" and ``A" is relatively small, indicating that the algorithm is not sensitive to small changes in traffic pattern. 

\section{Conclusion}\label{sec.label65}
\indent Flexible waveband OXCs require much less hardware cost than the conventional OXCs, with the penalty of some switching constraints. In this work, we jointly consider RFBSA and FLEX node placement to satisfy a network planning budget in terms of the total number of available WSSs \cite{wu2018joint}. In addition to an integer linear programming formulation, we present node placement schemes and extend the cost-function-pluggable auxiliary layered-graph framework in our previous work to solve this problem. The simulation results demonstrate that our heuristic solution saves network resources and achieves good network performance indicated by the average maximum spectrum usage. The framework is also demonstrated to have good performance for dynamic traffic requests.

%\chapter{Network planning and spectrum management for heterogeneous requests in elastic optical networks}
\chapter{Dynamic Routing and Spectrum Assignment for Multi-fiber Elastic Optical Networks}
\label{chap_7} 

\indent In this chapter, we study dynamic Routing and Spectrum Assignment problem in elastic optical networks with multiple fibers per link to further improve network performance. 
%We first propose Integer Linear Programming (ILP) formulations for the network planning based on topology and traffic pattern information. For each arriving request, one of the candidate paths for each source-destination node pair will be selected according to probabilities precomputed from the ILP. Then we show a dedicated partition scheme which provides a particular spectrum range for each request size. A network state aware spectrum assignment algorithm with resource sharing among partitions are explained. 
We use the demand blocking ratio for dynamically arriving requests as the performance indicator. %The simulation results validate the effectiveness of the proposed schemes for both routing and SA problem.

\section{Related Work}\label{sec.label71}

\indent The first comprehensive study on the RSA problem is \cite{wang2011study}, which formally defined the problem and proved its NP-hardness. In order to solve the RSA problem efficiently, many heuristics have been proposed in the literature. Dealing with routing and spectrum allocation problems jointly usually requires high complexity \cite{christodoulopoulos2010routing, wu2017routing,wu2017joint}. Many of the literature works solve this problem by decomposing it and solving the two subproblems in sequence. The shortest path with maximum spectrum reuse algorithm and balanced load spectrum allocation algorithm (which determines the routing by balancing the load in the network) are proposed in \cite{wang2011study} to solve the static RSA problem. In \cite{shirazipourazad2013routing}, given a set of traffic requests, the authors try to find disjoint paths to route requests in order to increase slot reuse in SA. For the dynamic RSA problem, bandwidth fragmentation and spectrum misalignment caused by dynamic set up and tear down of traffic requests hurt the network performance. Many RSA schemes have been proposed in order to overcome the bandwidth fragmentation issue \cite{jinno2010distance, christodoulopoulos2011elastic, zhang2014dynamic, sone2011routing, zhu2013dynamic, chen2015fragmentation, Hsu2016}. Defragmentation algorithms which reroute connections are developed in \cite{patel2012routing, zhang2013bandwidth}. Another scheme to eliminate bandwidth fragmentation without rerouting connections is to partition the spectrum for heterogeneous bandwidth demands. In \cite{wang2012spectrum, wang2014spectrum}, different partition schemes are investigated, and the well-known First Fit spectrum allocation is used. In \cite{fadini2014subcarrier}, the spectrum is partitioned by classifying connection groups. All the above RSA schemes are proposed for EONs with a single fiber per link.

\indent To accommodate increasing traffic demands, deploying multiple fibers on a physical link is needed. The wavelength / slot assignment in networks with multi-fiber links differ from prior schemes of single-fiber networks in that multi-fiber links provide more flexibility in switching wavelengths / frequency slots. It's more complicated to design WA / SA algorithms to fully utilize this flexibility. The WA problem in multi-fiber WDM networks has been explored in the literature \cite{subramaniam1997wavelength, zhang1998wavelength, xu2000wavelength, xu2000dynamic, li2000wavelength, liu2009routing, coiro2011power}. \cite{subramaniam1997wavelength, zhang1998wavelength} take the network state information into consideration and show good performance. Different cost functions based on the network state are utilized in the literature. Due to the heterogeneity of demands in EONs, which fiber to use on each link matters and will cause unnecessary fragmentation if not carefully addressed. In order to fully utilize the flexibility in multi-fiber links while satisfying constraints introduced by EONs, an efficient RSA scheme is necessary. 

\section{Contributions} \label{sec.label72}
\indent In this study, we propose a novel and efficient solution for the dynamic RSA problem in multi-fiber elastic optical networks. Each link in the network contains multiple fibers. A network planning formulation based on the topology information is proposed so that the candidate paths for each source-destination node pair can be selected according to certain predetermined probabilities. Due to the heterogeneous request bandwidths in the network, a partition scheme is applied so that each request size can use a particular spectrum range. In this case, each fiber can be viewed the same and fiber selection on each link can also be avoided. Given the spectrum partitioning, an SA algorithm based on both the network state information and the path selection probabilities is proposed. The Next State Aware SA algorithm is further improved by considering the resource sharing among different partitions. For each arriving request, a routing path is first selected by the precomputed probabilities, then the next state aware SA is performed to assign contiguous FSs to that request. If there are no available FSs, the request will be blocked. We use the demand blocking ratio (ratio of the sum of bandwidths of blocked requests to total bandwidths of all requests) to indicate network performance.

\indent Our contributions are: (1) we propose a multi-path selection method for the routing problem; (2) we utilize a spectrum management scheme to alleviate fragmentation caused by heterogeneous request bandwidths; (3) we adopt a spectrum assignment method that optimizes the state of network after the assignment. We demonstrate that each scheme is effective in improving the spectrum efficiency and network performance.

\section{Background and Motivatoin}\label{sec.label73} 
\indent Consider a network $\mathcal{G}=(\mathcal{V},\mathcal{E})$, where $\mathcal{V}$ denotes a set of optical cross-connects (OXCs), and $\mathcal{E}$ denotes a set of physical links. We assume that end nodes are connected to each OXC so that every OXC can be the origin or destination of connection requests. Each link contains multiple fibers, and the number of fibers on each link may be different. The spectrum resource on each fiber is carved up into frequency slots, with bandwidth of $12.5$ GHz each. All fibers consist of the same number of FSs. At each OXC, assume that there are no switching constraints from input fibers to output fibers. A FS on an input fiber can be switched to the {\em same} FS on any output fiber. 

\indent A connection request to the network is an end-to-end lightpath with a source, a destination node, and a data rate requirement. The number of FSs assigned to a request depends on the data rate requirement and the modulation format. In this work, the same modulation format is assumed. Thus the bandwidth demand for each connection only depends on the data rate requirement. In general, a connection request can be represented by a three-tuple $(source, destination, b)$: a source node, a destination node, and a request size in terms of number of contiguous FSs. Heterogeneous traffic requests with different data rate requirements will consume different number of FSs, which cause the fragmentation issue. 

\indent The routing and spectrum assignment (RSA) problem in multi-fiber EONs is to find a path and a set of contiguous FSs on {\em some} fiber on links along the path for an arriving request. 

\indent Spectrum management such as partitioning the spectrum into dedicated ranges is motivated by the heterogeneous request sizes. Let us take a look at a simple example with request sizes of 2 and 3 slots. If there is no spectrum management based on request classification and First Fit Spectrum/Slot Assignment algorithm is used, for the requests, the network status is as shown in Figure \ref{fig:frag}. When a new request of size 3 from node 2 to node 4 arrives, there are no available FSs and this request will be blocked. If we do spectrum management, such as dedicating the first 4 slots for requests of size 2 and the last 3 slots for requests of size 3, for the same set of requests, the slot assignments will change and the new arriving request can be accommodated. Figure \ref{fig:specmanage} shows the new assignments when utilizing spectrum management.
        
\begin{figure}
	\centering
	\begin{tikzpicture}[yscale=0.58,
	roundnode/.style={circle, minimum width=8mm,draw=blue!80!black,fill=white!30, very thick,inner sep=0pt}, slotnode/.style={rectangle, minimum width=2.5mm, minimum height=4mm,draw=blue!80!black,fill=white!30,inner sep=0pt},
	cross/.style={path picture={ \draw[black, very thick](path picture bounding box.south east) -- (path picture bounding box.north west) (path picture bounding box.south west) -- (path picture bounding box.north east);}}]
	% first the nodes
	\node[roundnode] (n1) at (0,0) {1};
	\node[roundnode] (n2) at (2.5,0) {2};
	\node[roundnode] (n3) at (5,0) {3};
	\node[roundnode] (n4) at (7.5,0) {4};
	% now the edges
	\path[draw, very thick] (n1) -- (n2) -- (n3) -- (n4);
	
	\node[slotnode,fill=red] (1) at (0.5,1.5){};
	\foreach \i/\j in {2/1,3/2}
	\node[slotnode,fill=red] (\i) [right=0 of \j] {};
	\node[slotnode,fill=green] (4) [right=0 of 3] {};
	\node[slotnode,fill=green] (5) [right=0 of 4] {};
	\foreach \i/\j in {6/5,7/6}
	\node[slotnode] (\i) [right=0 of \j] {};
	
	\node[slotnode,fill=yellow] (8) at (3,1.5){};
	\node[slotnode,fill=yellow] (9) [right=0 of 8] {};
	\node[slotnode] (10) [right=0 of 9] {};
	\node[slotnode,fill=green] (11) [right=0 of 10] {};
	\node[slotnode,fill=green] (12) [right=0 of 11] {};
	\node[slotnode] (13) [right=0 of 12] {};
	\node[slotnode] (14) [right=0 of 13] {};
	
	\node[minimum width=20mm, minimum height=2mm] at (5.9, 3.5){New connection blocked};
	\node[minimum width=2mm, minimum height=3mm, cross] at (4.3, 2.5){};
	\node[slotnode] (15) at (5.5,1.5){};
	\foreach \i/\j in {16/15,17/16,18/17,19/18,20/19,21/20}
	%\node[slotnode] ($s\i$) at(0.6+\i*2mm, 1) {};
	\node[slotnode] (\i) [right=0 of \j] {};
	\end{tikzpicture}
	\caption{\label{fig:frag}Slot usages without spectrum management.}
\end{figure}
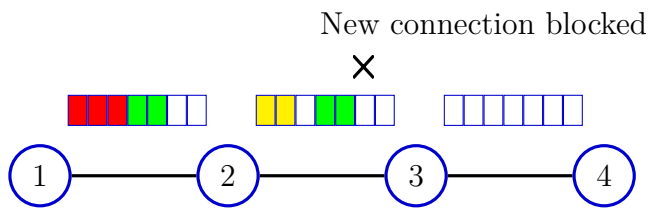

\begin{figure}
	\centering
	\begin{tikzpicture}[yscale=0.58,
	roundnode/.style={circle, minimum width=8mm,draw=blue!80!black,fill=white!30, very thick,inner sep=0pt}, slotnode/.style={rectangle, minimum width=2.5mm, minimum height=4mm,draw=blue!80!black,fill=white!30,inner sep=0pt}]
	% first the nodes
	\node[roundnode] (n1) at (0,0) {1};
	\node[roundnode] (n2) at (2.5,0) {2};
	\node[roundnode] (n3) at (5,0) {3};
	\node[roundnode] (n4) at (7.5,0) {4};
	% now the edges
	\path[draw, very thick] (n1) -- (n2) -- (n3) -- (n4);
	
	\node[slotnode,fill=green] (1) at (0.5,1.5){};
	\node[slotnode,fill=green] (2) [right=0 of 1] {};
	\foreach \i/\j in {3/2,4/3}
	\node[slotnode] (\i) [right=0 of \j] {};
	\foreach \i/\j in {5/4,6/5,7/6}
	\node[slotnode,fill=red] (\i) [right=0 of \j] {};
	
	\node[slotnode,fill=green] (8) at (3,1.5){};
	\node[slotnode,fill=green] (9) [right=0 of 8] {};
	\node[slotnode,fill=yellow] (10) [right=0 of 9]{};
	\node[slotnode,fill=yellow] (11) [right=0 of 10] {};
	\foreach \i/\j in {12/11,13/12,14/13}
	\node[slotnode,fill={rgb,255:red,105; green,70; blue,80}] (\i) [right=0 of \j] {};
	
	\node[slotnode] (15) at (5.5,1.5){};
	\foreach \i/\j in {16/15,17/16,18/17}
	%\node[slotnode] ($s\i$) at(0.6+\i*2mm, 1) {};
	\node[slotnode] (\i) [right=0 of \j] {};
	\foreach \i/\j in {19/18,20/19,21/20}
	\node[slotnode,fill={rgb,255:red,105; green,70; blue,80}] (\i) [right=0 of \j] {};
	
	\end{tikzpicture}
	\caption{\label{fig:specmanage}Slot usages with spectrum management.}
\end{figure}
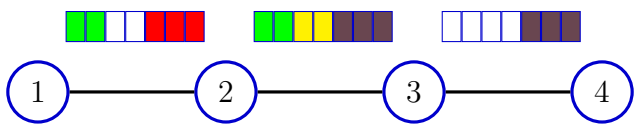

\section{Multi-Path Selection Scheme}\label{sec:Path}

\indent A path between the source and destination nodes should be determined to accommodate the request. Commonly used routing schemes such as using a precomputed fixed single shortest path or one of many paths selected dynamically (based on the network state each time a request arrives) suffer from either poor performance (in the former case) or high computation complexity (in the latter case). Thus we would like to fully utilize multiple paths between each node pair efficiently. We intend to select a routing path based on some predetermined rules, instead of calculating the best path each time for a new arrival request.

\indent Given the traffic loads, we propose to compute a set of candidate paths and the path selection probabilities for each node pair {\em offline}. This approach fully utilizes the multiple paths between each node pair while keeping computational complexity quite low. Given the network topology and traffic loads for each node pair, the selection probabilities for all candidate paths are computed offline via a Mixed Integer Linear Program to span loads over fibers. The result of this is that we obtain the probability distribution of the path to be selected for a request from $s$ to $d$, $p_{(s,d)}^k$, the probability that the $k^{\rm th}$ path should be selected for node pair $(s,d)$. Each time when a request arrives at the network, a path is selected to accommodate the request according to those predetermined probabilities.

\subsection{Notations}
\indent The input parameters are shown in Table \ref{table71}, including the detailed network topology information and the expected load for each route. Load for each route is provided based on the traffic pattern of the network. For example, if the source and destination of requests follow a uniform distribution, the load for each route can be set as 1.

\renewcommand{\arraystretch}{1.2}
\begin{table}
	\centering
	\footnotesize
	\caption{\label{table71} Notation for dynamic RSA problem}
	%\begin{tabular}{|>{\centering}m{1cm}| m{6cm}<{\centering}|}
	\begin{tabular}{|c|m{12cm}<{\centering}|}
		\hline
		\bf{Symbol}& \bf{Meaning} \\ \hline
		$N$ &  number of nodes in the network \\ \hline
		$L$ &  number of links in the network\\ \hline
		$e$ &  an arbitrary network link\\ \hline
		$F^{e}$ &  the number of fibers on link $e$,  $e{\in}\mathcal{E}$\\ \hline
		$R$ &  number of routes in the networks\\ \hline
		$r$ &  an arbitrary route\\ \hline
		$W_r$ & the  load for route $r$; depends on the traffic pattern; $W_r = 1, \quad \forall r$ for uniform traffic\\ \hline
		$K_r$ &  number of candidate paths for route $r$\\ \hline
		$k$ &  an arbitrary candidate path\\ \hline
		$A_{r, k}^{e}$ & = 1 if link $e$ is on the $k^{th}$ candidate path of route $r$; = 0, otherwise\\ \hline
		%\hline
	\end{tabular}
\vspace{10pt}
\end{table}

\subsection{Mixed Integer Linear Programming for Path Selection}
\indent We have derived Mixed Integer Linear Programming formulations to compute the optimal probabilities of candidate paths for each route (source-destination node pair). The objective is to minimize average and maximum traffic load over all fibers (as opposed to links, since different links may have different numbers of fibers) to balance the loads. 

\indent Objective: Minimize
\begin{center}
	$\frac{1}{L}\sum_{e = 1}^{L} \frac{\sum_{r = 1}^{R}W_{r}y^{e}_{r}}{F^{e}} + 
	\max_{e} \frac{\sum_{r = 1}^{R}W_{r}y^{e}_{r}}{F^{e}}$
\end{center}

\indent Variables:

a) The probability of selecting the $k^th$ candidate path of route $r$: 
\begin{equation*}
0 \leq p^{k}_{r} \leq 1.
\end{equation*}

b) The percentage of load for route $r$ that traverses link $e$: 
\begin{equation*}
0 \leq y^{e}_{r} \leq 1.
\end{equation*}

\indent Constraints:

a) The total probability of selecting candidate paths for a route should be $1.0$. For all $r$,
\begin{equation*}
	\sum_{k=1}^{K_r} p^{k}_{r} = 1,
\end{equation*}

b) The percentage of load for a route $r$ on a link depends on the candidate path selection probabilities. For all $r, e$,
\begin{equation*}
	y^{e}_{r} = \sum_{k=1}^{K_r} A_{r, k}^{e} p^{k}_{r}
\end{equation*}

\subsection{Evaluation}
\indent Figure \ref{fig:ps} shows the comparison between the performance (demand blocking ratio) of First-Fit slot assignment algorithm with a fixed single shortest path routing (SSP) and that of FF with our multi-path selection scheme (MPS) for 1 million dynamic requests. The detailed simulation settings are described in Section \ref{sec.label74}. We can see that our routing scheme works much better than the fixed single shortest path scheme. Since the selection probabilities can be precomputed via the ILP model, there will be no increment in computation overhead.

\begin{figure}
	\includegraphics[draft = false, scale = 0.45]{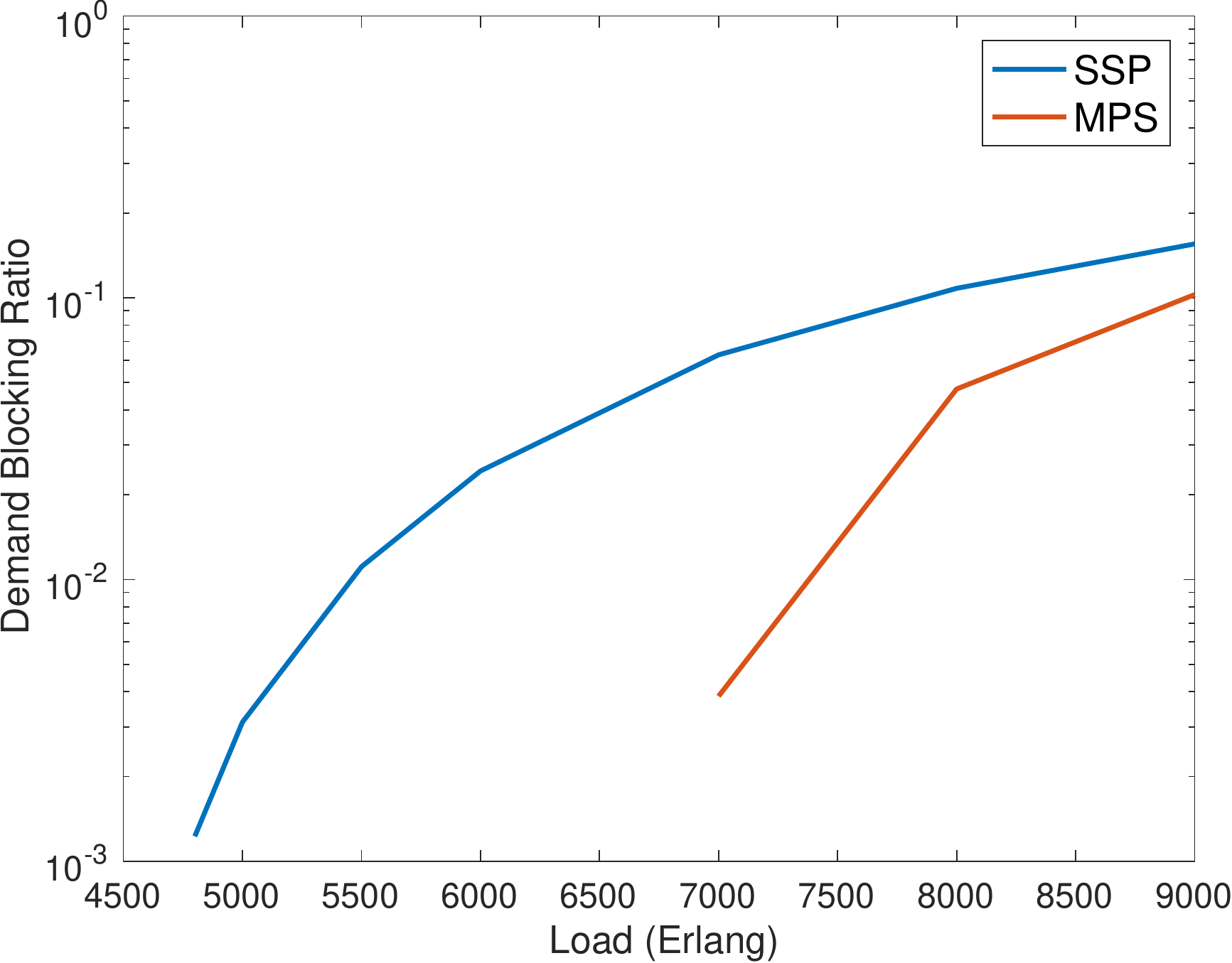}
	\centering
	\caption{\label{fig:ps} Effectiveness of the path selection scheme.}
\end{figure}

\section{Spectrum Assignment with Partitioning}\label{spectrum}
\indent We look at the spectrum assignment problem in this section. When a request arrives, a path is selected to accommodate the request by simply using the probability distribution computed above. After a path is chosen for the arriving request, SA is performed to assign contiguous FSs to that request. If there are no available slots on the selected path, the request will be blocked.

\subsection{Spectrum Management}

\indent To eliminate the fragmentation caused by the mismatch of heterogeneous request bandwidths during the dynamic set-up and tear-down of traffic, a dedicated partition is utilized. The spectrum is partitioned into different segments, each dedicated to traffic requests with the same bandwidth. 

\indent We use the number of contiguous slots to represent the size of a request. Suppose there are $M$ sets of connections with different request sizes $b_j,\ j = 1\cdots M$. Then the whole spectrum can be partitioned into $M$ segments. The number of contiguous FSs dedicated to each segment is denoted by $P_j,\ j = 1\cdots M$. Let $S$ be the size of the whole spectrum, that is, the total number of FSs in a fiber. Then the sizes of segments should meet the constraint: $\sum_{j = 1}^{M} P_j = S$. In the segment for request set $j$, we have $P_j / b_j$ bins (a bin is a consecutive set of FSs, with bin size of $b_j$) in each fiber.

\indent Assume that the traffic distribution is known, with $\rho_j$ being the probability for connection requests of size $b_j$. The total probability of the traffic distribution should be 1.0, $\sum_{j = 1}^{M} \rho_j = 1.0$. The distribution information is then used to calculate the sizes of different segments. 
\begin{equation}\label{parti}
P_j = S\cdot \frac{\rho_j \cdot b_j}{\sum_{j = 1}^{M} \rho_j \cdot b_j}
\end{equation}

\indent Through this partition, spectrum fairness can be achieved to some extent. For example, suppose there are 352 FSs on each fiber, there are 3 sets of requests with sizes $(b_1, b_2, b_3) = (3, 4, 7)$ and densities $(\rho_1, \rho_2, \rho_3) = (0.2, 0.5, 0.3)$. Then according to Equation \ref{parti}, we get the segment sizes $(P_1, P_2, P_3) = (45, 152, 154)$, which correspond to 15, 38, and 22 bins for requests of size 3, 4, and 7, respectively, and 75 bins in total. The distribution of the number of bins for each request size are almost the same with the traffic distribution.

\indent Figure \ref{fig:ffp} shows the spectrum efficiency improvement caused by dedicated spectrum partition. The baseline -- FF without partitioning is compared with the result of dedicated partition (PD) with FF spectrum assignment. Both use the single shortest path routing (SSP). We can see that even without resource sharing among partitions, there is a steady improvement in spectrum efficiency. The spectrum underutilization for small workloads can be improved by resource sharing among different spectrum segments.

\begin{figure}
	\includegraphics[draft = false, scale = 0.45]{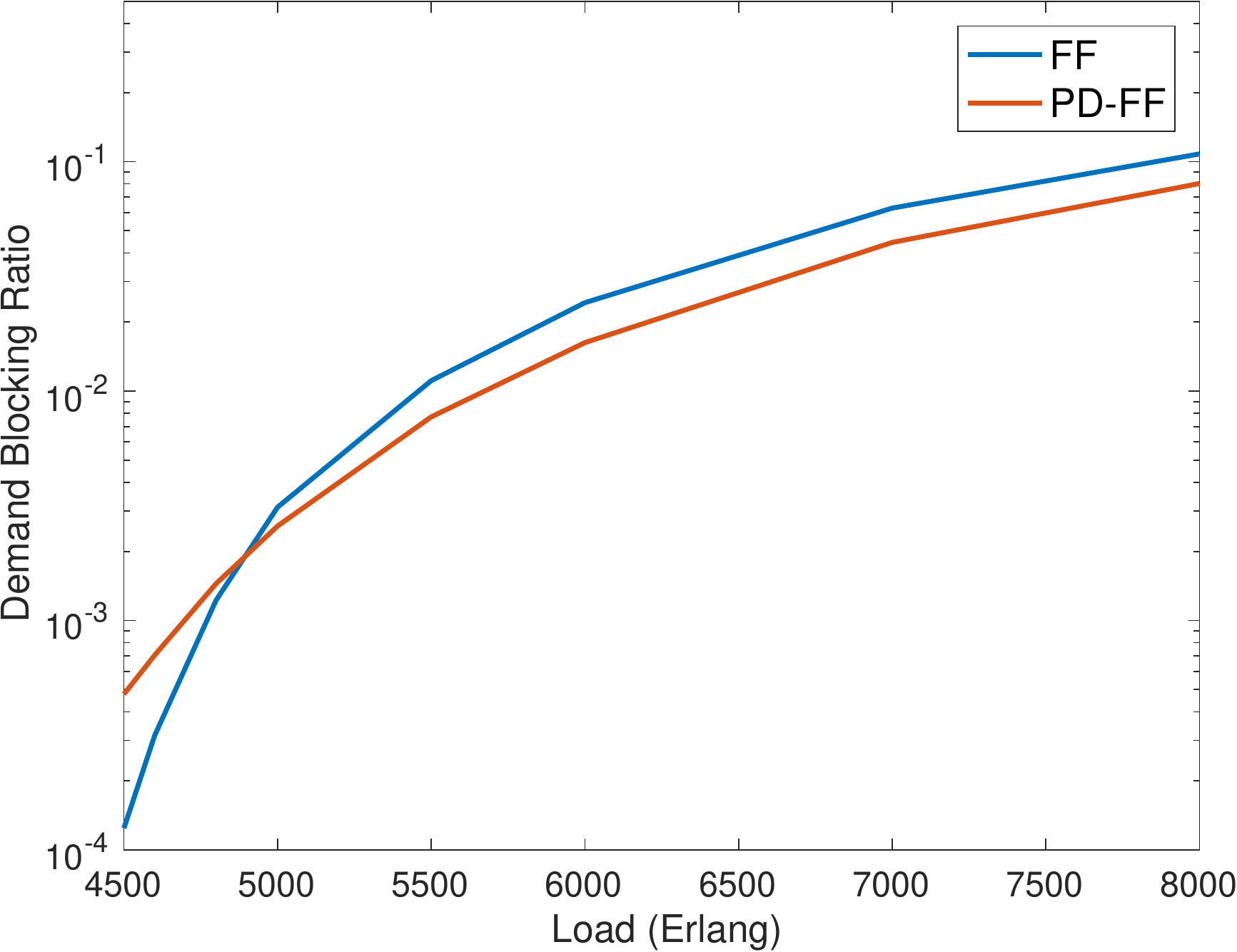}
	\centering
	\caption{\label{fig:ffp} Effectiveness of spectrum partition.}
\end{figure}

\subsection{Next State Aware Spectrum Assignment}

\indent In this section, we propose a Next State Aware (NSA) Spectrum Assignment algorithm based on both the next state (i.e., the network state {\em after} the request is set up) and the path selection probabilities for requests, given the spectrum partitioning. 

\indent Since the whole spectrum has been partitioned into dedicated segments, each providing consecutive slots for a particular request size, the spectrum assignment problem can be addressed by scheduling a bin in the specific segment to a request. In the $j$th dedicated segment of the spectrum, we have $P_j / b_j$ FS bins (a bin is a set of $b_j$ consecutive FSs) in each fiber. Suppose the request has a bandwidth requirement of $b_j$ FSs; the SA problem is to assign a bin of size $b_j$ for the request. The spectrum assignment problem is then transformed to select an available bin in the dedicated segment to accommodate a request without disrupting existing connections so that the blocking probability is minimized.

\indent For each new request, we only need to consider the network state corresponding to the specific spectrum segment which this request size belongs to, based on the partitioning in the previous section. The network state denotes the current status of the network, such as the slot usage in fibers. Since dedicated partitioning has been determined, we only need to look at the status of each bin in that segment. In order to reduce the blocking probability, we should select a bin that can provide good network state after the request is established, e.g. the path capacities are least reduced.     

\indent In the rest of this section, we assume $X$ to be the number of bins in the segment the request belongs to. We first define the link capacity $c_e^x$ of link $e$ on bin $x$ in one network state as the number of fibers on which $x$ is unused on the link. Initially, in an empty network, $c_e^{x} = F^{e}, \forall x$. The path capacity is determined based on capacities of links along the path. For a path $k$, the path capacity on bin $x$ is defined as the least link capacity on bin $x$ along the path (which is also the link capacity of the most congested link along the path),
\begin{equation}
C_k^x = \min_{e\in \Xi(k)} c_e^x. 
\end{equation}

\noindent
where $\Xi(k)$ is the set of links in path $k$. Then the path capacity $C_k$ is the total capacity across all bins,
\begin{equation}
C_k = \sum_{x = 1}^{X} C_k^x.
\end{equation}

\noindent
Since for each route (source-destination node pair) $r$ in the network, we've already determined the selection probabilities $p_r^k$ for each path candidate $k$, the capacity for that route is then defined as
\begin{equation}
C_r = \sum_{k = 1}^{K_r} p_r^k C_k. 
\end{equation}

\noindent
Then the network state can be represented as the total capacity over all routes $\sum_{r=1}^{R} C_r$.

\indent Suppose a request is assigned to a candidate path $\kappa$. After the request is established on a bin in the spectrum segment, there may be capacity loss for other routes in the network. We should choose a bin that causes the least capacity loss among all available bins to accommodate the request. In order to calculate the capacity loss, we utilize the definition of conflict graph $\mathcal{G'}=(\mathcal{V'},\mathcal{E'})$, where each vertex $k \in \mathcal{V'}$ represents a path $k$ in the network, and an undirected edge $(k_1, k_2) \in \mathcal{E'}$ denotes that paths $k_1$ and $k_2$ shares at least one link. The capacity loss will occur only in the paths which contain common links with $\kappa$. So when calculating the total capacity loss, we only need to look at paths that are connected to $\kappa$ in the conflict graph $\mathcal{G'}$. We use $\psi^{\kappa}$ to denote this set of paths.

\indent Let's consider the total capacity loss $\zeta_{x}$ when bin $x$ is chosen for path $\kappa$. If at least one of the common links between $\kappa$ and a path $k' \in \psi^{\kappa}$ have the minimum capacity on bin $x$ along path $k'$, $C_{k'}^{x}$ will be decreased by 1 after the establishment of the request. Otherwise, when all common links between $\kappa$ and $k'$ have capacity larger than the current path capacity $C_{k'}^{x}$, the link capacity decrement of the common links caused by the request establishment will not affect $C_{k'}^{x}$ and there will be no capacity loss for path $k'$. We use $\upsilon_{k'}$ to denote the capacity loss in path $k'$. Taking the path selection probability into account, if path $k'$ belongs to route $r$ and the probability of selecting $k'$ for $r$ is $p_{r}^{k'}$, then $\upsilon_{k'}$ equals to either $p_{r}^{k'}$ or $0$, depending on the network state. The total capacity loss after bin $x$ is chosen will be $\zeta_{x} = \sum_{k' \in \psi^{\kappa}} \upsilon_{k'}$. We should assign the bin $x*$ which causes least capacity loss by the request, e.g., 
\begin{equation}
x* = \arg \min_{x \in \Omega(\kappa)} \zeta_{x}
\end{equation}

\noindent
where $\Omega(\kappa)$ is the set of available bins in path $\kappa$ for the request.

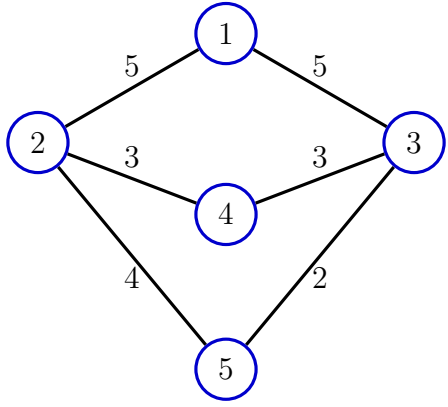
\begin{figure}
	\centering
	\begin{tikzpicture}[yscale=0.58,
	inner/.style={circle, minimum width=8mm,draw=blue!80!black,fill=white!30, very thick,inner sep=0pt},
	]
	% first the nodes
	\node[inner] (1) at (0,2.5) {1};
	\node[inner] (2) at (-2.5,0) {2};
	\node[inner] (3) at (2.5,0) {3};
	\node[inner] (4) at (0,-1.65) {4};
	\node[inner] (5) at (0,-5.2) {5};
	% now the edges
	\path[draw, very thick] (1) --node[pos=0.5,above] {5} (2) --node[pos=0.5,above] {3} (4) --node[pos=0.5,above] {3} (3) --node[pos=0.5,above] {5} (1);
	\path[draw, very thick] (2) --node[pos=0.5,below] {4} (5) --node[pos=0.5,below] {2} (3);
	\end{tikzpicture}
	\caption{\label{fig:RSA_smallnet}A small 5-node network topology.}
\end{figure}

\indent Let's take a small 5-node network as Figure \ref{fig:RSA_smallnet} as an example. There are 5 physical nodes and 6 physical links in the network. The number of fibers are $5, 5, 3, 4, 3, 2$ on bidirectional links $(1,2), (1,3), (2,4), (2,5),(3,4),(3,5)$, respectively. Given the path candidates for each route, based on the multi-path selection formulations, we got the probabilities as in Table \ref{psexample}.

\begin{table}
	\centering
	\footnotesize
	\caption{Path selection probabilies for the small network.}
	\label{psexample}
	\begin{tabular}{|c|c|c|}
		\hline
		\rule{0pt}{12pt}\bf{Route} & \bf{Candidate Paths} & \bf{Selection Probability} \\
		\hline
		1 $\sim$ 2 & 1-2 & 1.0 \\ \hline
		1 $\sim$ 3 & 1-3 & 1.0 \\ \hline
		\multirow{2}{*}{1 $\sim$ 4} & 1-2-4 & 1/3 \\ \cline{2-3} 
		& 1-3-4 & 2/3 \\ \hline
		\multirow{2}{*}{1 $\sim$ 5} & 1-2-5 & 1.0 \\ \cline{2-3} 
		& 1-3-5 & 0.0 \\ \hline
		\multirow{3}{*}{2 $\sim$ 3} & 2-1-3 & 1.0 \\ \cline{2-3} 
		& 2-4-3 & 0.0 \\ \cline{2-3} 
		& 2-5-3 & 0.0 \\ \hline
		2 $\sim$ 4 & 2-4 & 1.0 \\ \hline
		2 $\sim$ 5 & 2-5 & 1.0 \\ \hline
		3 $\sim$ 4 & 3-4 & 1.0 \\ \hline
		3 $\sim$ 5 & 3-5 & 1.0 \\ \hline
		\multirow{2}{*}{4 $\sim$ 5} & 4-2-5 & 2/3 \\ \cline{2-3} 
		& 4-3-5 & 1/3 \\ \hline
	\end{tabular}
\vspace{10pt}
\end{table}

\indent When a new request from source node 2 to destination node 5 arrives, from Table \ref{psexample}, path $2-5$ is selected. Then we should check the network state of the spectrum segment this request belongs to. Based on the conflict graph, we can see that only paths $1-2-5$, $4-2-5$, $2-5-3$ have common links with path $2-5$. So we only need to check the state of partial network. Assume there are $4$ bins $x_1, x_2, x_3, x_4$ in the segment. Current network states are shown as Figure \ref{netstate}. The number attached to each link denotes current link capacity on a bin, which is the number of fibers the bin is unused on.

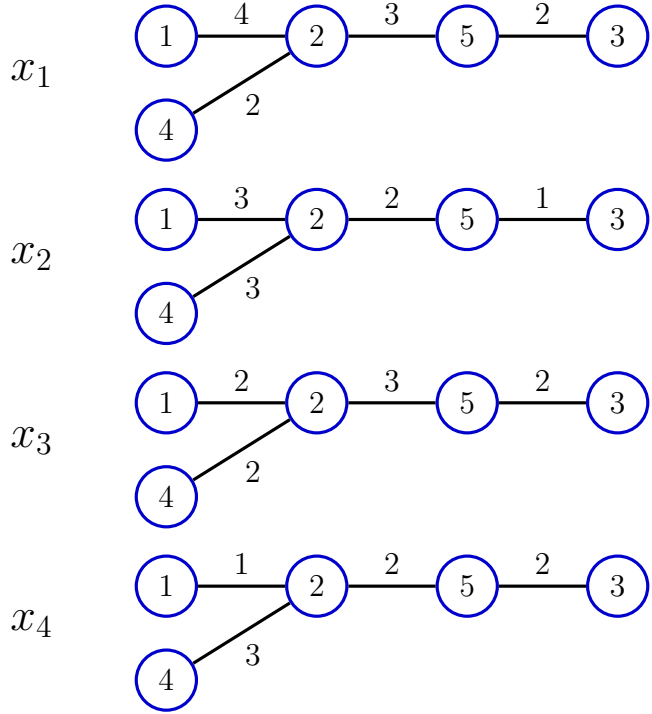
\begin{figure}
	\centering
	\begin{tikzpicture}[yscale=0.5,inner/.style={circle, minimum width=8mm,draw=blue!80!black,fill=white!30, very thick,inner sep=0pt}]
	% first the nodes
	\node[inner sep=0pt] at (-0.8, -1) {\parbox{\linewidth/6}{\Large $x_{1}$}};
	\node[inner] (1) at (0, 0) {1};
	\node[inner] (2) at (2,0) {2};
	\node[inner] (5) at (4,0) {5};
	\node[inner] (3) at (6,0) {3};
	\node[inner] (4) at (0,-2.5) {4};
	% now the edges
	\path[draw, very thick] (1) --node[pos=0.5,above] {4} (2) --node[pos=0.5,above] {3} (5) --node[pos=0.5,above] {2} (3);
	\path[draw, very thick] (4) -- node[pos=0.5,below, xshift = 1.5mm]{2} (2);
	\node[inner sep=0pt, below, yshift = -18mm]{};
	%\node[inner sep=0pt, below, yshift = -20mm] {\parbox{\linewidth/2}{\captionof{subfigure}{Link capacities on bin $x_1$.\label{subfig:hi}}}};
	\end{tikzpicture}
	\begin{tikzpicture}[yscale=0.5,inner/.style={circle, minimum width=8mm,draw=blue!80!black,fill=white!30, very thick,inner sep=0pt}]
	% first the nodes
	\node[inner sep=0pt] at (-0.8, -1) {\parbox{\linewidth/6}{\Large $x_{2}$}};
	\node[inner] (1) at (0,0) {1};
	\node[inner] (2) at (2,0) {2};
	\node[inner] (5) at (4,0) {5};
	\node[inner] (3) at (6,0) {3};
	\node[inner] (4) at (0,-2.5) {4};
	% now the edges
	\path[draw, very thick] (1) --node[pos=0.5,above] {3} (2) --node[pos=0.5,above] {2} (5) --node[pos=0.5,above] {1} (3);
	\path[draw, very thick] (4) -- node[pos=0.5,below, xshift = 1.5mm]{3} (2);
	\node[inner sep=0pt, below, yshift = -18mm]{};
	\end{tikzpicture}
	\begin{tikzpicture}[yscale=0.5,inner/.style={circle, minimum width=8mm,draw=blue!80!black,fill=white!30, very thick,inner sep=0pt}]
	% first the nodes
	\node[inner sep=0pt] at (-0.8, -1) {\parbox{\linewidth/6}{\Large $x_{3}$}};
	\node[inner] (1) at (0,0) {1};
	\node[inner] (2) at (2,0) {2};
	\node[inner] (5) at (4,0) {5};
	\node[inner] (3) at (6,0) {3};
	\node[inner] (4) at (0,-2.5) {4};
	% now the edges
	\path[draw, very thick] (1) --node[pos=0.5,above] {2} (2) --node[pos=0.5,above] {3} (5) --node[pos=0.5,above] {2} (3);
	\path[draw, very thick] (4) -- node[pos=0.5,below, xshift = 1.5mm]{2} (2);
	\node[inner sep=0pt, below, yshift = -18mm]{};
	\end{tikzpicture}
	\begin{tikzpicture}[yscale=0.5,inner/.style={circle, minimum width=8mm,draw=blue!80!black,fill=white!30, very thick,inner sep=0pt}]
	% first the nodes
	\node[inner sep=0pt] at (-0.8, -1){\parbox{\linewidth/6}{\Large $x_{4}$}};
	\node[inner] (1) at (0,0) {1};
	\node[inner] (2) at (2,0) {2};
	\node[inner] (5) at (4,0) {5};
	\node[inner] (3) at (6,0) {3};
	\node[inner] (4) at (0,-2.5) {4};
	% now the edges
	\path[draw, very thick] (1) --node[pos=0.5,above] {1} (2) --node[pos=0.5,above] {2} (5) --node[pos=0.5,above] {2} (3);
	\path[draw, very thick] (4) -- node[pos=0.5,below, xshift = 1.5mm]{3} (2);
	\end{tikzpicture}
	\caption{\label{netstate}Network state example -- link capacities of partial network on bin ${\normalsize x_1, x_2, x_3, x_4}$.}
\end{figure}

\indent Let's take bin $x_2$ as an example to see how the total capacity loss is calculated. Since link $(2,5)$ has the least link capacity along both path $1-2-5$ and $4-2-5$, if $x_2$ is allocated to the new request, both path capacities will be decreased. According to the selection probabilities of these two paths, capacity loss is $1.0$ for path $1-2-5$, and $\frac{2}{3}$ for path $4-2-5$. For path $2-5-3$, the selection probability is $0$, so we don't need to take this path capacity into account. In total, the capacity loss is $\frac{5}{3}$ if bin $x_2$ is used. Similarly, capacity losses for $x_1, x_3, x_4$ are $1.0, 0, \frac{2}{3}$, respectively. Bin $x_3$ will be allocated to accommodate this request.

\indent To evaluate the efficiency of our NSA algorithm, we run simulations for traffic requests with a single size. The performance of NSA is compared with that of baseline -- First Fit spectrum assignment. Both use the single shortest path routing (SSP). The first set of comparison is on a 20-node unidirectional ring with 10 fibers per link. Figure \ref{fig:nsa_ring} shows a huge spectrum efficiency improvement by utilizing NSA in the unidirectional ring network. The second set of comparison is on the 14-node NSF network, with different number of fibers on each link. In Figure \ref{fig:nsa_nsf}, the spectrum efficiency improvement in the mesh network is not as obvious as that in the unidirectional ring network. It indicates that when there are more paths overlapping (more common links), our algorithm works better. But in Figure \ref{fig:nsa_nsf}, we can still see $9\%-31\%$ performance improvement for the mesh network.

\begin{figure}
	\includegraphics[draft = false, scale = 0.45]{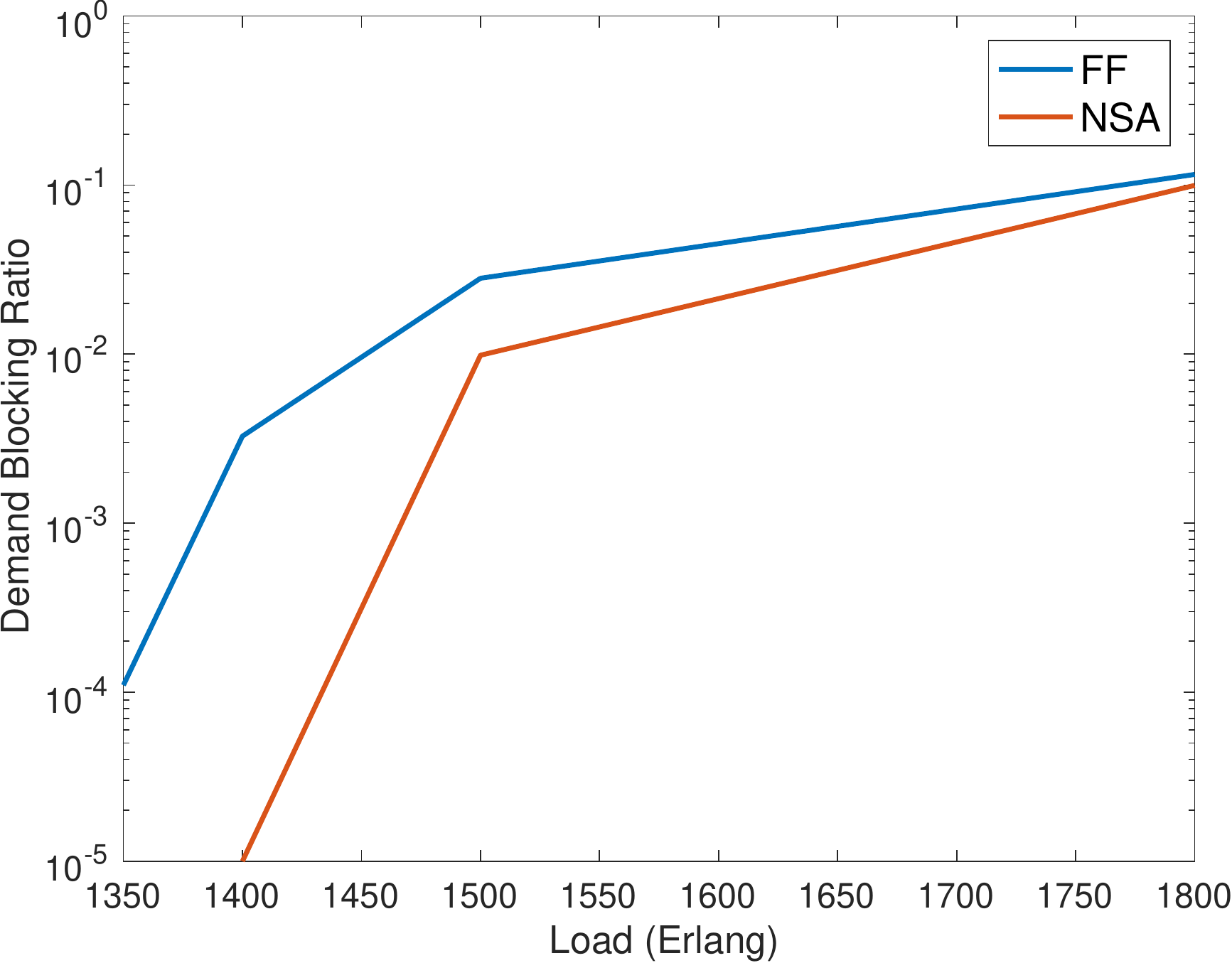}
	\centering
	\caption{\label{fig:nsa_ring} Effectiveness of NSA algorithm in a 20-node unidirectional ring network.}
\end{figure}

\begin{figure}
	\includegraphics[draft = false, scale = 0.45]{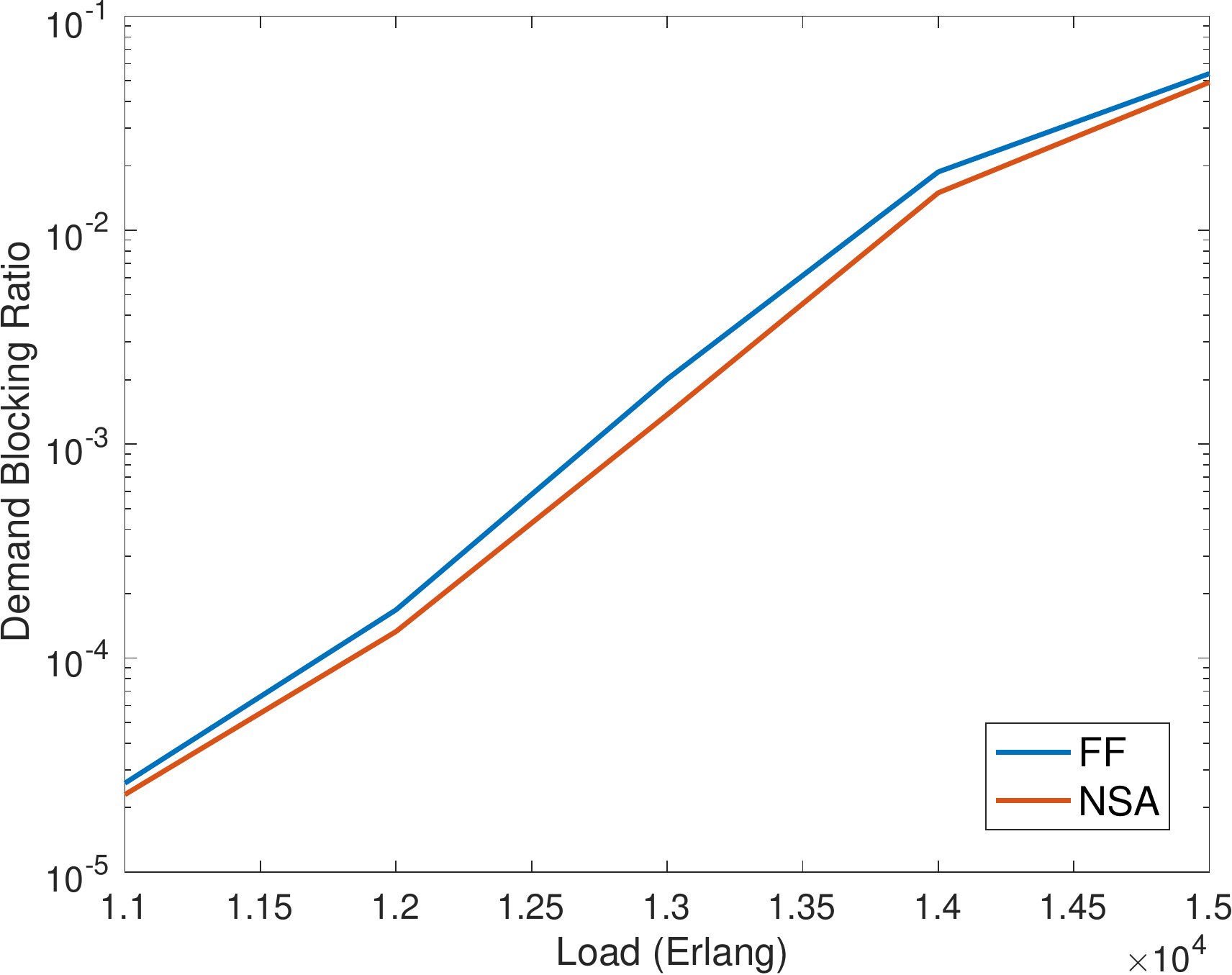}
	\centering
	\caption{\label{fig:nsa_nsf} Effectiveness of NSA algorithm in a 14-node NSF network.}
\end{figure}

\indent Figure \ref{fig:msp} shows the spectrum efficiency improvement caused by both dedicated spectrum partition and our proposed NSA algorithm. The baseline -- FF is compared with the result of dedicated partition (PD) with NSA spectrum assignment. Both use single shortest path routing (SSP). We can see that even without resource sharing among different partitions, there is a steady improvement in spectrum efficiency. The underutilization of spectrum in the small load can be eliminated by resource sharing among spectrum segments.

\begin{figure}
	\includegraphics[draft = false, scale = 0.45]{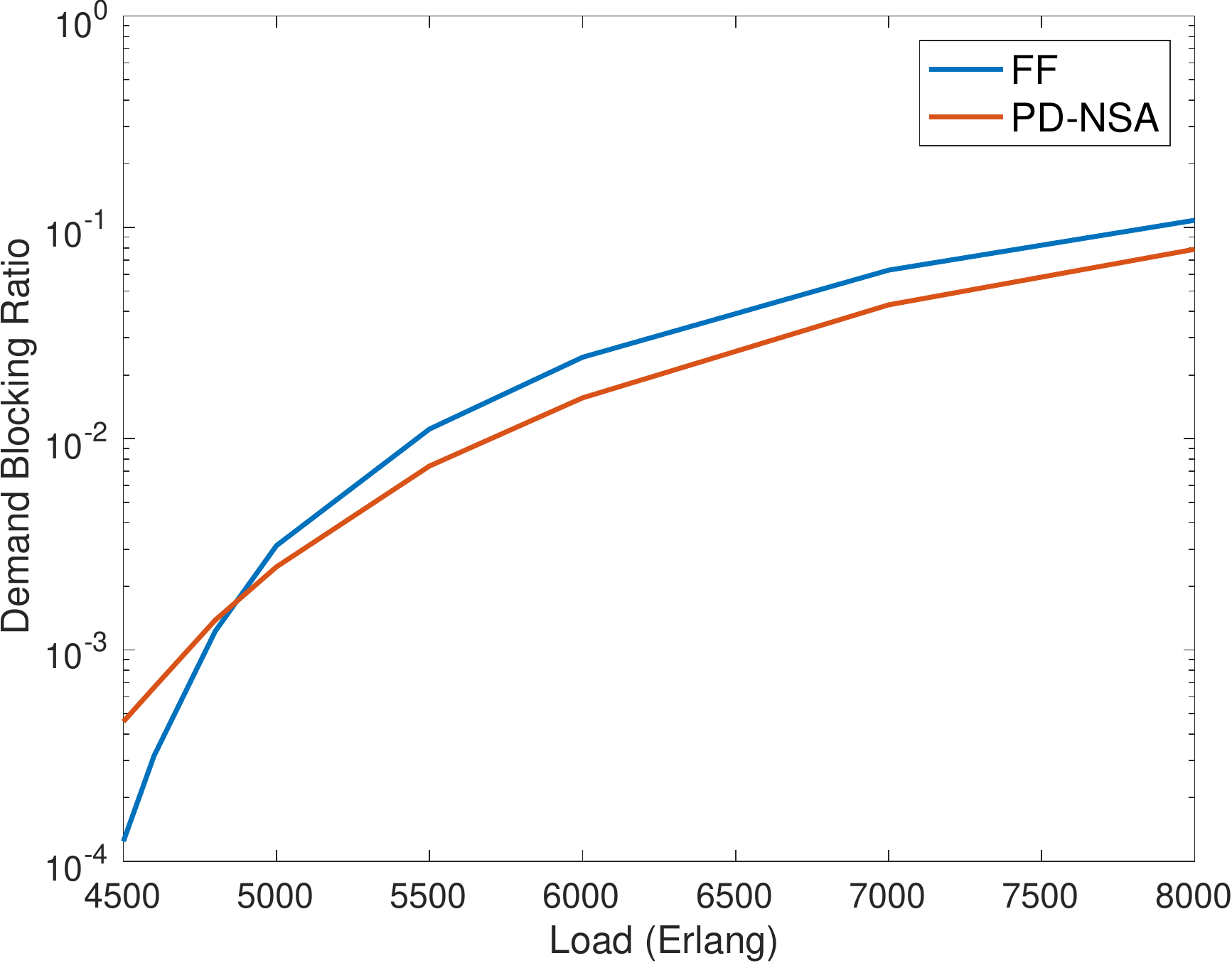}
	\centering
	\caption{\label{fig:msp} Effectiveness of spectrum partition and NSA algorithm.}
\end{figure}

\subsection{Resource sharing among partitions}

\begin{figure}
	\includegraphics[draft = false, scale = 0.45]{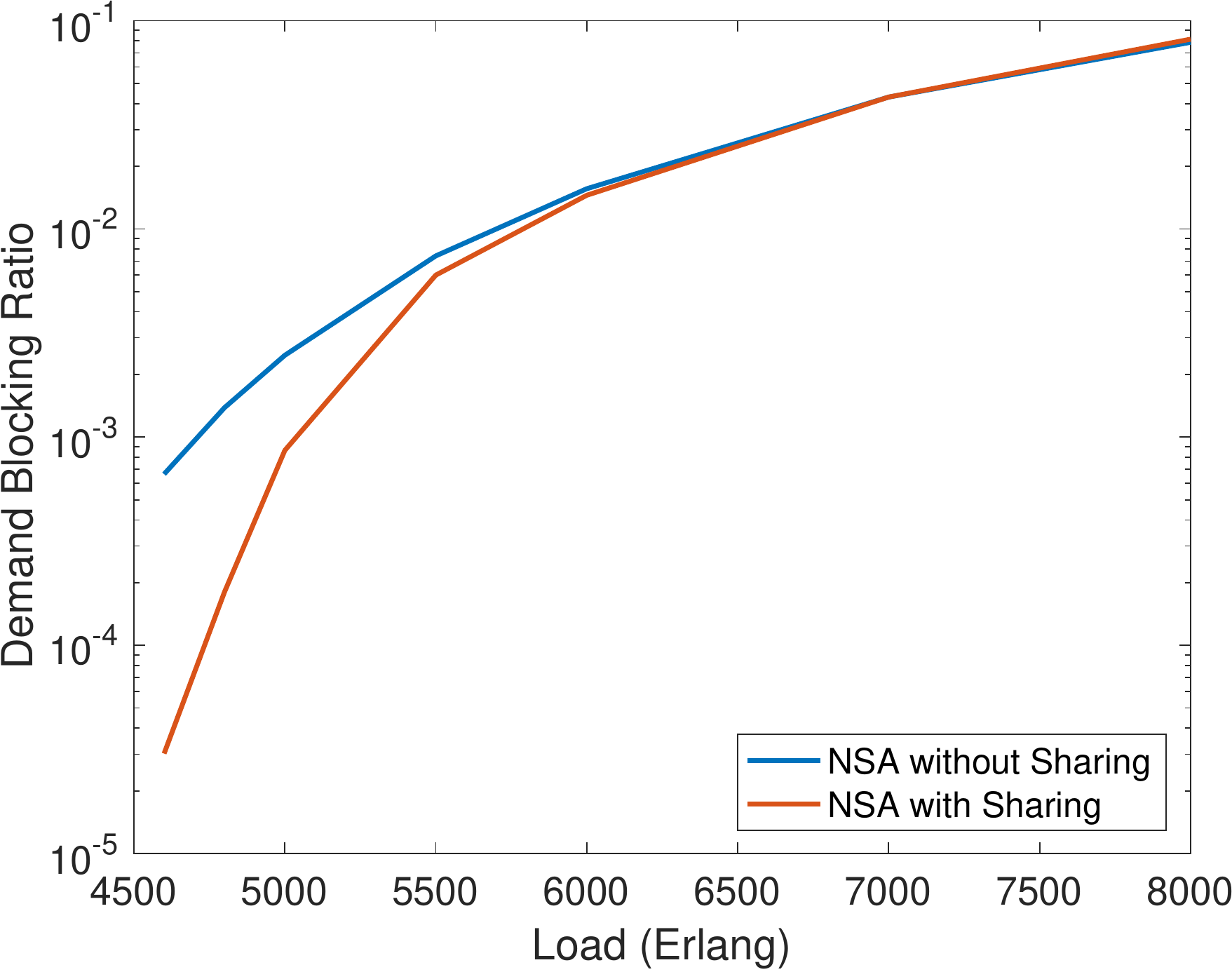}%PRSMSMP.pdf}
	\centering
	\caption{\label{fig:prsmsmp} Effectiveness of resource sharing among partitions.}
\end{figure}

\indent According to the observation of previous results, there are spectrum underutilizations for low loads caused by the dedicated partitioning. So we consider to further improve spectrum efficiency by resource sharing among partitions. For each request, we first check whether there is any available bin in the dedicated spectrum segment this request belongs to. If no available bin exists in this segment, FSs in other partitions will be examined. In addition to the bin states in each segment, the occupancy states for each FS also need to be recorded, which are used to determine the candidate FS sets. We use the same assignment scheme as the previous section, with slight modifications in the calculation of capacity loss. Take a candidate FS set $\chi$ as an example, it may be across multiple bins, and only partial of the first or last bins are in $\chi$. For the bins where all FSs are within $\chi$, the capacity loss calculation is the same as that in the previous section. For the first or last bin, if it has already been marked as occupied (which means some slots in the bin while outside $\chi$ are in use) because of preceding requests, the link capacity loss for this bin is set to 0. The total capacity loss will be the sum of capacity loss for each bin in this candidate FS set. Still, the FS set which causes minimum capacity loss will be assigned to the request. If $\chi$ is the selection, all bins $\chi$ crosses will be marked as occupied. Figure \ref{fig:prsmsmp} shows the spectral efficiency improvement by resource sharing among partitions. We can see a huge improvement in performance for low traffic loads in the network.

\section{Numerical Results}\label{sec.label74} 

\indent We present performance evaluation results for two real network topologies, the NSF network (Figure~\ref{fig:NSF}) and the pan-European network (Figure~\ref{fig:pan}). Each link has a random number of fibers which are uniformly distributed between 5 and 10 fibers. For the elastic optical network, we assume the fiber capacity of 352 frequency slots, with each slot having a bandwidth of 12.5 GHz. 

\indent The traffic demand is a set of dynamically arriving connection requests. Each request represents a connection between a pair of nodes in the network. Connection requests arrive to the network according to a Poisson process. Each request has a mean holding time of 1 (arbitrary time unit), and the arrival rate of traffic requests is varied in order to examine the network performance under varying offered loads. Frequency conversion is not considered in this work. There are three types of demands with 40/100/400 Gbps, requesting 3, 4, 7 frequency slots, respectively~\cite{wu2015comparison}. The distribution is $(\rho_1, \rho_2, \rho_3) = (0.2, 0.5, 0.3)$. Based on this information, the spectrum partitioning can be determined. We use the demand blocking ratio of dynamic traffic requests to indicate the performance in multi-fiber elastic optical networks. For each simulation, the results of 1 million dynamic requests excluding 10000 warm-up requests are recorded.

\indent We use "R" to denote random slot set assignment, "FF" as first fit SA, "FLF" as first last fit SA, "MK" as the scheme proposed in the literature \cite{wang2014spectrum} and "NSA" as our network state aware algorithm. In the FLF SA scheme \cite{chatterjee2015routing}, the whole spectrum is divided into several partitions, requests attempt to use the lowest indexed FSs in the odd number partitions and highest indexed FSs in the even number partitions to create chance for more contiguous FSs. We adapt MK to our multi-fiber link model. In MK, the same partitioning with NSA and first fit slot assignment in each partition is used, while the resource sharing in MK is different, each partition is shared by only higher bandwidth requests. 

\indent We first conduct a comparison among several spectrum assignment schemes with the fixed single shortest path routing. Figure \ref{fig:SSPNSF} shows the results for the uniform traffic pattern in NSF network. The source and destination nodes for each connection request are uniformly randomly selected. As expected, in schemes without dedicated partitioning, R always performs worst and FLF performs better than FF by giving more contiguous aligned FSs. In low load cases, MK has higher demand blocking probability than FF and FLF due to its limited resource sharing and therefore resource underutilization. Then the effect of spectrum management (decreased fragmentation) dominates in higher load cases, which leads to better performance of MK than FF and FLF. Our scheme takes all these aspects into account, and performs spectrum allocation based on the global network states, therefore it shows best performances.

\begin{figure}
	\includegraphics[draft = false, scale = 0.5]{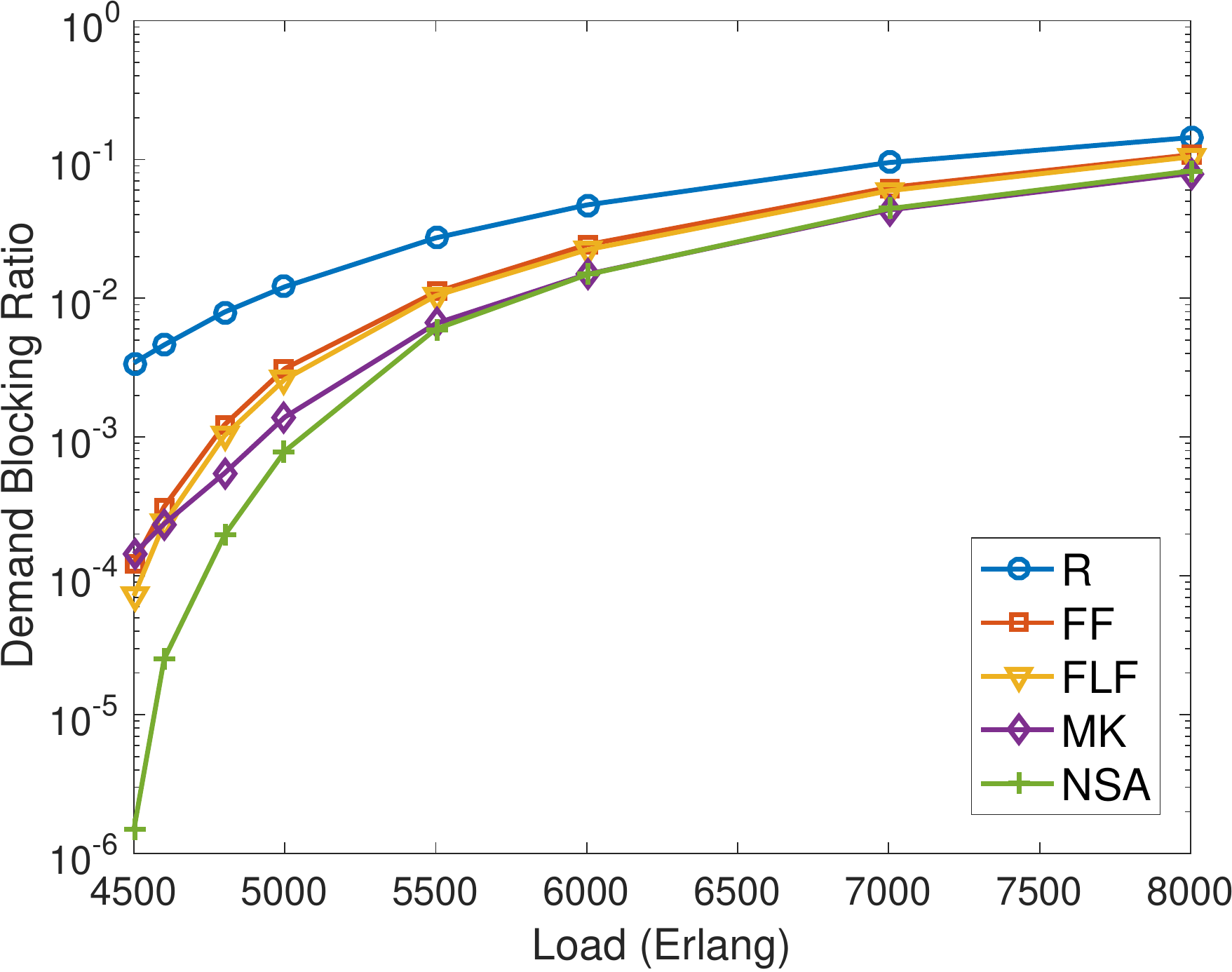}
	\centering
	\caption{\label{fig:SSPNSF} Comparison of spectrum allocation schemes in 14-Node NSF network with uniform traffic pattern.}
\end{figure}

\indent Then we conduct two sets of simulations to evaluate the joint effect of the multi-path selection and the spectrum assignment schemes. According to the traffic pattern, a set of path selection probabilities is precomputed. We apply this multi-path selection jointly with all spectrum assignment schemes in our evaluation. 

\indent We first show a set of simulation results for the uniform traffic pattern. Figure \ref{fig:NSFcomp} and \ref{fig:Eurocomp} show the performance comparison among our scheme and other baselines jointly with multi-path selection for both NSF and Pan-European networks. With the joint scheme, our algorithm still performs best and the same conclusion can be drawn. By comparing results in Figure \ref{fig:SSPNSF} and \ref{fig:NSFcomp}, we can see that by jointly applying multi-path selection with each spectrum assignment scheme, the demand blocking ratio greatly improves and more load can be accommodated in the network. 

\begin{figure}
	\includegraphics[draft = false, scale = 0.5]{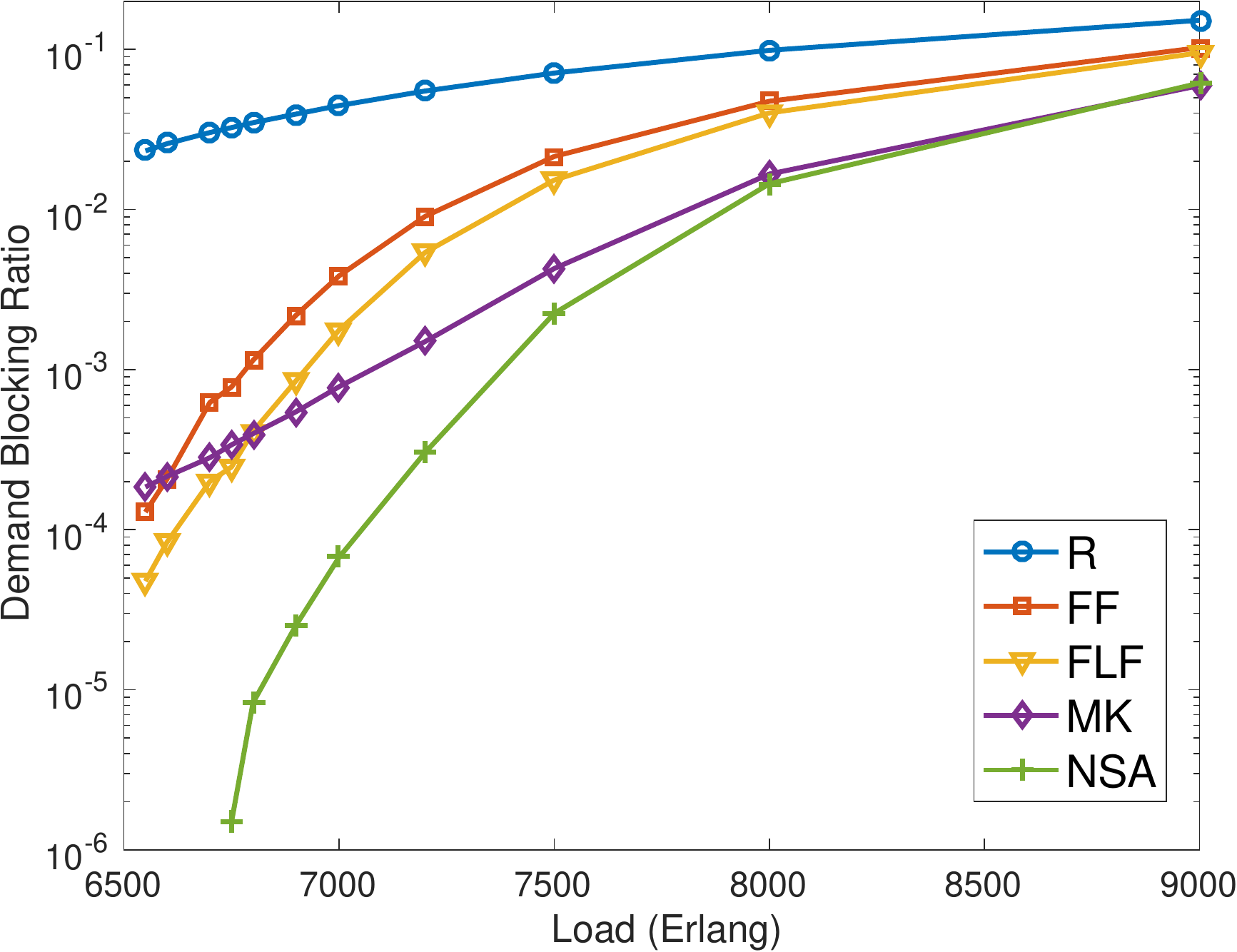}
	\centering
	\caption{\label{fig:NSFcomp} Joint comparison in 14-Node NSF network with uniform traffic pattern.}
\end{figure}

\begin{figure}
	\includegraphics[draft = false, scale = 0.5]{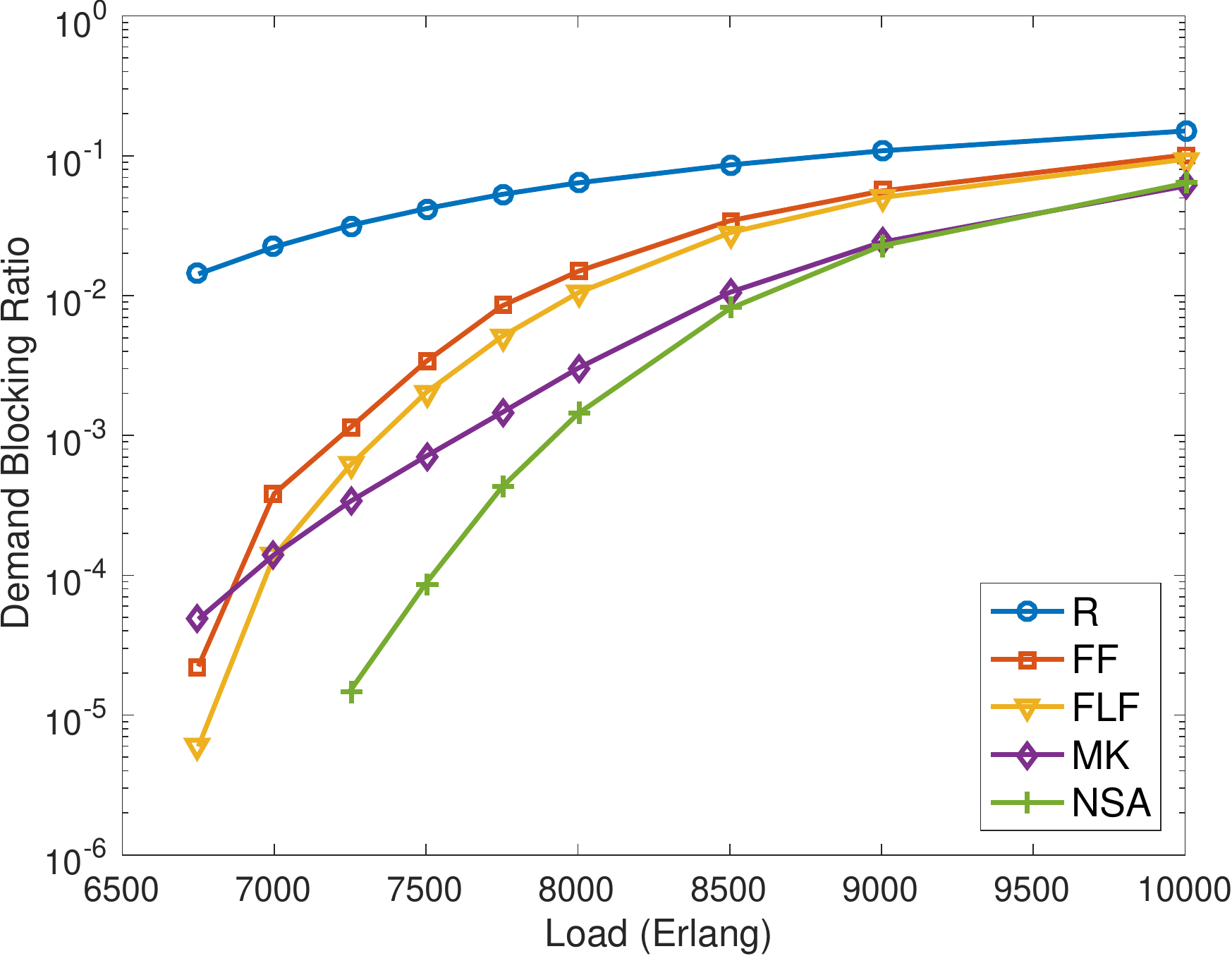}
	\centering
	\caption{\label{fig:Eurocomp} Joint comparison in 28-Node Pan-European network with uniform traffic pattern.}
\end{figure}

\indent The other set of simulations is conducted for a non-uniform traffic pattern, where the probability that each node is selected as source or destination $u_v, v\in\mathcal{V}$ can be different from each other. $\sum_{v\in\mathcal{V}}u_v = 1.0$. We assume nodes with higher connectivities have larger chance to send and receive traffic, and the probabilities are in proportional to the node degrees. The comparison results for both NSF and Pan-European network are shown in Figure \ref{fig:NSFcompnon} and \ref{fig:Eurocompnon} respectively. We can see similar trends to the previous results. This further validates the effectiveness of our algorithm.

\begin{figure}
	\includegraphics[draft = false, scale = 0.5]{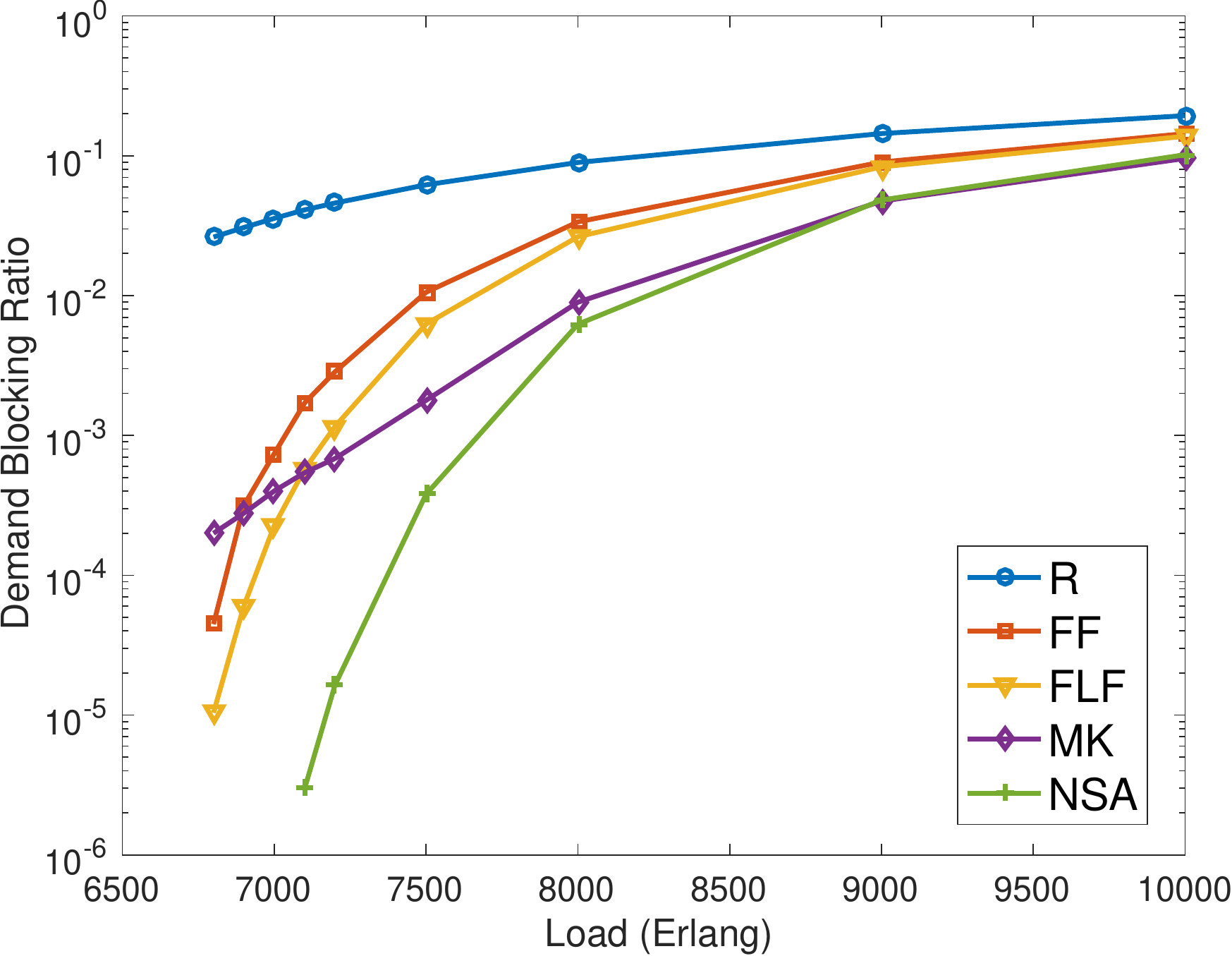}
	\centering
	\caption{\label{fig:NSFcompnon} Joint comparison in 14-Node NSF network with non-uniform traffic pattern.}
\end{figure}

\begin{figure}
	\includegraphics[draft = false, scale = 0.5]{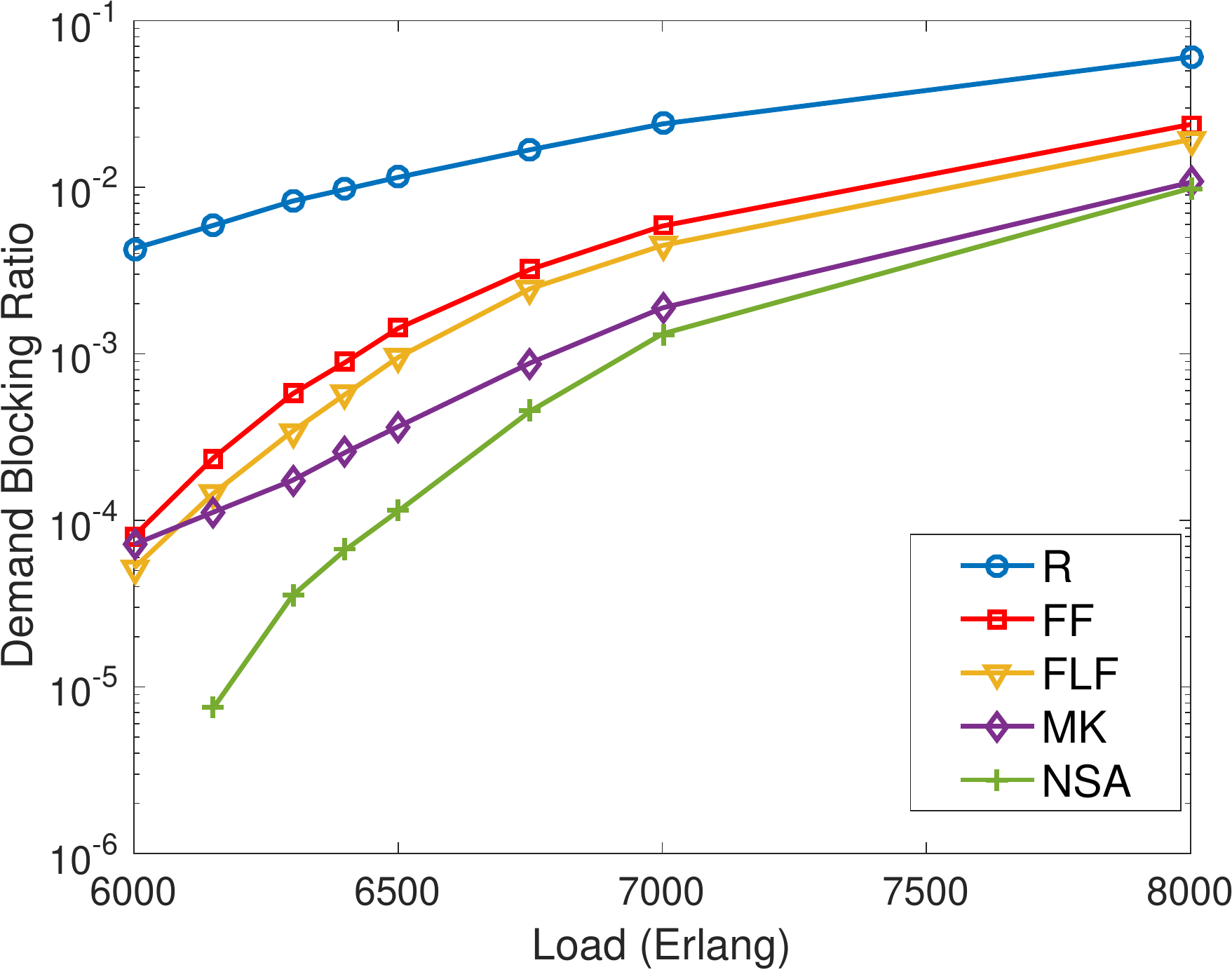}
	\centering
	\caption{\label{fig:Eurocompnon} Joint comparison in 28-Node Pan-European network with non-uniform traffic pattern.}
\end{figure}

\indent We also conduct sensitivity tests to see how well our scheme performs when there is variation in the traffic pattern and demand size distribution. Figure \ref{fig:sensPT} shows the test results regarding to demand size distribution in the NSF network with uniform traffic pattern. The distribution of three demand types is $(\rho_1, \rho_2, \rho_3) = (0.2, 0.5, 0.3)$ originally, and spectrum partitioning is calculated based on this distribution. If there is a small change $\eta$ in the distribution ($\rho_1 + \eta, \rho_2 + \eta, \rho_3 - 2\eta$), we compare the performance of applying the original spectrum partition to the changed traffic and that of applying adjusted partition calculated based on exact distribution. The results are represented by "Exact" (with adjustment according to exact information) and "Proposed" (without exact information). The two sets ($\eta = -0.05$ and $\eta = 0.05$) both show little performance degradation. 

\begin{figure}
	\includegraphics[draft = false, scale = 0.5]{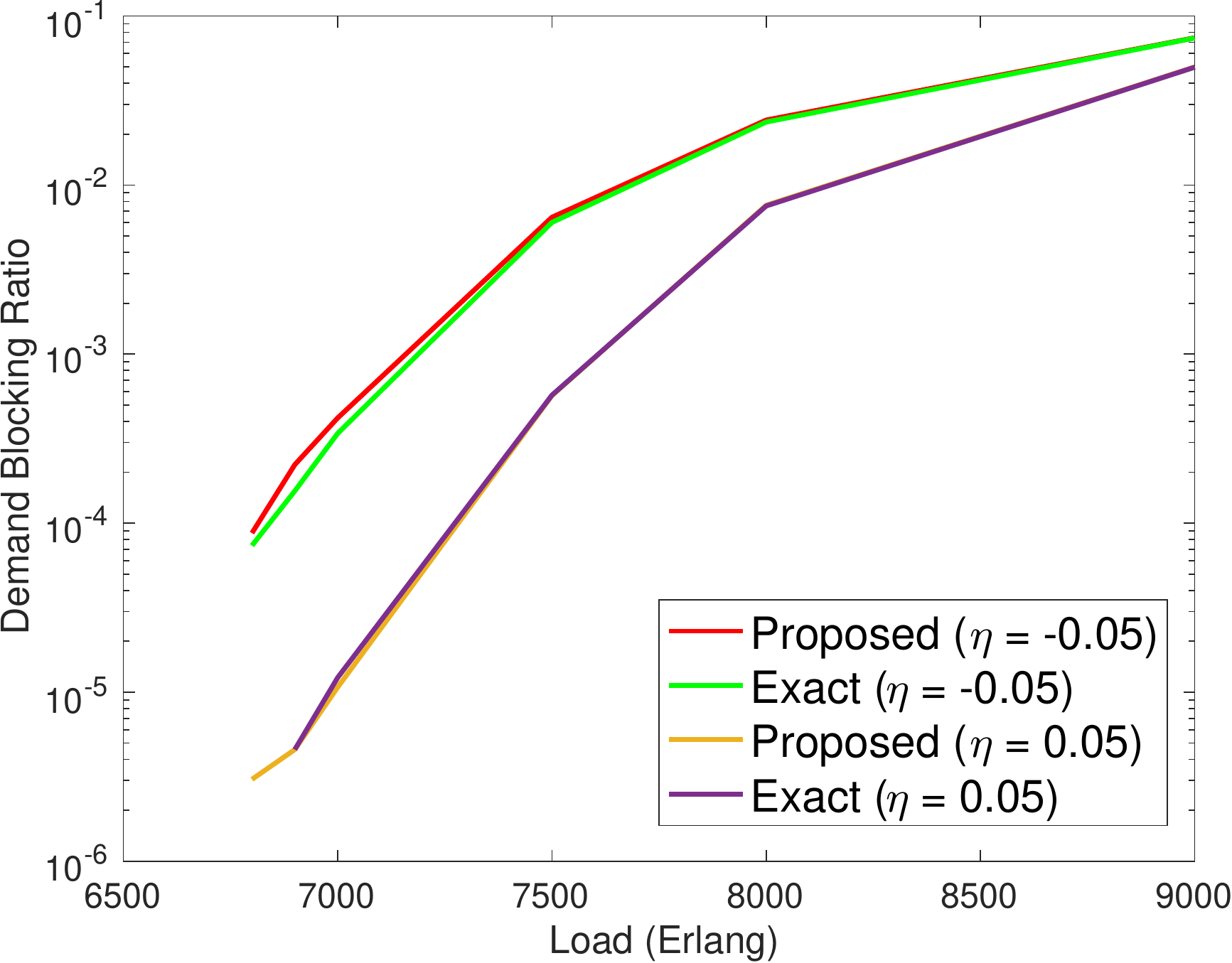}
	\centering
	\caption{\label{fig:sensPT} Sensitivity test for demand size distribution in NSF network.}
\end{figure}

\indent Figure \ref{fig:sensNU} is the test result regarding to traffic pattern in the NSF network. The demand size distribution does not change. The original probabilities that a node is selected as source or destination are in proportional to the node degrees and the multi-path selection has been calculated according to this non-uniform traffic pattern. If there is a small change $\eta$ in the probabilities ($u_1 + \eta, u_2 + \eta, \cdots, u_N - (N - 1)\eta$),  we compare the performance of applying the original path selection to the changed traffic and that of applying adjusted path selection calculated based on exact traffic pattern. The two sets ($\eta = -0.01$ and $\eta = 0.01$) both show little performance variation.

\begin{figure}
	\includegraphics[draft = false, scale = 0.5]{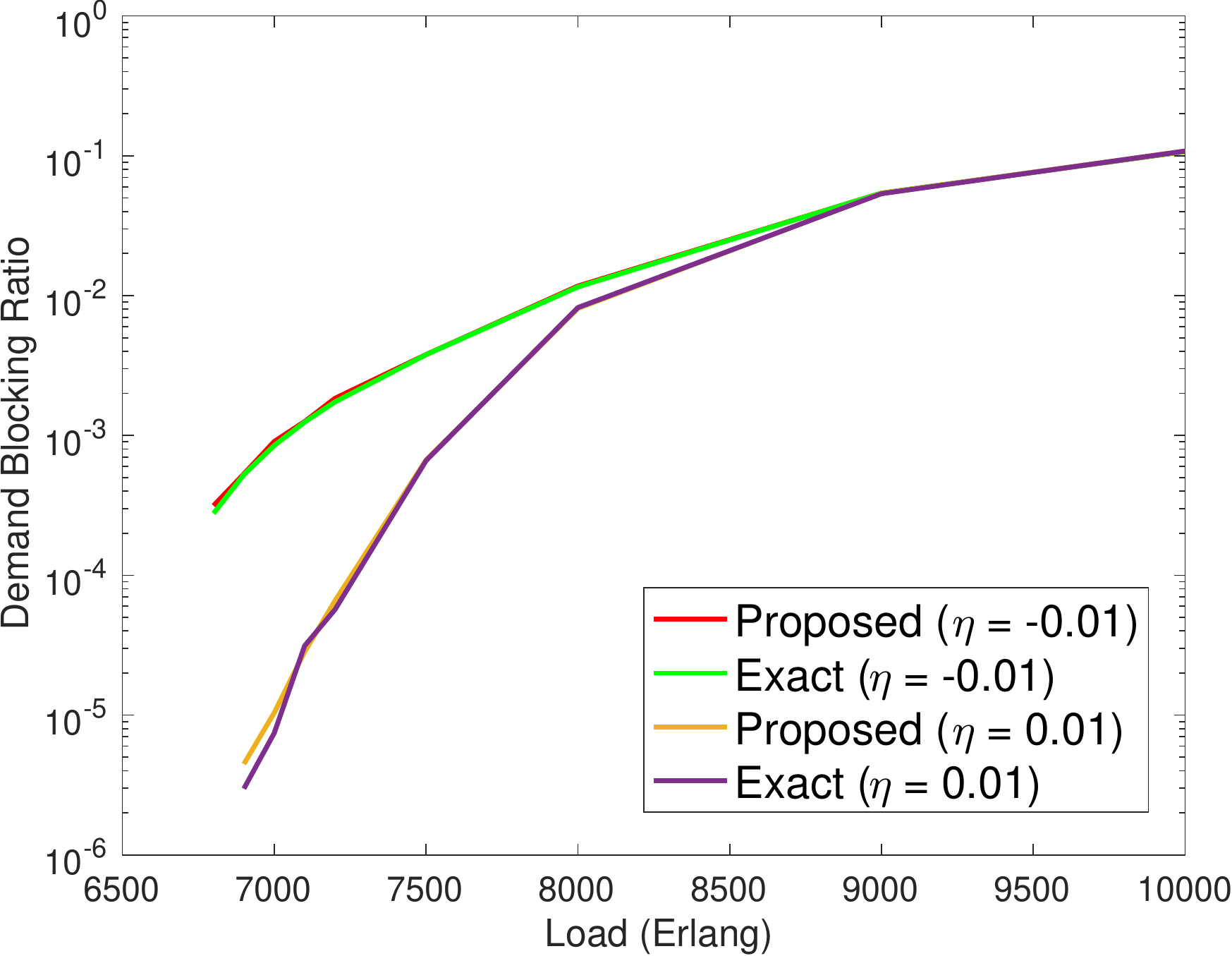}
	\centering
	\caption{\label{fig:sensNU} Sensitivity test for traffic pattern in NSF network.}
\end{figure}

\section{Conclusions}\label{sec.label75}
\indent We study the Routing and Spectrum Assignment problem for dynamic heterogeneous traffic requests in elastic optical networks with multiple fibers per link \cite{wu2018dynamic}. The routing problem is solved by a multi-path selection scheme, which is formulated mathematically with parameters given by the topology information and traffic pattern. Once the selection probabilities of each candidate path between a source and a destination are determined, any arriving traffic request for that source-destination pair will choose one path according to the probabilities. Then the slot usage along the path is examined to make the spectrum assignment. To avoid the fragmentation caused by the heterogeneity of demand sizes, we partition the whole spectrum into segments dedicated to different demand sizes. A spectrum assignment algorithm based on the network state within a specific partition is proposed first, and further improved by considering the sharing among all partitions. We validate the effectiveness of each scheme for subproblems. Multi-path selection, appropriate spectrum partitioning, and network state aware spectrum assignment all can improve the spectrum efficiency. The results of combining these schemes show much better performance than the baselines.

\chapter{Conclusions and Future Directions}
\label{chap_8}

\section{Conclusions}\label{sec.label81}

This dissertation investigated several problems of resource allocation in multi-granular optical networks. The research objective was to increase the resource utilization efficiency under various constraints. 

In Chapter \ref{chap_2}, we investigated the co-scheduling problem in elastic optical networks for both static and dynamic multi-task jobs. We proposed an Integer Linear Programming model and two heuristics to jointly allocate computational and networking resources to jobs. The simulation results show that our proposed Children-Aware algorithm outperformed the baseline algorithm in terms of makespan for static jobs and blocking probability for dynamic jobs.  

In Chapters \ref{chap_3} and \ref{chap_4}, we studied resource allocation problems in multi-granular WDM networks. A non-uniform wavebanding framework from the entire network's view was proposed to minimize the total waveband required while satisfying a set of demands so as to reduce hardware requirements. The simulation results suggested that wavebanding is an effective strategy to reduce switching elements. Then we investigated the OXC architectures for WDM networks with multiple fibers per link to accommodate increasing traffic demands. Two architectures were evaluated and compared in terms of hardware requirements, power consumption and costs. Heuristics for resource (i.e., fiber and wavelength) assignment were presented and analytical models were developed to predict the blocking performance.

In Chapter \ref{chap_5} and \ref{chap_6}, we studied the routing, fiber, band and spectrum allocation problem in multi-granular EONs with OXC nodes which can accommodate non-contiguous and non-uniform wavebands and multi-fiber links. An auxiliary layered-graph framework with pluggable cost functions was developed to solve the joint RFBSA problem. Cost functions to minimize the maximum spectral usage were proposed for a set of traffic requests. By comparison with other baseline heuristics, our RFBSA framework could achieve less spectral usage. The RFBSA framework was further improved in Chapter \ref{chap_6} to reduce time complexity. To achieve good network performance, while saving considerable hardware costs, a joint wavebanding node placement and RFBSA problem given the network budget (in terms of total number of available WSSs) was investigated. Heuristics based on the above framework was proposed to solve the problem.

In Chapter \ref{chap_7}, we proposed schemes for dynamic Routing and Spectrum Assignment problem in elastic optical networks. We investigate both path selection and spectrum management for heterogeneous bandwidth requests in multi-fiber elastic optical networks (EONs) to improve the spectrum efficiency. An ILP model based on topology and traffic pattern information was proposed to precomputed the selection probabilities of candidate paths for each source-destination node pair. A dedicated spectrum partition scheme and a spectrum assignment algorithm based on network states with resource sharing among partitions were proposed. The simulation results showed each scheme performs well in improving the spectrum efficiency. The joint allocation scheme significantly outperformed other heuristics in the literature.  

\section{Future Directions}\label{sec.label82}

Space Division Multiplexing (SDM) is expected to be the next frontier in optical communication by increasing capacity through multiplicity of space channels, such as utilizing multi-core fiber (MCF) or multi-mode fiber (MMF). One direction of my future research is to extend my previous work to resource allocation problem in space division multiplexing elastic optical networks by taking the physical impairment and crosstalk into account. The other problem is the survivable virtual infrastructure embedding over software-defined SDM optical networks. The problem is more complicated than the general virtual network embedding problem given the additional constraints related to the underlying optical networks. 
	
	\bibliographystyle{abbrv}
	\bibliography{global}
	
\end{document}